\def\be{\begin{equation}}
\def\ee{\end{equation}}
\def\bea{\begin{eqnarray}}
\def\eea{\end{eqnarray}}
\def\bma{\begin{mathletters}}
\def\ema{\end{mathletters}}
\def\q0{\underline{0}}
\def\sB{{\sf B}}
\def\CC{\mathcal{C}}
\def\sC{{\sf C}}
\def\DD{\mathcal{D}}
\def\id{{\mathbb I}}
\def\E{{\cal E}}
\def\G{{\cal G}}
\def\V{{\cal V}}
\def\sx{{\sigma_1}}
\def\sy{{\sigma_2}}
\def\sz{{\sigma_3}}
\def\R{\mathbb{R}}
\def\one{\leavevmode\hbox{\small1\normalsize\kern-.33em1}}
\def\bra#1{\langle#1|} \def\ket#1{|#1\rangle}
\def\braket#1#2{\langle#1|#2\rangle}
\def\id{{\mathbb I}}
\newtheorem{theorem}{Theorem}
\newtheorem{remark}[theorem]{Remark}
\newtheorem{lemma}[theorem]{Lemma}
\newtheorem{prop}[theorem]{Proposition}
\newtheorem{defin}[theorem]{Definition}
\begin{document}

\title{Tight and self-testing multipartite quantum Bell inequalities\\ from the renormalization group}

\author{Paolo Abiuso}
\affiliation{Institute for Quantum Optics and Quantum Information (IQOQI) Vienna\\ Austrian Academy of Sciences, Boltzmanngasse 3, Wien 1090, Austria}
\author{Julian Fischer} 
\affiliation{Institute for Theoretical Physics, Johannes Kepler University Linz\\ Altenberger Straße 69, Linz 4040, Austria}
\affiliation{Department of Engineering and Computer Sciences, Hamburg University of Applied Sciences\\ Berliner Tor 7, Hamburg 20099, Germany}
\author{Miguel Navascu\'es} 
\affiliation{Institute for Quantum Optics and Quantum Information (IQOQI) Vienna\\ Austrian Academy of Sciences, Boltzmanngasse 3, Wien 1090, Austria}

\begin{abstract}
In past work~\cite{connectors}, the concept of connectors was introduced: directed tensors with the property that any contraction thereof defines a multipartite quantum Bell inequality, i.e., a linear restriction on measurement probabilities that holds in any multipartite quantum experiment. In this paper we propose the notion of \emph{tight connectors}, which, if contracted according to some simple rules, result in tight quantum Bell inequalities. By construction, the new inequalities are saturated by tensor network states, whose structure mimics the corresponding network of connectors. Some tight connectors are furthermore ``fully self-testing'', which implies that the quantum Bell inequalities they generate can only be maximized with such a tensor network state and specific measurement operators (modulo local isometries). We provide large analytic families of tight, fully self-testing connectors that generate $N$-partite quantum Bell inequalities of correlator form for which the ratio between the maximum quantum and classical values increases exponentially with $N$.

\end{abstract}

\date{\today}

\maketitle

\onecolumngrid

\section{Introduction}
The correlations observed by distant parties measuring a joint quantum state do not, in general, admit a classical description. This is the gist of Bell’s theorem~\cite{Bell}, one of the deepest discoveries in quantum physics, which ten years ago led to several experimental disproofs of local realism~\cite{loophole_free1, loophole_free2, loophole_free3}. The quantum analog of Bell’s theorem is the realization by Tsirelson that quantum correlations are similarly limited~\cite{tsirelson_bound}: in fact, one can write over the paper physical theories compatible with special relativity, but capable of generating stronger-than-quantum correlations~\cite{Tsirelson85, PR_box}. Like their classical counterparts, limits on quantum correlations can be expressed in the form of quantum Bell inequalities, the statement that a certain linear function of the measurement statistics of a Bell experiment is bounded by some value if the experimental setup is subject to the laws of quantum mechanics.

Quantum Bell inequalities provide falsifiability criteria for quantum theory. In addition, an experimental Bell value close to the quantum limit can sometimes reveal important information regarding the physical system generating the correlations, such as the degree of entanglement of the underlying quantum state~\cite{ent_neg} or its local dimension~\cite{Brunner_2008}. In this regard, some quantum Bell inequalities are self-testing~\cite{self_testing_review}, meaning that their maximal value can only be achieved in quantum systems by measuring a specific multipartite quantum state with specific local measurement operators. 

There exist few general tools to construct quantum Bell inequalities~\cite{tsirelson_XOR,Wehner_2006,NPA1, NPA2}, and the computational complexity of them all is exponential on the number of parties. To tackle the problem of generating many-body quantum Bell inequalities, one of us (and collaborators) introduced the notion of \emph{connectors} \cite{connectors}. In the context of quantum nonlocality, connectors are directed tensors representing coarse-graining transformations within a physical theory whose states are vectors of quantum measurement probabilities. The acyclic contraction of a number of connectors thus defines a renormalization flow within the theory, and the resulting tensor constitutes a many-body quantum Bell inequality. Unfortunately, while connector theory leads to sound quantum Bell inequalities for arbitrarily many parties, those can sometimes be very loose~\cite{connectors}.

In this paper, we introduce the concept of \emph{tight connectors}, which have the property that any contraction thereof according to very simple rules defines a tight quantum Bell inequality. More specifically, each of the legs or indices of a tight connector is associated to a system of local measurements, and only contractions between legs with the same measurements (what we call congruent contractions) are allowed. Using existing quantum Bell inequalities, we construct large analytic families of tight connectors. For some such families, we show that their contraction results in quantum Bell inequalities of correlator form for which the quotient between their classical and quantum values decreases exponentially with the system size.

In addition, all the connectors we find are ``fully self-testing". This property implies that their congruent contraction defines self-testing quantum Bell inequalities~\cite{self_testing_review}. The self-tested measurement operators correspond to the measurements associated to the uncontracted legs and the self-tested states are tensor network states~\cite{tensor_network_states} with the same network topology as the connector tensor network. Our analytic families of tight connectors therefore allow us to derive analytic two input/two output quantum Bell inequalities that self-test arbitrary multi-qubit tree tensor network states of bond dimension 2, as well as arbitrary lists of pairs of local qubit projectors.

The structure of this paper is as follows: after some preliminary background, we introduce in Sec.~\ref{sec:tight_CI} the notion of tight connectors and show that their contraction leads to tight quantum Bell inequalities, maximally violated by tensor network states. For clarity, in Sec.~\ref{sec:firstcomplexes} we use the famous Tsirelson’s bound~\cite{tsirelson_bound} to construct an instance of a tight connector. Next, we argue that the same construction that led to this “Tsirelson connector” allows generating tight connectors from quantum Bell inequalities other than Tsirelson’s \cite{acin2012randomness,bamps2015sum,mckague2014self,baccari2020scalable,wooltorton2023device}, resulting in large analytic families of tight connectors. In Sec.~\ref{sec:XOR_expo} we prove that, for one of these families, the congruent contraction of connectors results in Bell functionals of the XOR type \cite{XOR_games} for which the quotient between the maximum quantum and classical values increases exponentially with the number of parties. We next define (Sec.~\ref{sec:self-testing}) fully self-testing quantum Bell inequalities, a strengthening of the familiar notion of self-testing. We provide an easily verifiable criterion that allows promoting self-testing to fully self-testing results, which we use to prove that all tight connectors previously identified are fully self-testing. In Sec.~\ref{sec:examples} we explore the properties of some of the new quantum Bell inequalities. Finally, we present our conclusions.

\section{Notation}
Some definitions follow. A $q$-partite measurement system $A$ is a structure of the form $\{A^{(k)}_{a|x}:a,x\}_{k=1}^q$, where $A^{(k)}_{a|x}\geq 0$ refers to the POVM element of party $k$, with measurement setting $x$ and outcome $a$. 
A tensor polynomial of variables $\{A^{(k)}_{a|x}:a,x\}_{k=1}^q$ is a sum of products of expressions of the form
$
\id^{(1)}\otimes...\otimes \id^{(k-1)}\otimes A^{(k)}_{a|y}\otimes\id^{(k+1)}...\otimes\id^{(q)}\;.
$
The degree of the tensor polynomial will be given by the maximum number of products within all factors $k$. E.g.: the degree of the tensor polynomial $A^{(1)}_{0|1}\otimes A^{(2)}_{1|2}A^{(2)}_{2|3}+5\id^{(1)}\otimes A^{(2)}_{1|2}$ is $2$.

We will term quantum (classical) \emph{behaviours} those probability vectors $P(\vec{a}|\vec{x})$ that can be achieved in specific quantum (classical) realizations of a Bell experiment. For example, a bipartite no-signalling quantum behaviour is a vector of probabilities of the form
$P(a_1,a_2|x_1,x_2) \allowbreak  =\Tr{(A^{(1)}_{a_1|x_1}\otimes A^{(2)}_{a_2|x_2}) \rho^{(12)}}$, where $\rho^{(12)}$ is a bipartite quantum state. Each number $P(a_1,a_2|x_1,x_2)$ represents the probability that parties $1$ and $2$ respectively observe the results $a_1,a_2$ when they respectively conduct experiments $x_1,x_2$ in separate labs. In the following, we regard $q$-partite behaviors $P(a_1,...,a_q|x_1,...,x_q)$ as tensors with $q$ indices or legs. The set of possible values that each leg $k\in\{1,...,q\}$ can take corresponds to the set $\{(a_k|x_k)\}$ of possible output-input pairs of party $k$.

In the context of quantum Bell inequalities, a $q\to p$ connector ${\sf C}$~\cite{connectors} is a tensor of the form ${\sf C}_{b_1,...,b_p|y_1,...,y_p}^{a_1,...,a_q|x_1,...,x_q}$ which, contracted with the first $q$ legs of any $r$-partite quantum behavior tensor $P(\vec{a},\vec{c}|\vec{x},\vec{z})$ (where  $\vec{c},\vec{z}$ are $(r-q)$-dimensional vectors of outcomes and settings), generates a (possibly non-normalized) $p+r-q$ quantum behavior
\begin{equation}
P'(\vec{b},\vec{c}|\vec{y},\vec{z})=\sum_{\vec{a},\vec{x}}{\sf C}_{\vec{b}|\vec{y}}^{\vec{a}|\vec{x}}P(\vec{a},\vec{c}|\vec{x},\vec{z}).
\label{eq:conn_P_action}
\end{equation}
In this paper, we will only deal with $q\to 1$ connectors. Given one such connector ${\sf C}$ and a $q$-partite measurement system $A$, we denote by ${\sf C}[A]$ the set of operators (cf. Fig.~\ref{fig:intro_connector}):
\begin{equation}
{\sf C}[A]:=\bigg\{\sum_{\vec{a},\vec{x}}{\sf C}_{b|y}^{a_1,...,a_q|x_1,...,x_q}A^{(1)}_{a_1|x_1}\otimes...\otimes A^{(q)}_{a_q|x_q}:b,y\bigg\}.
\label{eq:conn_action}
\end{equation}
We will distinguish with brackets the action ${\sf C}[A]$ of a connector on a set of operators, from the dependence ${\sf C}(\CC)$ of a connector on a corresponding complex $\CC$ (cf. below).
The same notation will be used for Bell functionals: given a $q$-partite functional ${\sf B}^{\vec{a}|\vec{x}}$, the expression ${\sf B}[{A}]$ will denote the corresponding Bell operator.
In the following, we only consider normalized connectors, i.e., those connectors ${\sf C}$ such that $\sum_b{\sf C}_{b|y}[{A}]=\id,\forall y$, for all measurement systems ${A}$ (in the general case, $\sum_b{\sf C}_{b|y}[{A}]=\Lambda$, with $0\leq\Lambda\leq \id$ \cite{connectors}).
This implies that, if $A$ is a $q$-partite system of POVMs, then ${\sf C}[A]$ is a (normalized) $1$-partite system of measurement operators.

For reference, these main choices of notation are summarized in Table~\ref{tab:conventions}.

\begin{table}[]
    \centering
    \begin{tabular}{c|c}
    Description & Notation \\ \hline
        (POVM) operator on party $j$ with output $a$ and input $x$ & $A^{(j)}_{a|x}$  \\
         Correlator basis for dichotomic measurements & $K^{(j)}_x:=A^{(j)}_{0|x}-A^{(j)}_{1|x}$\\
         Projective measurement & $M_{b|y}$\\
         Coisometry $V$ & $VV^\dagger=\id$\\
         Isometry $U$ & $U^\dagger U=\id$\\
        Connector ${\sf C}$ action on measurement operators & ${\sf {C}}[A]:=$Eq.\eqref{eq:conn_action} \\
         Bell functional ${\sf B}$ action on measurement operators, or correlators & ${\sf {B}}[A]$,  ${\sf {B}}[K]$\\
        Connector complex & $\CC:=\{{\sf C},\bar{A},V\}$\\
        Connector complex labelling & e.g. ${\sf C}(\CC)$, $V(\CC)$
    \end{tabular}
    \caption{Main notation conventions used in this paper.}
    \label{tab:conventions}
\end{table}

\begin{figure}[H]
    \centering
    \includegraphics[width=0.7\linewidth]{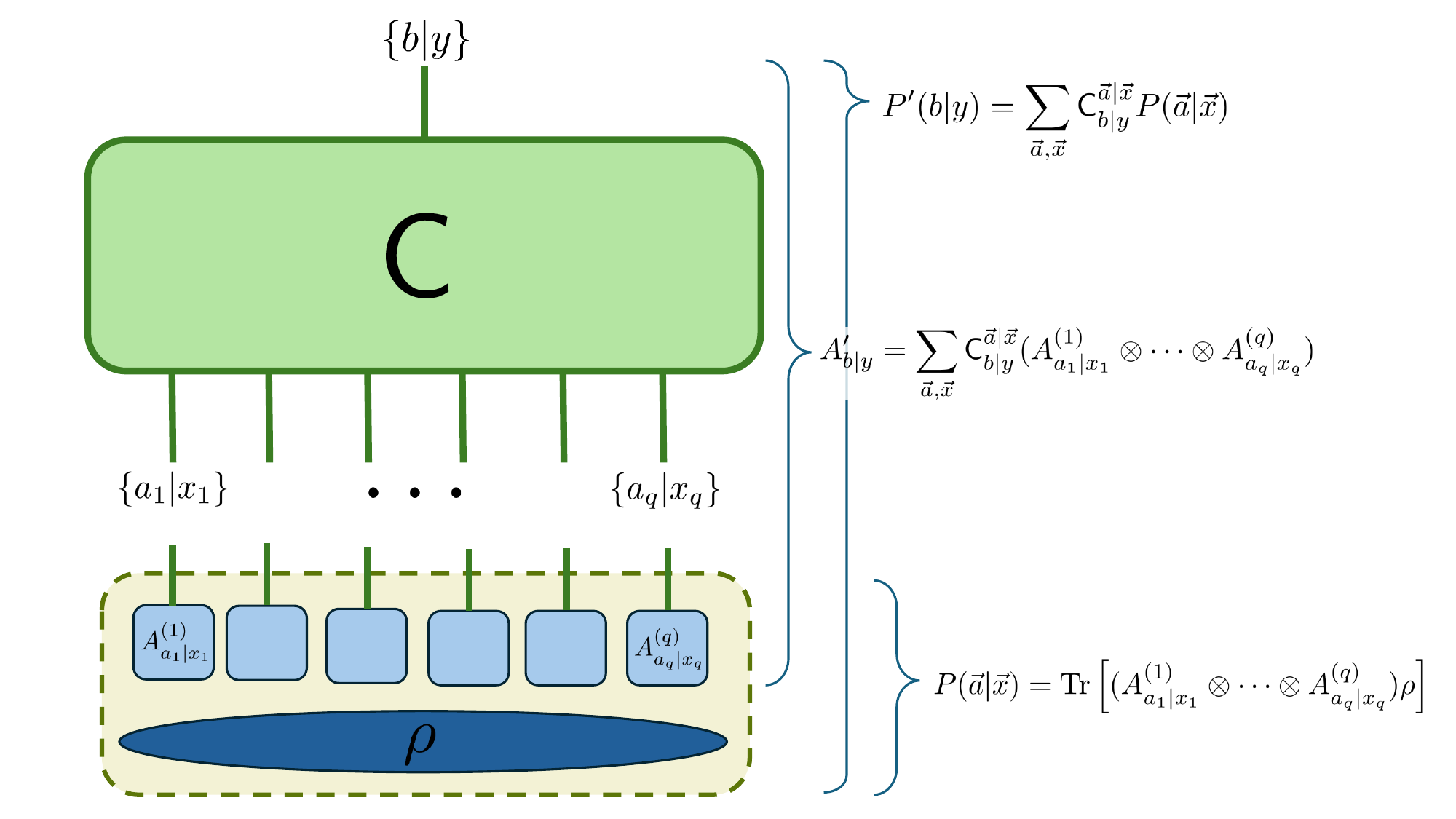}
    \caption{A connector ${\sf C}^{\vec{a}|\vec{x}}_{b|y}$ can be seen as a linear transformation between probabilities $P\rightarrow P'$ (via the contraction of the corresponding indices, cf. eq.~\eqref{eq:conn_P_action})  that preserves a set of behaviours, in our case, quantum behaviours allowed by the Born rule. The action of the connector can also be interpreted as a transformation on the measurement operators $A\rightarrow A'$ (cf. eq.~\eqref{eq:conn_action}), inducing a renormalization flow when repeated in a larger acyclic network.}
    \label{fig:intro_connector}
\end{figure}

\section{Tight connectors, tight inequalities}
\label{sec:tight_CI}
Consider an $q\to 1$ quantum connector ${\sf C}$ with the property that, for some Hilbert spaces $H_1,...,H_q$ and some measurement operators $\{\bar{A}^{(k)}_{a|x}:a,x\}\subset B(H_k)$, there exists a co-isometry $V:\bigotimes_k H_k\to H$ such that the operators $\bar{M}_{b|y}\in B(H)$, defined by
\begin{equation}
\bar{M}_{b|y}:=V{\sf C}_{b|y}[\bar{A}]V^\dagger,
\label{proj_meas}
\end{equation}
are projectors satisfying $\sum_b \bar{M}_{b|y}=\id_H$, for all $y$. Then we say that ${\sf C}$ is a \emph{tight connector} \footnote{Allowing $\bar{M}$ to be a general POVM would still allow the construction of tight quantum Bell inequalities, but would not allow self-testing in general.}. The name stems from the fact that, for all $\bar{M}_{b|y}\not=0,\id$, the quantum Bell inequalities
\begin{equation}
0\leq {\sf C}_{b|y}[A]\leq 1   
\label{tight_ineq}
\end{equation}
are tight. In particular, the value $1$ ($0$) can be achieved with the operators $\bar{A}$ and any state of the form $V^\dagger\ket{\phi}$, with $\bar{M}_{b|y}\ket{\phi}=\ket{\phi}$ ($\bar{M}_{b|y}\ket{\phi}=0$).

\begin{defin}[Connector complex]
We define a \emph{connector complex} as a tuple $\mathcal{C}:=({\sf C},\bar{A},V)$ (where ${\sf C}(\mathcal{C})$ is a connector; $\bar{A}$, a system of measurement operators; and $V$, a coisometry), such that eq.~\eqref{proj_meas} is satisfied for some projective measurement system $\bar{M}$ (implying that ${\sf C}$ is tight). In the following, we use ${\sf C}(\mathcal{C})$, $\bar{A}(\mathcal{C})$, $V(\mathcal{C})$ to denote the elements of the tuple $\mathcal{C}$. The symbols $\bar{M}_{b|y}(\mathcal{C})$, $H_k(\mathcal{C})$, $H(\CC)$ will similarly have obvious meanings. 
\end{defin}
Notice that, if $\CC=({\sf C},\bar{A},V)$ is a connector complex, then so is $\CC'=({\sf C},(U^{(k)}\bar{A}^{(k)} U^{(k)\dagger})_k,WV\bigotimes_k U^{(k)\dagger})$ for any unitaries $U^{(1)},...,U^{(q)}, W$. Complexes $\CC$ and $\CC'$ will be said unitarily equivalent.

Call $I(\CC)$ the space spanned by vectors of the form $V(\CC)^\dagger \ket{\beta}\in \bigotimes_k H_k$ (where $\ket{\beta}\in H$). Note that, since ${\sf C}_{b|y}[\bar{A}]$ is an effect and its submatrix $\bar{M}_{b|y}$ is a projector, it follows that ${\sf C}_{b|y}(\bar{A})I(\CC)\subset I(\CC)$, see Appendix \ref{app:invariant} for a proof. Thus we will call $I(\CC)$ the \emph{invariant subspace} of $\CC$.

Consider two connector complexes $\mathcal{C}_1$, $\mathcal{C}_2$, and suppose that we contract the output leg of ${\sf C}(\mathcal{C}_1)$ with the $k^{th}$ input legs of ${\sf C}(\mathcal{C}_2)$, as in Fig.~\ref{fig:basic_contraction}. We will call this contraction \emph{congruent} if $\bar{M}(\mathcal{C}_1)=\bar{A}^{(k)}(\mathcal{C}_2)$ (this means in particular that $\bar{A}^{(k)}(\mathcal{C}_2)$ is a system of projective measurements). Any congruent contraction of the connectors of two connector complexes results in a tight connector. 

\begin{figure}[H]
    \centering
    \includegraphics[width=0.75\linewidth]{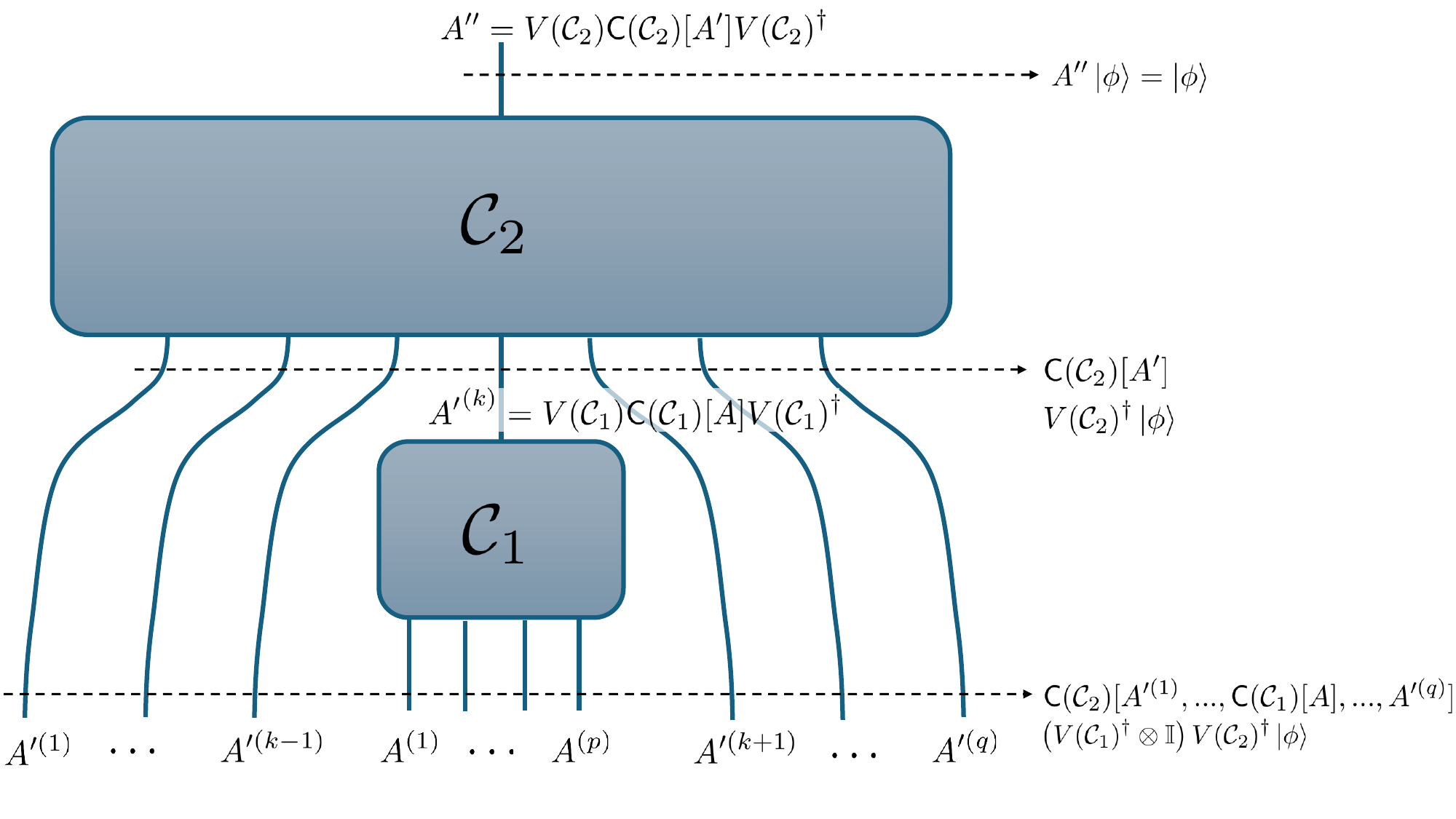}
    \caption{The result of using the output of a $p\rightarrow 1$ connector ${\sf C}(\CC_1)$ as the $k$-th input of a $q\rightarrow 1$ connector ${\sf C}(\CC_2)$ is a $p+q-1\rightarrow 1$ connector. A connector is tight if for some choice of measurements $\bar{A}$ and some coisometry $V$, the ``renormalized" measurement $A'=V{\sf C}[\bar{A}]V^\dagger=\bar{M}$ is projective. This ensures that the Bell inequality $0\leq {\sf C}[A]\leq 1$ is tight and can be saturated by applying $V^\dagger$ to the corresponding eigenvector of $M$ (see text). The set $\CC=\{{\sf C}, \bar{A}, V\}$ then defines a \emph{connector complex}. When the renormalized measurement coincides with the optimal input measurement $\bar{A}'$ of the following connector ${\sf C}(\CC_2)$, the complexes $\CC_1$ and $\CC_2$ can be \emph{congruently contracted}, resulting in another tight connector complex. This allows to back-propagate the saturation of any of the (trivial) quantum Bell inequalities associated to the final output leg -- say $A''\leq 1$ -- and its optimal eigenvector $\ket{\phi}$, to the corresponding quantum Bell inequalities on $A'$ and $A$ and their saturation strategies, via the sequential application of the isometries $V^\dagger_2$ and $V^\dagger_1$.}
    \label{fig:basic_contraction}
\end{figure}

Let us see why: call ${\sf C}$ the result of contracting the tensors ${\sf C}(\mathcal{C}_1), {\sf C}(\mathcal{C}_2)$ of the two complexes. Then we have that ${\sf C}$ is a connector, as it is the contraction of two connectors \cite{connectors}. Next, define the co-isometry
\begin{equation}
V=V(\CC_2)\left(\bigotimes_{j<k}\id^{(j)}\otimes V(\CC_1)\otimes \bigotimes_{j>k}\id^{(j)}\right),
\label{new_coiso}
\end{equation}
where $\id^{(j)}$ denotes the identity operator in the Hilbert space $(H_j(\CC_2))$. It is easy to see that $V$ satisfies eq. (\ref{proj_meas}), with $\bar{M}_{b|y}=\bar{M}_{b|y}(\CC_2)$ and 
\begin{align}
\bar{A}^{(j)}_{a|x}=&\bar{A}^{(j)}_{a|x}(\CC_2),\mbox{ for }j<k,\nonumber\\
&\bar{A}^{(j-k+1)}_{a|x}(\CC_1),\mbox{ for }k\leq j< k+q(\CC_1),\nonumber\\
&\bar{A}^{(j-q(\CC_1)+1)}_{a|x}(\CC_2),\mbox{ for } j\geq k+q(\CC_1).
\end{align}

Since $M(\CC_2)$ represent projective measurements, we have that the tuple $\CC=({\sf C},\bar{A}, V)$ is a connector complex. We will dub $\CC$ the (congruent) contraction of complexes $\CC_1$, $\CC_2$.

Now, think of a network of connector complexes, congruently contracted in a tree-like graph with just one root. By the previous argument, this defines a connector complex $\CC$, and so ${\sf C}={\sf C}(\CC)$ is tight. Hence, for any quantum Bell inequality of the form (\ref{tight_ineq}), the upper (lower) bound is guaranteed to be tight if $\bar{M}_{b|y}(\CC)\not=0$ ($\bar{M}_{b|y}(\CC)\not=\id$). Both upper and lower bounds can be achieved with measurements $\bar{A}(\CC)$ and states of the form $V^\dagger(\CC)\ket{\phi}$, with $\ket{\phi}$ being an eigenvector of $\bar{M}_{b|y}(\CC)$ with eigenvalue $1$ or $0$, depending on the bound we wish to saturate.

Remarkably, from the composition of co-isometries (\ref{new_coiso}), it follows that the optimal state $\ket{\psi}$ is a tensor network state: to generate it, one just needs to apply to $\ket{\phi}$ a sequence of local isometries, whose contraction structure mimics that of the connectors', see Figure \ref{fig:conn2tree} (for details refer to Appendix \ref{app: Tensor network state}).

\begin{figure}[H]
\centering
\includegraphics[width=0.6\textwidth]{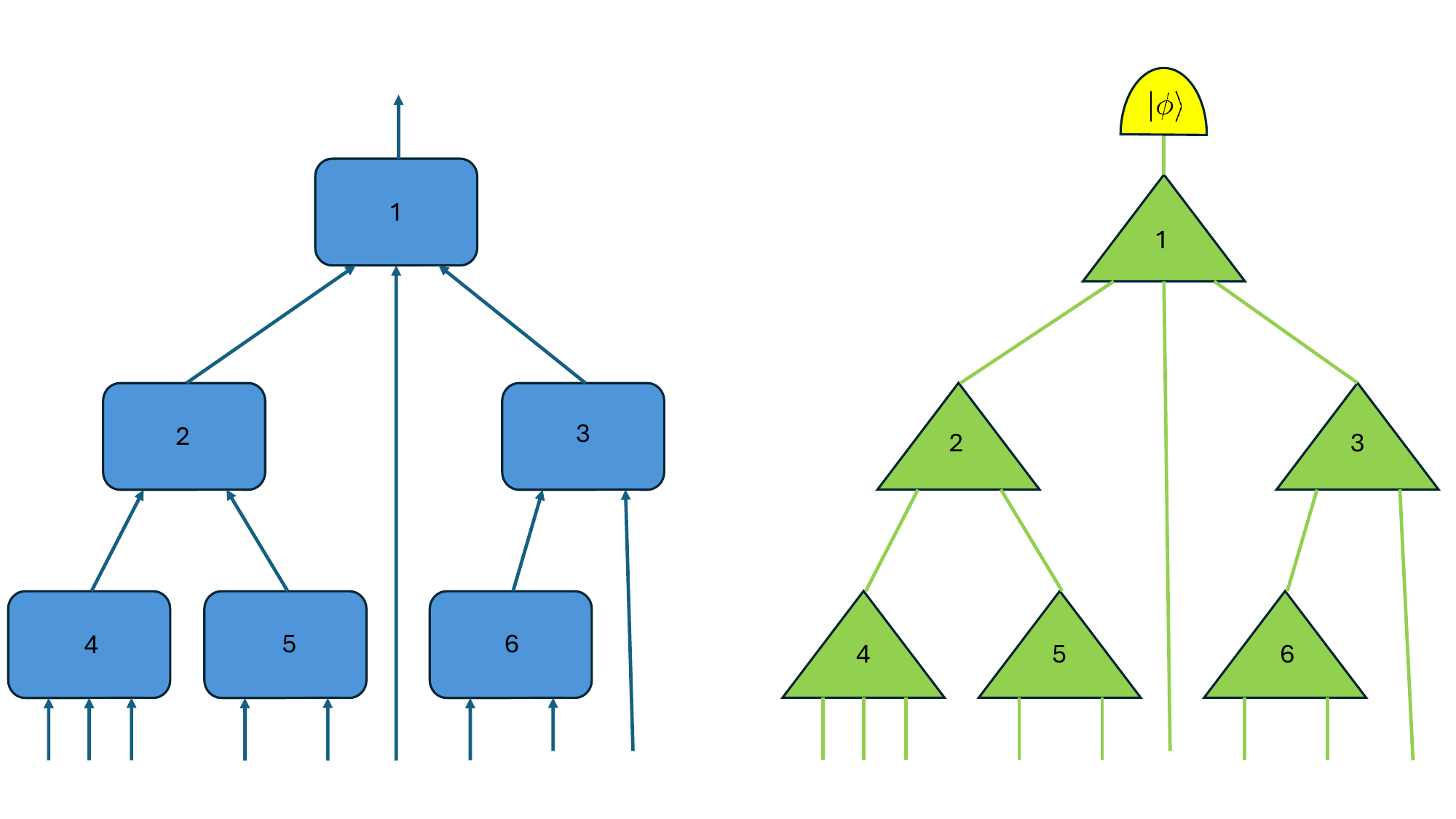}
\caption{Left: Bell functional obtained by congruently contracting a network of connector complexes. Right: Tensor network state maximizing the quantum Bell functional. Green triangles denote the adjoint of the co-isometry of the corresponding connector complex.}
\label{fig:conn2tree}
\end{figure}

\section{Examples of connector complexes}
\label{sec:firstcomplexes}
We now show the existence of a simple $2\rightarrow 1$ connector complex, based on the Clauser-Horne-Shimony-Holt (CHSH) inequality \cite{CHSH} and the corresponding Tsirelson bound~\cite{tsirelson_bound}, which self-tests maximally entangled 2-qubits Bell states and pairs of dichotomic operators represented by orthogonal vectors in the Bloch sphere \cite{Summers1987} (we later provide other examples involving partially entangled states, non-orthogonal measurements, as well as $q\rightarrow 1$ connector complexes with $q>2$).
We will work in Bell scenarios with only two outputs, labeled $+1$ and $-1$. Following tradition in the nonlocality literature, instead of representing the $x^{th}$ measurement of party $k$ with the POVM $\{A^{(k)}_{a|x}:a=\pm1\}$, we will use the POVM dichotomic operator $K^{(k)}_x:=A^{(k)}_{+1|x}-A^{(k)}_{-1|x}$.

\subsection{CHSH - Tsirelson complex $\CC_{\rm Tsi}$.}

Consider then the following two CHSH Bell functionals defined by
\begin{align}
\label{eq:CHSH_tsi}
    2\sqrt{2} \sB_0[K] &= -K^{(1)}_0\otimes K^{(2)}_0 + K^{(1)}_1 \otimes K^{(2)}_0 + K^{(1)}_0 \otimes K^{(2)}_1 + K^{(1)}_1 \otimes K^{(2)}_1\;,\\
    2\sqrt{2} \sB_1[K] &=\;\;\; K^{(1)}_0\otimes K^{(2)}_0 + K^{(1)}_1 \otimes K^{(2)}_0 + K^{(1)}_0 \otimes K^{(2)}_1 - K^{(1)}_1 \otimes K^{(2)}_1\;.
\end{align}
Tsirelson's bound~\cite{tsirelson_bound} asserts that both  are normalized as $-1\leq\sB_{0,1}\leq 1$.
Let $\sx,\sy,\sz$ denote the three Pauli matrices. Up to isometries, we express the optimal CHSH Bell operators saturating the normalization as
\begin{align}
\nonumber
    \bar{K}^{(1)}_0 &=\sigma_3\;,\; &\bar{K}^{(1)}_1&=\sigma_1\;,\\
    \bar{K}^{(2)}_0 &=\frac{\sigma_3 + \sigma_1}{\sqrt{2}}\;,\; &\bar{K}^{(2)}_1&=\frac{\sigma_3 - \sigma_1}{\sqrt{2}}\;,
    \label{eq:optimal_meas_CHSH}
\end{align}
The maximum and minimum CHSH values for $\sB_1$ are then obtained (in this basis) by the maximally entangled states
\begin{align}
     \ket{\phi_+} =\frac{\ket{00}+\ket{11}}{\sqrt{2}}\;,\quad
    \ket{\phi_-} =\frac{\ket{01}-\ket{10}}{\sqrt{2}}\;.
\end{align}
That means
\begin{align}
\label{eq:c1_like_Z}
    \sB_1[\bar{K}]\ket{\phi_\pm}=\pm\ket{\phi_\pm}\;.
\end{align}
Remarkably (and crucially for what follows) the following property also holds
\begin{align}
\label{eq:c0_like_X}
    \sB_0[\bar{K}]\ket{\phi_\pm}=\ket{\phi_\mp}\;.
\end{align}
That is, Eqs.~\eqref{eq:c1_like_Z} and~\eqref{eq:c0_like_X} explicitly show that $\sB_0$ and $\sB_1$ \emph{effectively act} as the $\sx$ ($X$) and $\sz$ ($Z$) Pauli matrices on the subspace spanned by $\{\ket{\phi_+},\ket{\phi_-}\}$. This can then be used to define our first example of a connector complex, by using the co-isometry
\begin{align}
    V=\ket{0}\bra{\phi_+}+\ket{1}\bra{\phi_-}
\end{align}
and  the connectors
\begin{align}
    \sC_{\pm|0}[{A}]= \frac{\id\pm \sB_0[{K}]}{2}\;,\quad
    \sC_{\pm|1}[{A}]= \frac{\id\pm \sB_1[{K}]}{2}\;.
\label{eq:first_tight_connectors}
\end{align}
These are valid connectors as can be seen from the chosen normalization of the Bell functionals, which is such that $0\leq \sC \leq 1$ for any quantum behaviour. The same choice guarantees that the connectors are tight, as
\begin{align}
    \bar{M}_{\pm|0} =V\frac{\id\pm \sB_0[\bar{K}]}{2}V^{\dagger}=\frac{\id\pm \sx}{2}\;,\quad
    \bar{M}_{\pm|1} =V\frac{\id\pm \sB_1[\bar{K}]}{2}V^{\dagger}=\frac{\id\pm \sz}{2}\;,
    \label{eq:tsirelson_Mbar}
\end{align}
define the associated projectors. Equations~(\ref{eq:CHSH_tsi}-\ref{eq:tsirelson_Mbar}) thus define the CHSH-Tsirelson connector complex $\CC_{\rm Tsi}$.
Crucially, the renormalized ``output measurements" $\bar{M}(\CC_{\rm Tsi})$~\eqref{eq:tsirelson_Mbar} can be then congruently contracted with connectors complexes $\CC$ that are saturated by orthogonal measurements of the same form, i.e. satisfying $\bar{A}(\CC)=\bar{M}(\CC_{\rm Tsi})$. In particular, from~\eqref{eq:optimal_meas_CHSH} and the observation that the pair of operators $(\bar{K}^{(2)}_0, \bar{K}^{(2)}_1)$ is unitarily equivalent to $(\bar{K}^{(1)}_0, \bar{K}^{(1)}_1)$, the CHSH - Tsirelson complex can be congruently contracted with itself or some unitarily equivalent version, depending on the contracted input leg. Since unitarily equivalent complexes share the same connector, it follows that contractions of multiple copies of the connector ${\sf C}(\CC_{\mbox{Tsi}})$ can only generate tight quantum Bell inequalities.

It is now worth noticing that the construction leading to the $2\rightarrow 1$ connectors~\eqref{eq:first_tight_connectors} has a simple interpretation and can be repeated for all instances in which there exist two Bell functionals $-1\leq \sB_{0,1} \leq 1$ and a measurement system $\bar{A}$ such that $\sB_{0,1}[\bar{A}]$ can be projected on a common two-dimensional subspace on which they act as projective dichotomic operators, i.e., rotations of Pauli matrices. That is,
\begin{prop}
\label{prop:pauli_construction}
    Consider a set of normalized Bell functionals satisfying $-1\leq \sB_i[A]\leq 1$, an optimal set of measurements $\bar{A}$ and a $2$-dimensional subspace $\mathbb{S}$ such that each $\sB_i$ can be saturated in $\mathbb{S}$ on both hands, i.e. $\mathbb{S}\equiv {\sf Span}(\ket{\phi_{+,i}},\ket{\phi_{-,i}})$ $\forall i$ and $\sB_i[\bar{A}]\ket{\phi_{\pm,i}}=\pm\ket{\phi_{\pm,i}}$.
    Then, the connector defined by ${\sf C}_{\pm|i}[A]=\frac{\id\pm \sB_i[A]}{2}$ is tight and defines a connector complex $\{{\sf C},\bar{A},V\}$, with $V$ being a projection onto $\mathbb{S}$.
\end{prop}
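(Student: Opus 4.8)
The plan is to verify directly that the tuple $\{{\sf C},\bar{A},V\}$ satisfies the definition of a connector complex, i.e., that ${\sf C}_{\pm|i}[A]=\frac{\id\pm\sB_i[A]}{2}$ is a valid normalized $q\to1$ connector and that eq.~\eqref{proj_meas} holds with $\bar{M}_{\pm|i}=V\,{\sf C}_{\pm|i}[\bar{A}]\,V^\dagger$ projective. First I would check validity and normalization of ${\sf C}$: since $-1\leq\sB_i[A]\leq 1$ for every quantum behaviour, the operators $\frac{\id\pm\sB_i[A]}{2}$ are effects, and $\sum_{\pm}\frac{\id\pm\sB_i[A]}{2}=\id$ for each setting $i$, so ${\sf C}$ is a normalized connector in the sense required earlier. (Linearity in the measurement operators, which is what makes ${\sf C}$ a tensor acting on behaviours via eq.~\eqref{eq:conn_P_action}, is inherited from the fact that each $\sB_i$ is a Bell functional, hence a tensor polynomial of degree one in the $A^{(k)}_{a|x}$.)

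The core step is to show that $\bar{M}_{\pm|i}$ is a projector. Let $V$ be the coisometry implementing the projection onto $\mathbb{S}=\mathrm{Span}(\ket{\phi_{+,i}},\ket{\phi_{-,i}})$; concretely, pick an orthonormal basis $\{\ket{0},\ket{1}\}$ of a 2-dimensional target space $H$ and set $V=\ket{0}\bra{\phi_{+,i}}+\ket{1}\bra{\phi_{-,i}}$ for one fixed $i$ — but here is the subtlety, so I will return to it below. Granting a single common $V$ projecting onto $\mathbb{S}$, the hypothesis $\sB_i[\bar{A}]\ket{\phi_{\pm,i}}=\pm\ket{\phi_{\pm,i}}$ says that on the invariant subspace $\mathbb{S}$ the operator $\sB_i[\bar{A}]$ acts as a self-adjoint involution with eigenvalues $\pm1$. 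Therefore $V\sB_i[\bar{A}]V^\dagger$ is a self-adjoint operator on $H$ with $(V\sB_i[\bar{A}]V^\dagger)^2=\id_H$ — this uses that $V^\dagger V$ is the orthogonal projector onto $\mathbb{S}$ and that $\sB_i[\bar{A}]$ maps $\mathbb{S}$ into $\mathbb{S}$, so the square factors through without cross terms. Consequently $\bar{M}_{\pm|i}=\frac{\id_H\pm V\sB_i[\bar{A}]V^\dagger}{2}$ satisfies $\bar{M}_{\pm|i}^2=\bar{M}_{\pm|i}$, $\bar{M}_{\pm|i}^\dagger=\bar{M}_{\pm|i}$, and $\bar{M}_{+|i}+\bar{M}_{-|i}=\id_H$, so the $\bar{M}_{\pm|i}$ are the claimed projective measurements. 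By the definition of connector complex in the excerpt, this makes ${\sf C}$ tight and $\{{\sf C},\bar{A},V\}$ a connector complex.

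The main obstacle is the phrase ``$V$ being a projection onto $\mathbb{S}$'' when the subspace $\mathbb{S}$ is shared across settings $i$ but the \emph{eigenbases} $\{\ket{\phi_{+,i}},\ket{\phi_{-,i}}\}$ are not: one must exhibit a single coisometry $V:\bigotimes_kH_k\to H$, independent of $i$, such that eq.~\eqref{proj_meas} produces a projector for every $i$ simultaneously. I would resolve this by taking $V$ to be any coisometry whose support is exactly $\mathbb{S}$ (e.g. $V=\ket{0}\bra{e_0}+\ket{1}\bra{e_1}$ for \emph{some} fixed orthonormal basis $\{\ket{e_0},\ket{e_1}\}$ of $\mathbb{S}$, not necessarily an eigenbasis of any $\sB_i$); then $V^\dagger V$ is the orthogonal projector onto $\mathbb{S}$, which is all the argument above actually used. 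The per-$i$ statement $\bar{M}_{\pm|i}$ projective only required that $\sB_i[\bar{A}]$ restrict to a $\pm1$ involution on $\mathbb{S}$, which follows from $\sB_i[\bar{A}]\ket{\phi_{\pm,i}}=\pm\ket{\phi_{\pm,i}}$ together with $\mathbb{S}=\mathrm{Span}(\ket{\phi_{+,i}},\ket{\phi_{-,i}})$ — note that $\ket{\phi_{+,i}}\perp\ket{\phi_{-,i}}$ since they are eigenvectors of the self-adjoint $\sB_i[\bar{A}]$ with distinct eigenvalues, so $\sB_i[\bar{A}]|_\mathbb{S}$ is genuinely an involution and $V\sB_i[\bar{A}]V^\dagger$ is unitarily equivalent to it. A secondary, easier point to address is that one should also check $\bar{M}_{\pm|i}\neq 0,\id$ so that the tightness statement of eq.~\eqref{tight_ineq} is non-vacuous; this is immediate because both $\ket{\phi_{+,i}}$ and $\ket{\phi_{-,i}}$ are nonzero and lie in $\mathbb{S}$, giving $\bar{M}_{+|i}$ and $\bar{M}_{-|i}$ each a nonzero eigenvector with eigenvalue $1$. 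With the common coisometry $V$ in hand the rest is the short computation sketched above, so no further difficulty is expected.
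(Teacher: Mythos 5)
Your proposal is correct and follows essentially the same route the paper takes (the paper leaves Proposition~\ref{prop:pauli_construction} without a dedicated proof, relying on the preceding CHSH--Tsirelson construction and the remark that each $\sB_i[\bar{A}]$ acts as a self-adjoint involution, i.e.\ a rotated Pauli matrix, on $\mathbb{S}$, so that $\frac{\id\pm\sB_i[\bar{A}]}{2}$ compresses to a projector). Your explicit treatment of the two points the paper glosses over --- that $\ket{\phi_{+,i}}\perp\ket{\phi_{-,i}}$ by self-adjointness, and that a single $i$-independent coisometry with support $\mathbb{S}$ suffices because only $V^\dagger V=P_{\mathbb{S}}$ and $VV^\dagger=\id_H$ enter the computation --- is a faithful filling-in of the intended argument rather than a departure from it.
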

In the following, we show examples of this construction beyond the $2\rightarrow 1$ CHSH-Tsirelson connector presented above. In particular, we present 2-inputs 2-outputs couples of quantum Bell functionals that are saturated by: \emph{i)} all partially entangled 2-qubits states (Tilted CHSH~\cite{acin2012randomness,bamps2015sum}), \emph{ii)}  non-orthogonal measurements (WBC Bell inequalities~\cite{wooltorton2023device}), as well as \emph{iii)}  $q$-partite graph-states (inspired by the BASTA Bell inequalities~\cite{baccari2020scalable}) leading to $q\rightarrow 1$ tight connectors, and thus valid complexes via Proposition~\ref{prop:pauli_construction}.
We provide in the following a short presentation of such connectors, see Appendix~\ref{sec:other_complexes} for details.

\subsection{Tilted CHSH complex ($\CC_{\rm Tilt}$) for non-maximally entangled qubit states.}
Consider the Tilted CHSH~\cite{acin2012randomness,bamps2015sum} operator functionals
\begin{align}
\label{eq:tilted_functional}
    \beta(\theta) {\sf B}_{0,\theta}[K] &= \alpha(\theta) K^{(1)}_1\otimes \id^{(2)}  -K^{(1)}_0\otimes K^{(2)}_0 + K^{(1)}_1 \otimes K^{(2)}_0 + K^{(1)}_0 \otimes K^{(2)}_1 + K^{(1)}_1 \otimes K^{(2)}_{1}\;,\\
    \beta(\theta) {\sf B}_{1,\theta}[K] &= \alpha(\theta) K^{(1)}_0\otimes \id^{(2)} + K^{(1)}_0\otimes K^{(2)}_0 + K^{(1)}_1 \otimes K^{(2)}_0 + K^{(1)}_0 \otimes K^{(2)}_1 - K^{(1)}_1 \otimes K^{(2)}_1\;.
\end{align}
Here we follow the notation of~\cite{bamps2015sum}, parametrizing
$    \alpha(\theta)=2\left(1+2\tan(2\theta)^{2}\right)^{-\frac{1}{2}}
$.
The quantum bound $\beta(\theta)$  of the corresponding tilted CHSH inequality is given by $
    \beta(\theta)=\sqrt{8+2\alpha^{2}(\theta)}$, so 
that
$-1\leq \sB_{0,1}\leq 1\;$
for any choice of dichotomic operators $K$ (the classical bound is instead given by $\beta^{\rm cl}(\theta)=2+\alpha(\theta)$).
As it turns out, the $\pm 1$ values are (uniquely, cf.~\cite{bamps2015sum} and~\ref{secapp:tilted_CHSH}) saturated in $\sB_1$
when choosing the measurements
\begin{align}
\nonumber
    \bar{K}^{(1)}_0 &=\sigma_3\;,\; &\bar{K}^{(1)}_1&=\sigma_1\;,\\
    \bar{K}^{(2)}_0 &=\cos\mu(\theta)\, \sigma_3 + \sin\mu(\theta)\, \sigma_1\;,\; &\bar{K}^{(2)}_1&=\cos\mu(\theta)\, \sigma_3 - \sin\mu(\theta)\, \sigma_1\;,
\end{align}
with $\mu(\theta)=\arctan{(\sin{(2\theta)})}$ and
acting on the span of the non-maximally entangled qubit states
\begin{align}
    \ket{\phi_+} =\cos\theta\ket{00}+\sin\theta\ket{11}\;,\quad
    \ket{\phi_-} =\sin\theta\ket{01}-\cos\theta\ket{10}\;.
\end{align}
With this choice, one again has
\begin{align}
    \sB_1[\bar{K}]\ket{\phi_\pm}=\pm\ket{\phi_\pm}\;,\quad
    \sB_0[\bar{K}]\ket{\phi_\pm}=\ket{\phi_\mp}\;,
\end{align}
from which tight connector complexes $\CC_{\rm Tilt}$ can be built via~\eqref{eq:first_tight_connectors}.

\subsection{WBC complex ($\CC_{\rm WBC}$) for non-orthogonal measurements.}
In Ref.~\cite{wooltorton2023device}, the following family of Bell functionals is defined: 
\begin{align}
  {\beta(\theta,\varphi,\omega)}\sB_{1}(\theta,\varphi,\omega)[K] := &\cos(\theta + \varphi)\cos(\theta + \omega)\big(\cos(\omega) K^{(1)}_0 \otimes K^{(2)}_0  - \cos(\varphi) K^{(1)}_0 \otimes K^{(2)}_{1}\big) + \nonumber\\ 
   &\cos(\varphi)\cos(\omega)\big(\cos(\theta + \varphi) K^{(1)}_{1}\otimes K^{(2)}_{1} - \cos(\theta + \omega) K^{(1)}_{1}\otimes K^{(2)}_{0} \big) ,
\end{align}
with $\beta(\theta,\varphi,\omega):=\sin(\theta)\sin(\omega - \varphi)\sin(\theta + \varphi + \omega)$.
as shown in~\cite{wooltorton2023device}, if the angles $\theta, \varphi, \omega \in \mathbb{R}$ satisfy the condition $\cos(\theta + \varphi)\cos(\varphi)\cos(\theta + \omega)\cos(\omega) < 0$,
then $-1\leq\sB_{1}(\theta,\varphi,\omega)\leq 1$ is also normalized. Moreover, the extremal values of  $\sB_{1}$ are achieved for the state and operators
\begin{align}
\ket{\phi_{\pm}} &=\frac{1}{\sqrt{2}}(\ket{00}\pm i\ket{11})\;, & & \nonumber\\
\bar{K}^{(1)}_0 &=\sigma_1\;, & &\bar{K}^{(1)}_1=\cos(\theta)\sigma_1+\sin(\theta)\sigma_2\;,\nonumber\\
\bar{K}^{(2)}_0 &=\cos(\varphi)\sigma_1+\sin(\varphi)\sigma_2\;,& &  \bar{K}^{(2)}_1=\cos(\omega)\sigma_1+\sin(\omega)\sigma_2\;.
\label{ref_W}
\end{align}
Consider now the complementary functional 
\begin{equation}
{\sB}_{0}(\theta,\varphi,\omega)[K^{(1)}_0,K^{(1)}_1,K^{(2)}_0,K^{(2)}_1]:=\sB_{1}(\theta,\varphi,\omega)[K^{(1)}_1,K^{(1)}_0,K^{(2)}_1,K^{(2)}_0].
\end{equation}
This is obviously another normalized quantum Bell inequality. One can verify that not only $\sB_{0}(\theta,\varphi,\omega)[\bar{K}]\ket{\phi_\pm}=\pm \ket{\phi_\pm}$, but also
\begin{align}
    \sB_{0}(\theta,\varphi,\omega)[\bar{K}]\ket{\phi_+}& =-\cos(\theta+\varphi+\omega)\ket{\phi_+}+i\sin(\theta+\varphi+\omega)\ket{\phi_-}\;\\
    \sB_{0}(\theta,\varphi,\omega)[\bar{K}]\ket{\phi_-}& =\cos
    (\theta+\varphi+\omega)\ket{\phi_-}-i\sin(\theta+\varphi+\omega)\ket{\phi_+}\;.
\end{align}
Differently from the previous two examples, we can see that in this case $\sB_1$ effectively acts as $\sigma_3$ on the span of $\{\ket{\phi_-},\ket{\phi_+}\}$, while $\sB_0$ acts as $-\cos(\theta+\varphi+\omega)\sigma_3-\sin(\theta+\varphi+\omega)\sigma_2$. This implies that the connector defined by $\frac{\id\pm \sB_{1}(\theta,\varphi,\omega)[{K}]}{2}$ and $\frac{\id\pm \sB_{0}(\theta,\varphi,\omega)[{K}]}{2}$ is again tight, and can be congruently contracted with connector complexes that are saturated by couples of Pauli measurements separated by a relative angle $\theta+\varphi+\omega+\pi$ in the Bloch sphere.

\subsection{BASTA complex ($\CC_{\rm BASTA}$) for q-partite graph states.}
\label{subsec:C_basta}
As a final example, for the construction of $q\rightarrow 1$ tight connectors, we consider here the BASTA-Bell functional introduced in~\cite{baccari2020scalable} for graph-states.
Consider in particular a given graph $\G:=\{\V,\E\}$ defined by a set of $q$ vertices $\V$ and edges $\E$ and its corresponding graph state $\ket{\phi_+}$. By selecting a given vertex, say $v_1$, with $n_1$ graph-neighbours, an associated Bell functional is introduced as
\begin{align}
   \beta_1(q,n_1) \sB_1(\G)[K] &= n_1 (K_0^{(1)}+K_1^{(1)})\bigotimes_{i\in n(1)} K_1^{(i)} + \sum_{i\in n(1)}  (K_0^{(1)}-K_1^{(1)})\otimes K_0^{(i)}\bigotimes_{1\neq j\in n(i)} K_1^{(j)} 
    + \sum_{j \notin n(1)\cup \{1\}} K_0^{(j)}\bigotimes_{k\in n(j)}K^{(k)}_1 \;,
    \label{eq:BASTA}
\end{align}
where $n(i)$ identifies neighbouring vertices of $v_i$, and $\beta(q,n_1)=(2\sqrt{2}-1)n_1+q-1$.
This Bell inequality is quantum-normalized as $-1\leq\sB_1(\G)[K]\leq 1$.
The $+1$ value is obtained (uniquely) by the corresponding graph-state $\ket{\phi_+}$, and measurements $\bar{K}$ chosen to in order to make each term in~\eqref{eq:BASTA} proportional to the stabilizer operators of $\ket{\phi_+}$
(see details in \cite{baccari2020scalable} or App.~\ref{app:graph}), which define a basis $G_i$ of operators satisfying $G_i\ket{\phi_+}=\ket{\phi_+}$. One can then verify that a complementary \emph{anti}-graph state $\ket{\phi_-}$ state exists satisfying, $G_i\ket{\phi_-}=-\ket{\phi_-}$. It follows that with the same choice of measurements the opposite hand $B_1(\G)[\bar{K}]\ket{\phi_-}=-\ket{\phi_-}$ is saturated. 
Finally, we construct a nontrivial complementary (normalized) Bell functional as
\begin{align}
    \beta_0(n_1)\sB_0(\G)[K] &= n_1 (K_0^{(1)}-K_1^{(1)})\bigotimes_{i>1} K_1^{(i)} - \sum_{i\in n(1)}  (K_0^{(1)}+K_1^{(1)})\otimes K_0^{(i)}\bigotimes_{j\in {\sf EN}_i} K_1^{(j)}
     \label{eq:BASTA'}
\end{align}
with $\beta_0(n_1)=2\sqrt{2}n_1$, ${\sf EN}_i$ being the set of vertices sharing an edge with both $v_1$ and $v_i$ or neither, and we show that is satisfies $\sB_0(\G)[\bar{K}] \ket{\phi_{\pm}}=\ket{\phi_\mp}$ (see App.~\ref{app:graph}).

\section{XOR Bell functionals and exponential quantum-to-classical ratio}
\label{sec:XOR_expo}
In this section, we prove that, if we contract a network of a special type of connector complexes, which includes the $\CC_{\rm WBC}$ family, then we can very easily bound the classical value of the associated quantum Bell inequalities. In turn, this will allow us to prove that the ratio between the maximum quantum and classical values of the Bell functional increases exponentially with the number of parties.

Throughout this section, we will just consider connectors and Bell functionals with indices $a|x$ such that $a$ can just take the values $-1,1$. In this case, it will be more convenient to express the outgoing index of each connector in a basis that, rather than referring to the measurement operators $\{A_{a|x}\}$, points to the corresponding dichotomic operators $\{K_\star=\id\}\cup\{K_x\}$, where $K_x:=A_{+1|x}-A_{-1|x}$. This choice of basis, which we implicitly already used in Sec.~\ref{sec:firstcomplexes}, formally consists in relabeling the values of the outgoing index or leg of a connector from $a|x$, to $\star,x$, with $\star$ ($x$) representing $\id$ ($K_x$). The change of basis for a connector ${\sf C}$ is as follows:
\begin{equation}
{\sf C}_{\star}:=\sum_{b=-1,1}{\sf C}_{b|y},{\sf C}_{y}:=\sum_{b=-1,1}{\sf C}_{b|y}b.
\label{proj2corr}
\end{equation}
The inverse transformation is given by:
\begin{equation}
{\sf C}_{b|y}=\frac{1}{2}({\sf C}_{\star}+b{\sf C}_{y}).
\label{corr2proj}
\end{equation}

Now, a $q$-partite Bell functional containing only products of $K_x$ terms, i.e. of the form
\begin{equation}
\label{eq:XOR_class}
\sum_{x_1,...,x_q}c_{x_1,...,x_q}\langle \bigotimes_{k=1}^q K^{(k)}_{x_k}\rangle\;,
\end{equation}
is called an \emph{XOR} or \emph{full correlation} Bell functional \cite{scarani2019bell}. Analogously, we call the tensor ${\sf C}$ an \emph{XOR connector} if ${\sf C}$ is a connector with dichotomic outcomes on each leg and 
$
{\sf C}_{\star}[K]=1,{\sf C}_{y}[K]={\sf B}_y[K],
$
where $\{{\sf B}_y\}_y$ are XOR Bell functionals. Similarly, any complex containing an XOR connector will be dubbed an \emph{XOR complex}. Note that all elements of the family $\CC_{\rm WBC}$ are XOR connector complexes. It is easy to verify that the congruent contraction of any pair of XOR connector complexes is another XOR complex.

The next two propositions will be very useful.

\begin{prop}
\label{prop:XOR_ineq}
For any XOR complex $\CC$, the inequality
\begin{equation}
-1\leq \left\langle {\sf C}_y(\CC)[K]\right\rangle_P\leq 1
\label{ineq_corr}
\end{equation}
holds for all quantum behaviors $P$. Moreover, the inequality is tight.

\end{prop}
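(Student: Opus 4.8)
The plan is to obtain the inequality \eqref{ineq_corr} essentially for free from the fact that ${\sf C}(\CC)$ is a normalized tight connector, and to obtain tightness by recycling the saturating strategy already described in Sec.~\ref{sec:tight_CI}. First I would record a general fact about any normalized connector ${\sf C}$ with dichotomic output: for every quantum realization (state $\rho$, correlators $K$) the operators $\{{\sf C}_{b|y}[K]\}_{b=\pm1}$ form a genuine POVM for each $y$. Indeed, the connector property forces $\langle{\sf C}_{b|y}[K]\rangle_\rho\ge 0$ for every state $\rho$ on the realization's Hilbert space, hence ${\sf C}_{b|y}[K]\ge 0$ as an operator, while normalization gives $\sum_b{\sf C}_{b|y}[K]=\id$, so in particular ${\sf C}_{b|y}[K]\le\id$. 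Since $\CC$ is built as a tree of congruent contractions of (XOR) connector complexes, the argument of Sec.~\ref{sec:tight_CI} shows that $\CC$ is again a connector complex, so ${\sf C}={\sf C}(\CC)$ is such a normalized connector. Then, via the change of basis \eqref{proj2corr}, ${\sf C}_y(\CC)[K]={\sf C}_{+1|y}[K]-{\sf C}_{-1|y}[K]=2{\sf C}_{+1|y}[K]-\id$, so $-\id\le{\sf C}_y(\CC)[K]\le\id$ as operators, and taking the expectation value in the state realizing $P$ gives \eqref{ineq_corr} at once. (Note the XOR hypothesis plays no role in this bound; it enters only in the analysis of the classical value in the next proposition.)

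For tightness I would exhibit quantum behaviors attaining both endpoints, exactly as in Sec.~\ref{sec:tight_CI}. Write $V=V(\CC)$, $\bar A=\bar A(\CC)$, $\bar M_{b|y}=\bar M_{b|y}(\CC)=V{\sf C}_{b|y}[\bar A]V^\dagger$, and pick a unit vector $\ket\phi$ with $\bar M_{+1|y}\ket\phi=\ket\phi$. Using $VV^\dagger=\id_H$ together with the invariance ${\sf C}_{b|y}(\CC)[\bar A]\,I(\CC)\subset I(\CC)$ of Appendix~\ref{app:invariant} (so that $V^\dagger V$ acts as the identity on the relevant vectors), one checks ${\sf C}_{+1|y}[\bar A]V^\dagger\ket\phi=V^\dagger\ket\phi$ and ${\sf C}_{-1|y}[\bar A]V^\dagger\ket\phi=0$, whence ${\sf C}_y[\bar A]V^\dagger\ket\phi=V^\dagger\ket\phi$ and $\langle V^\dagger\phi|{\sf C}_y[\bar A]|V^\dagger\phi\rangle=\|V^\dagger\phi\|^2=1$. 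Thus the quantum behavior generated by $\bar A$ and the state $V^\dagger\proj\phi V$ saturates the upper bound, and an eigenvector of $\bar M_{-1|y}$ with eigenvalue $1$ saturates the lower bound in exactly the same way. This requires $\bar M_{\pm1|y}\ne 0$, which is automatic for all the XOR complexes we use: they are built via Proposition~\ref{prop:pauli_construction}, so $\bar M_{\pm1|y}$ are the rank-one projectors $\tfrac{\id\pm\sigma_3}{2}$ on a qubit, and this property is preserved under congruent contraction.

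The main obstacle is essentially nonexistent for the inequality itself, which is a one-line consequence of the POVM property of a normalized connector; the only point requiring a little care is the invariant-subspace bookkeeping used in the tightness part — namely that $V$ restricts to an isometry on $I(\CC)$ and that ${\sf C}_{b|y}[\bar A]$ leaves $I(\CC)$ invariant — and both of these are already established in the appendix.
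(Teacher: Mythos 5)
Your derivation of the inequality itself is fine and is essentially the paper's argument: writing ${\sf C}_y={\sf C}_{+1|y}-{\sf C}_{-1|y}$ with each term a normalized quantum Bell expression valued in $[0,1]$ gives $|\langle{\sf C}_y[K]\rangle_P|\leq 1$ immediately (your detour through operator positivity of ${\sf C}_{b|y}[K]$ is harmless). The gap is in the tightness part. The definition of an XOR complex only guarantees that $\bar{M}_{+1|y}(\CC)$ and $\bar{M}_{-1|y}(\CC)$ are projectors summing to $\id_H$, hence that \emph{at least one} of them is nonzero; it does not guarantee that both are. Nothing in the definition forbids, say, $V{\sf C}_y[\bar{A}]V^\dagger=\id_H$, in which case $\bar{M}_{-1|y}=0$ and there is no eigenvector with eigenvalue $1$ to feed into your lower-bound construction. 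You notice this and patch it by asserting that both projectors are nonzero ``for all the XOR complexes we use,'' but that turns the argument into a statement about the particular families built from Proposition~\ref{prop:pauli_construction} rather than a proof of the proposition as stated, which quantifies over \emph{any} XOR complex.

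The paper closes exactly this gap with the one ingredient your proof never invokes: the XOR structure itself. Since ${\sf C}_y(\CC)[K]$ is a full-correlation functional, any achievable value $c$ yields the achievable value $-c$ by flipping the sign of the outputs of a single party ($K^{(k)}\to -K^{(k)}$), a legitimate relabelling of a quantum strategy. Hence saturating one endpoint --- which is guaranteed because at least one $\bar{M}_{b|y}$ is nonzero --- automatically saturates the other. This also shows that your parenthetical remark, that the XOR hypothesis ``plays no role'' in this proposition and only enters the classical analysis later, is not quite right: it is precisely what makes \emph{both} ends of \eqref{ineq_corr} tight in general.
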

\begin{proof}
The inequality follows from the fact that ${\sf C}_y(\CC)={\sf C}_{1|y}(\CC)-{\sf C}_{-1|y}(\CC)$, i.e., it is the difference between two quantum Bell inequalities, each of which can only take values in $[0,1]$. The tightness of (\ref{ineq_corr}) follows from a simple observation: by hypothesis $V(\CC){\sf C}_y(\CC)[\bar{A}(\CC)]V(\CC)^\dagger=\bar{M}_{1|y}(\CC)-\bar{M}_{-1|y}(\CC)$. Since $\sum_b\bar{M}_{b|y}(\CC)=1$, at least one of the two projectors is non-zero; suppose it is $\bar{M}_{1|y}(\CC)$. This implies immediately (cf. Sec.~\ref{sec:tight_CI}) that the the upper bound in ~\eqref{ineq_corr} can be saturated. However, for any achievable value $c$ of XOR game~\eqref{eq:XOR_class}, $-c$ is also achievable by flipping the sign of any of the outputs $K^{(k)}$. Therefore both ends of~\eqref{ineq_corr} are tight.

\end{proof}

\begin{lemma}
\label{lem:indep_ident}
For some XOR complex $\CC$, consider the Bell functional ${\sf C}_y(\CC)$, for some $y$. If $\CC$ is the result of congruently contracting several normalized XOR complexes $\CC^1,...,\CC^n$, then, for any $\vec{\lambda}\in\R^n$, we would have obtained the same Bell functional had we contracted the tensors $\tilde{{\sf C}}^1,...,\tilde{{\sf C}}^n$, with
\begin{equation}
\tilde{{\sf C}}^j_{\star}[K]:=\lambda^j, \tilde{{\sf C}}^j_{y}[K]:={\sf C}_y(\CC^j)[K],
\end{equation}
instead. 
\end{lemma}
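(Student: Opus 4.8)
The plan is to show that the tensor ${\sf C}_y(\CC)$ is insensitive to the ${\sf C}_\star$ components of the constituent complexes, so that replacing ${\sf C}^j_\star[K]$ by an arbitrary constant $\lambda^j$ leaves the final contracted tensor untouched. The key structural fact is the change of basis \eqref{proj2corr}--\eqref{corr2proj}: in the correlator basis, each leg of an XOR connector carries one ``$\star$'' value, whose component is the identity functional on $K$, and the remaining values $x$, whose components are genuine XOR functionals ${\sf B}_x$. I would first observe that for an XOR complex the only non-trivial data of ${\sf C}^j$ in this basis are the functionals $\{{\sf C}^j_x\}_{x\neq\star}$, whereas ${\sf C}^j_\star[K]=1$ identically; the outgoing leg, when fed into a subsequent connector, does so through exactly this basis.

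Next I would make precise what ``congruent contraction of a network'' means at the level of the correlator-basis tensors: contracting the output leg of ${\sf C}^{j_1}$ into an input leg of ${\sf C}^{j_2}$ means substituting, for the dichotomic operator $K^{(k)}_{x_k}$ appearing in the functionals of ${\sf C}^{j_2}$, the operator ${\sf C}^{j_1}_{x_k}[K]$, and for the implicit identity on that leg the operator ${\sf C}^{j_1}_\star[K]$. The crucial point is that ${\sf C}_y(\CC)[K]$, being an XOR functional by the closure property stated just before the lemma, is a linear combination of products of $K^{(m)}_{x_m}$ terms over the \emph{uncontracted} legs only — it contains \emph{no} identity factors on any contracted leg. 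Hence along the whole tree, the only components of the constituent tensors that survive into ${\sf C}_y(\CC)$ are the $x$-components ${\sf C}^j_x$; the $\star$-components ${\sf C}^j_\star$ only ever get multiplied into identity slots of parent connectors, and those slots are absent from the full-correlation output.

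To turn this into a clean argument I would proceed by induction on the number $n$ of complexes in the tree. For $n=1$ the claim is immediate since ${\sf C}_y(\CC^1)$ does not involve ${\sf C}^1_\star$ at all. For the inductive step, consider the root connector ${\sf C}^{\rm root}$ with children subtrees; by the inductive hypothesis each child contributes a tensor $\hat{{\sf C}}^{c}_y$ equal to ${\sf C}_y(\CC^c)$ regardless of the $\star$-data inside that subtree, and $\hat{{\sf C}}^c_\star$ is some constant $\mu^c$ (for a genuine normalized subtree, $\mu^c=1$, but the value is irrelevant). Expanding ${\sf C}_y(\CC)={\sf C}^{\rm root}_y[\dots]$ via \eqref{proj2corr}, each term is a product over the root's legs; a leg connected to child $c$ contributes either $\hat{{\sf C}}^c_\star=\mu^c$ (an identity slot of the root) or $\hat{{\sf C}}^c_y$; the former multiplies a $K$-monomial that has an identity on that leg, which cannot appear in the full-correlation functional ${\sf C}_y(\CC)$. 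So those terms must cancel among themselves, and what remains depends only on the $\{\hat{{\sf C}}^c_y\}$ and on ${\sf C}^{\rm root}_y$ — none of which involves any $\star$-component. Replacing each ${\sf C}^j_\star$ by $\lambda^j$ changes only the would-be coefficients of the already-vanishing identity-bearing terms, hence does not change ${\sf C}_y(\CC)$.

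**Main obstacle.** The delicate step is the claim that terms carrying an identity on a contracted leg must cancel, rather than simply being definitionally discarded. This relies on the closure statement ``the congruent contraction of XOR complexes is an XOR complex'' (asserted just above the lemma): it guarantees ${\sf C}_y(\CC)[K]$ is \emph{purely} full-correlation, so whatever identity-bearing contributions arise in the naive expansion have zero net coefficient. I would either invoke that closure fact directly, or, to be self-contained, verify it inductively in parallel — showing that when ${\sf C}^{j_1}_\star[K]=1$ and ${\sf C}^{j_1}_{x}[K]={\sf B}_{x}[K]$ is XOR, the substitution into an XOR functional ${\sf C}^{j_2}_y$ again yields an XOR functional, since a product of full-correlation operators over a leg, with the identity substituted by $1$, stays full-correlation on the remaining legs. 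Everything else is bookkeeping on tensor-network contractions and the elementary change of basis \eqref{proj2corr}--\eqref{corr2proj}.
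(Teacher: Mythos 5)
Your proposal is correct and follows essentially the same route as the paper: the $\star$-component of a child connector can only enter through an identity slot in its parent's functional, and since each parent is by hypothesis an XOR connector, its functionals ${\sf C}^j_y[K]$ contain no such slots, so induction over the tree finishes the argument. Your worry that identity-bearing terms must ``cancel among themselves'' is an unnecessary detour --- they are absent term-by-term because the \emph{parent's} functional (not merely the final contracted one) is full-correlation by definition, which is exactly the shortcut the paper's proof takes before invoking induction.
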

\begin{proof}
Consider two tensors $\tilde{{\sf C}}^1$, $\tilde{{\sf C}}^2$ such that $\tilde{{\sf C}}^1_y[K],\tilde{{\sf C}}^2_z[K]$ are XOR Bell functionals, for all $y,z$, and $\tilde{{\sf C}}^1_{\star}[K]=\lambda\in\R$. Call $\tilde{{\sf C}}^3$ the result of contracting the $k^{th}$ incoming leg of $\tilde{{\sf C}}^2$ with the outgoing leg of $\tilde{{\sf C}}^1$ and assume that $\tilde{{\sf C}}^2$ has $q$ input legs. Then, we have that $\tilde{{\sf C}}^3_{y}[K]$ is independent of $\lambda$ for all $y$: otherwise, the functional $\tilde{{\sf C}}^2_{y}[K]$ would contain at least one term of the form 
\begin{equation}
\left\langle K^{(1)}_{x_1}\otimes... K^{(k-1)}_{x_{k-1}}\otimes\id_k\otimes K^{(k+1)}_{x_{k+1}}\otimes...K^{(q)}_{x_q}\right\rangle,
\end{equation}
i.e., it would not be an XOR game. By induction, the result of contracting a tree network of such modified XOR connectors will result in a modified XOR connector, whose associated dichotomic quantum Bell inequalities will be independent of the values of the constituents tensors at $\star$.    
\end{proof}

Now, consider a network of congruently contracted XOR connector complexes $\CC^1,...,\CC^{n_c}$, as in Fig.~\ref{fig:conn2tree}, and assume all components to be $q\rightarrow 1$ connectors with $q\leq \bar{q}$ (in the Figure $\bar{q}=3$). In such a case, the number $n_c$ of connectors needed in order to have a single final output leg, starting from $N$ initial inputs, has to satisfy 
\begin{align}
    n_c\geq \left\lceil \frac{N-1}{\bar{q}-1} \right\rceil\;.
    \label{eq:nc_linear_N}
\end{align} 
In particular, $n_c$ is of order $\mathcal{O}(N)$. 

Call $\CC$ the final XOR complex. We wish to estimate the local or classical value of ${\sf C}_y(\CC)$. Given a XOR connector ${\sf C}$, denote by $\gamma({\sf C})$ the maximum number such that
\begin{equation}
-1\leq \gamma({\sf C})\langle{\sf C}_{y}[K]\rangle_P\leq 1
\end{equation}
for all $y$ and all classical behaviors $P$. Of course, $\gamma({\sf C})\geq 1$ and it simply correponds to the ratio between the quantum and classical maximum values of the Bell functional $|{\sf C}_y|$. This quantity can be computed via, e.g., direct evaluation of $\{\langle{\sf C}_{x}[K]\rangle\}_y$ on all deterministic vertices of the classical set of correlations. For instance, one clearly has $\gamma(\sC(\CC_{\rm Tsi}))=\sqrt{2}$, while from \cite{wooltorton2023device}, we have that
\begin{align}
&\gamma({\sf C}(\CC_{\rm WBC}(\theta,\phi,\omega))\nonumber\\
&=\frac{1}{\beta(\theta,\phi,\omega)}\max_{\pm}\left|\cos(\theta+\omega)\cos(\omega)(\cos(\theta+\phi)\pm\cos(\phi))\right|+\left|\cos(\theta+\phi)\cos(\phi)(\cos(\theta+\omega)\pm\cos(\omega))\right|.
\end{align}

For any XOR connector ${\sf C}$, we define the classically-renormalized tensor $\mbox{Cl}({\sf C})$ as:
\begin{equation}
\mbox{Cl}({\sf C})_{\star}[K]=1,\mbox{Cl}({\sf C})_{y}[K]=\gamma({\sf C}){\sf C}_{y}[K].
\end{equation}
Since $\{\mbox{Cl}({\sf C})_{b|y}\}_{b,y}$ are normalized (standard) Bell inequalities, with 
$\sum_b\mbox{Cl}({\sf C})_{b|y}[K]=\mbox{Cl}({\sf C})_{\star}[K]=1$,    
it follows that $\mbox{Cl}({\sf C})$ is a normalized \emph{classical connector} \cite{connectors}, i.e., a tensor that, tensored with the identity map, maps normalized classical distributions to normalized classical distributions. Its contraction with other classical connectors thus generates standard Bell inequalities \cite{connectors}.

Now, call ${\sf C}^1,...,{\sf C}^{n_c}$ the respective connectors of the constituting complexes $\CC^1,...,\CC^{n_c}$. Then we have that ${\sf C}:={\sf C}(\CC)=f({\sf C}^1,...,{\sf C}^{n_c})$ for some multilinear function $f$ dependent on the exact contraction of the complexes. From all the above, it holds that
\begin{align}
{\sf C}_{y}&=f_y({\sf C}^1,...,{\sf C}^{n_c})\overset{\mbox{(Lemma \ref{lem:indep_ident})}}{=}f_y\left(\frac{1}{\gamma({\sf C}^1)}\mbox{Cl}({\sf C}^1),...,\frac{1}{\gamma({\sf C}^{n_c})}\mbox{Cl}({\sf C}^{n_c})\right)\nonumber\\
&\overset{\mbox{(multilinearity)}}{=}\frac{1}{\prod_{k=1}^{n_c}\gamma({\sf C}^k)}f_y\left(\mbox{Cl}({\sf C}^1),...,\mbox{Cl}({\sf C}^{n_c})\right).
\end{align}
The tensor $f_y\left(\mbox{Cl}({\sf C}^1),...,\mbox{Cl}({\sf C}^{n_c})\right)$ is the result of contracting the classical connectors in the argument and therefore it is a normalized Bell inequality. 

The identity above and Proposition \ref{prop:XOR_ineq} implies the main result of this section.
\begin{prop}
Let the complex $\CC$ be the result of congruently contracting the XOR complexes $\CC^1,...,\CC^{n_c}$. Then it holds that 
\begin{equation}
-1\leq \left\langle{\sf C}_y(\CC)[K]\right\rangle_P\leq 1,
\end{equation}
for all quantum behaviors $P$, and both inequalities can be saturated. In addition,
\begin{equation}
-\frac{1}{\Gamma}\leq \left\langle{\sf C}_{y}[K]\right\rangle_P\leq \frac{1}{\Gamma},
\label{classical_bound}
\end{equation}
for all classical behaviors $P$, with
\begin{equation}
\Gamma=\prod_{k=1}^{n_c}\gamma({\sf C}(\CC^k)).
\end{equation}
\end{prop}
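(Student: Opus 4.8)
The plan is to assemble the final Proposition from three ingredients already in hand: (i) the tightness and quantum-normalization statement of Proposition~\ref{prop:XOR_ineq} applied to the final complex $\CC$; (ii) Lemma~\ref{lem:indep_ident}, which lets us replace the constituent XOR connectors ${\sf C}^k$ by any rescaled versions $\tilde{{\sf C}}^k$ having the same $y$-components without changing ${\sf C}_y(\CC)$; and (iii) the multilinearity of the contraction map $f$ in its connector arguments, together with the fact that contracting normalized classical connectors yields a normalized (standard) Bell inequality. The quantum part is then immediate, so the substance of the argument is the classical bound \eqref{classical_bound}.

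First I would establish the quantum statement: since $\CC$ is itself an XOR complex (being a congruent contraction of XOR complexes, as noted in the text before Proposition~\ref{prop:XOR_ineq}), Proposition~\ref{prop:XOR_ineq} applies verbatim and gives $-1\le\langle{\sf C}_y(\CC)[K]\rangle_P\le 1$ for all quantum $P$, with both bounds tight. Next, for the classical bound, I would write ${\sf C}(\CC)=f({\sf C}^1,\dots,{\sf C}^{n_c})$ for the appropriate multilinear contraction map $f$ determined by the tree topology, and look at the $y$-component ${\sf C}_y=f_y({\sf C}^1,\dots,{\sf C}^{n_c})$. Invoking Lemma~\ref{lem:indep_ident} with the particular choice $\tilde{{\sf C}}^k=\frac{1}{\gamma({\sf C}^k)}\mathrm{Cl}({\sf C}^k)$ — legitimate because $\tilde{{\sf C}}^k_y[K]={\sf C}_y(\CC^k)[K]$ and $\tilde{{\sf C}}^k_\star[K]=1/\gamma({\sf C}^k)$ is just a real constant — yields ${\sf C}_y=f_y\!\left(\frac{1}{\gamma({\sf C}^1)}\mathrm{Cl}({\sf C}^1),\dots,\frac{1}{\gamma({\sf C}^{n_c})}\mathrm{Cl}({\sf C}^{n_c})\right)$. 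Pulling the scalars out by multilinearity of $f_y$ gives ${\sf C}_y=\frac{1}{\Gamma}\,f_y(\mathrm{Cl}({\sf C}^1),\dots,\mathrm{Cl}({\sf C}^{n_c}))$ with $\Gamma=\prod_k\gamma({\sf C}^k)$. Finally, each $\mathrm{Cl}({\sf C}^k)$ is a normalized classical connector (argued in the text from the normalization $\sum_b\mathrm{Cl}({\sf C}^k)_{b|y}[K]=1$ and the definition of $\gamma$), so its contraction $g_y:=f_y(\mathrm{Cl}({\sf C}^1),\dots,\mathrm{Cl}({\sf C}^{n_c}))$ is a normalized standard Bell functional, hence $-1\le\langle g_y\rangle_P\le 1$ for every classical $P$; dividing by $\Gamma$ gives \eqref{classical_bound}.

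The one point requiring a little care — the main obstacle, such as it is — is making the reduction to the classical-connector picture fully rigorous: one must check that the $y$-component of a tree contraction of the classically-renormalized tensors $\mathrm{Cl}({\sf C}^k)$ really coincides with the contraction of them \emph{as classical connectors} (so that the result is a genuine normalized Bell inequality valued in $[-1,1]$ on classical behaviors), and that the multilinear bookkeeping of the $\gamma$ factors is exactly one factor of $\gamma({\sf C}^k)$ per constituent, no more and no less. This follows because in a tree each connector contributes its outgoing leg to exactly one parent contraction, so $f_y$ is linear of degree one in each argument; combined with Lemma~\ref{lem:indep_ident} (which guarantees the $\star$-values of the constituents are irrelevant to ${\sf C}_y$, so replacing ${\sf C}^k$ by $\mathrm{Cl}({\sf C}^k)$ up to the scalar $1/\gamma({\sf C}^k)$ changes nothing but the overall normalization), the count is clean. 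With that settled, the tightness of the quantum bounds from Proposition~\ref{prop:XOR_ineq} and the two displayed inequalities together constitute the statement.
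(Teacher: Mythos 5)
Your proposal is correct and follows essentially the same route as the paper: the quantum bounds and their tightness come from Proposition~\ref{prop:XOR_ineq} applied to the contracted XOR complex, while the classical bound is obtained by invoking Lemma~\ref{lem:indep_ident} to substitute each constituent by $\frac{1}{\gamma({\sf C}^k)}\mathrm{Cl}({\sf C}^k)$, extracting the factor $1/\Gamma$ by multilinearity of the contraction map, and recognizing the remaining contraction of normalized classical connectors as a normalized Bell inequality. No substantive differences from the paper's argument.
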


As a corollary, if we congruently contract a network of XOR complexes $\{\CC^j\}_{j=1}^{n_c}$ (e.g., members of $\CC_{\rm WBC}$), with $\gamma({\sf C}(\CC^j))\geq \gamma_0>1$, for all $j$, then the maximum classical value of any of the associated normalized XOR quantum Bell inequalities will be upper bounded by $\gamma_0^{-n_c}=\gamma_0^{-\mathcal{O}(N)}$. The quotient between the maximum quantum and classical values will thus grow at least exponentially with $N$.

The reader might wonder under which circumstances one can assert that the bound (\ref{classical_bound}) is tight. One way to approach this problem is by defining an analog of tight connectors, namely, \emph{classical tight connectors}. Like their quantum counterparts, any such connector ${\sf C}$ would have a $1$-partite measurement system $\hat{A}^{(k)}$ associated to each input leg $k$. Crucially, those would be $1$-dimensional, meaning that, $\hat{A}^{(k)}_{a|x}=p(a|x,k)$, for all $x, k$, for some conditional probability distribution $p$. If $\hat{M}_{b|y}:={\sf C}_{b|y}(\hat{A})$ is a deterministic distribution, i.e., if $\hat{M}_{b|y}=\delta_{b,\beta_y}$, for some vector of outcomes $(\beta_y)_y$, then the connector will be called classically tight. If the XOR classical connectors $\mbox{Cl}({\sf C}^1),...,\mbox{Cl}({\sf C}^{n_c})$ are classically tight and congruently contracted, then the bound (\ref{classical_bound}) will be tight. An example of a classical tight connector is ${\sf C}^{\rm CHSH}:=\mbox{Cl}({\sf C}(\CC_{\rm Tsi}))$; this can be seen, e.g., by taking $\hat{A}_{1|x}=1, \hat{A}_{-1|x}=0$, for all $x$, in which case ${\sf C}^{\rm CHSH}_{1|y}(\hat{A})=1, {\sf C}^{\rm CHSH}_{-1|y}(\hat{A})=0$. Since input and output legs have the same classical outcomes, it follows that the bound~(\ref{classical_bound}) is tight for any Bell functional generated by contracting copies of $\CC_{\rm Tsi}$.

\section{Self-testing}
\label{sec:self-testing}
In this section, we show how the Bell inequalities constructed via the congruent contraction of connector complexes can be used to \emph{self-test} the underlying quantum state and measurements. Self-testing refers to the remarkable feature of certain Bell inequalities, whose maximum violation somehow uniquely fixes the experimental state and local operators, modulo local isometries, see~\cite{self_testing_review} for a review. A few results have been obtained in the literature, e.g. notably the fact that all pure bipartite entangled states can be self-tested~\cite{coladangelo2017all}, as well as all real projective measurements \cite{self_testing_all_meas}. More recently, the authors of~\cite{balanzójuandó2024allpuremultipartite} proved that all pure multipartite qubit states can be self-tested up to complex conjugation.

A formal definition of self-testing would be:
\begin{defin}[Standard (projective) self-testing]
\label{def:standard_self_testing}
Let $\bar{A}$ be a $q$-partite collection of projective measurements, with Hilbert spaces $H_1,...,H_q$, and let $\ket{\bar{\psi}}\in\bigotimes_k H_k$ be a normalized vector. Given a $q$-partite normalized Bell functional ${\sf B}$, we say that it \emph{fully self-tests} $(\ket{\bar{\psi}},\bar{A})$ if ${\sf B}[P]=1$ implies that, for any quantum realization $\{\psi,A\}$, with $A$ being a system of projective measurements, there exist local isometries $U_1,...,U_q$ such that
\begin{equation}
U_1\otimes...\otimes U_q\bigotimes_k A^{(k)}_{a_k|x_k}\ket{\psi}=\bigotimes_k \bar{A}^{(k)}_{a_k|x_k}\ket{\bar{\psi}}\ket{\mbox{junk}}
\label{st_corresp_standard}
\end{equation}
holds for all $a_1,x_1,...,a_q,x_q, i_1,...,i_q$.    
\end{defin}

For our purposes, we need to introduce a notion of self-testing that is slightly stronger.

\begin{defin}[Full Self-Testing]
\label{def:FST}
Let $\bar{A}$ be a $q$-partite collection of projective measurements, with Hilbert spaces $H_1,...,H_q$, and let $\ket{\bar{\psi}}\in\bigotimes_k H_k$ be a normalized vector such that
\begin{equation}
\{f[\bar{A}]\ket{\bar{\psi}}:f, \mbox{ tensor polynomial}\}=\bigotimes_k H_k.
\label{reconst_prop}
\end{equation}
Given a $q$-partite normalized Bell functional ${\sf B}$, we say that it \emph{fully self-tests} $(\ket{\bar{\psi}},\bar{A})$ if ${\sf B}[P]=1$ implies that, for any quantum realization $\{\psi,A\}$, with 
${A}^{(k)}_{a|x}:H_k'\to H_k'$, 
there exist an orthonormal basis $\{\ket{i_k}:i_k\}$ for each $H_k$, and orthonormal states 
$\{\ket{\psi(i_1,...,i_q)}:i_k=1,...,\mbox{dim}(H_k)\}\subset {H'}:=\bigotimes_k H'_k$ 
satisfying
\begin{equation}
\ket{\psi}=\sum_{\vec{i}}\braket{i_1,...,i_q}{\bar{\psi}}\ket{\psi(i_1,...,i_q)},
\label{state_decomp}
\end{equation}
and local isometries $U_1,...,U_q$ such that
\begin{equation}
U_1\otimes...\otimes U_q\bigotimes_k A^{(k)}_{a_k|x_k}\ket{\psi(i_1,...,i_q)}=\bigotimes_k \bar{A}^{(k)}_{a_k|x_k}\ket{i_1,...,i_q}\ket{\mbox{junk}}
\label{st_corresp}
\end{equation}
holds for all $a_1,x_1,...,a_q,x_q, i_1,...,i_q$.
    
\end{defin}
\begin{remark}
Contrarily to standard self-testing (definition \ref{def:standard_self_testing}), in full self-testing the `actual' measurement system $A$ is not assumed to be projective; on the contrary, for any $k,x$, we allow $(A^{(k)}_{a|x})_{a}$ to be a general POVM. 
    
\end{remark}

\begin{remark}
\label{remark:st_tensor_pol}
Eqs. (\ref{state_decomp}), (\ref{st_corresp}) imply that, for any tensor polynomial $f[A]$, it holds that
\begin{equation}
U_1\otimes...\otimes U_q f[A]\ket{\psi(i_1,...,i_q)}=f[\bar{A}]\ket{i_1,...,i_q}\ket{\mbox{junk}}.
\label{st_corresp_alt}
\end{equation}
Indeed, by eq. (\ref{st_corresp}), for all tensor polynomials $g$ of degree $1$, we have that
$
Ug[A]\ket{\psi(i_1,...,i_q)}=\sum_{\vec{j}}\bra{\vec{j}}g[\bar{A}]\ket{\vec{i}}\ket{\vec{j}}\ket{\mbox{junk}},
$
with $U=U_1\otimes...\otimes U_q$. Multiplying on both sides by $U^\dagger$, one obtains
$
g[A]\ket{\psi(i_1,...,i_q)}=\sum_{\vec{j}}\bra{\vec{j}}g[\bar{A}]\ket{\vec{i}}\ket{\psi(\vec{j})}.
$
By induction, we have that
$
\prod_k g_k[A]\ket{\psi(i_1,...,i_q)}=\sum_{\vec{j}}\bra{\vec{j}}\prod_k g_k[\bar{A}]\ket{\vec{i}}\ket{\psi(\vec{j})},
$
for any tuple of tensor polynomials $(g_k)_k$ of local degree $1$. The general relation (\ref{st_corresp_alt}) follows by linearity.
    
\end{remark}

These remarks make Definition~\ref{def:FST} (to our knowledge) the strongest notion of self-testing in the literature, as it certifies the underlying state and the action of the measurements \emph{component-wise} in some basis.
In particular full self-testing clearly implies eq. (\ref{st_corresp_standard}) for general (non-projective) operator systems $A$.

In Appendix~\ref{sec:qubit_selftest_lemmas} we provide a sufficient condition on the reference state $\ket{\bar{\psi}}$ and projector operators $\bar{A}$ that allows extending the standard notion of self-testing (Def. \ref{def:standard_self_testing}) to full self-testing (Def. \ref{def:FST}) for quantum Bell inequalities with binary outcomes that self-test multipartite qubit systems. More precisely,
\begin{prop}
\label{prop:dicho_FST}
    Let ${\sf B}$ be a normalized $q$-partite quantum Bell functional with dichotomic outputs in the correlator form
\begin{align}
\label{eq:corform}
{\sf B}[A]&=
\sum_{x}K^{(1)}_{x}\otimes {\sf D}^{(1)}_{x}[K] 
=\sum_{x}K^{(2)}_{x}\otimes {\sf D}^{(2)}_{x}[K]=...
=\sum_{x}K^{(q)}_{x}\otimes {\sf D}^{(q)}_{x}[K],
\end{align}
where each ${\sf D}^{(j)}_{x}$ denotes a tensor polynomial of the variables $\{K^{(k)}_{y}:y,k\not=j\}$, and the set of inputs $x$ is formally augmented to include $K^{(j)}_{x=\star}=\id^{(j)}$.
Suppose that ${\sf B}$ self-tests (in the standard sense, i.e., Def.~\ref{def:standard_self_testing}) the state $\ket{\bar{\psi}}\in\bigotimes_{k=1}^q H_k$ and the projectors $\bar{A}:=\{\bar{A}^{(k)}_{a|x}\in B(H_k)\}_{a,x,k}$. Further assume that the following two conditions hold:
\begin{enumerate}
    \item For all $x\not=\star$, the operator ${\sf D}^{(j)}_{x}[\bar{A}]$ is invertible.
    \item For each $k\in\{1,...,q\}$, and each $x\not=\star$,
    \begin{equation}
    {\sf{Span}}\{\bar{A}_{a|x}^{(k)}\otimes f[\{\bar{A}^{(j)}:j\not=k\}]\ket{\bar{\psi}}:a,\; f\; \mbox{degree $1$ tensor polynomial}\}=\bigotimes_{l=1}^qH_l.
    \label{gen_Hilbert_space}
    \end{equation}
\end{enumerate}
Then ${\sf B}$ fully self-tests $\ket{\bar{\psi}}$, $\bar{A}$.
\end{prop}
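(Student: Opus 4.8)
The plan is to bootstrap the hypothesised \emph{standard} self‑testing statement into the \emph{component‑wise} one of Definition~\ref{def:FST}, in three parts: first reduce to a projective realization, then build the basis $\{\ket{i_k}\}$ and the states $\ket{\psi(\vec i)}$ from the self‑testing isometry, and finally promote the isometry's action from $\ket{\psi}$ to each $\ket{\psi(\vec i)}$ and to all tensor polynomials.

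I would start from the fact that, because ${\sf B}$ is in the form~\eqref{eq:corform}, it is multilinear — at most one measurement operator of each party per monomial — and so is every $P(\vec a|\vec x)$; moreover $-\id\preceq {\sf B}[A]\preceq \id$ in any realization, so ${\sf B}[P]=1$ forces $\ket{\psi}$ (and likewise $\ket{\bar\psi}$) to be a $+1$ eigenvector of ${\sf B}[A]$. Fixing a party $k$ and writing ${\sf B}[A]=\sum_{x}K^{(k)}_{x}\otimes {\sf D}^{(k)}_{x}[A]$, the terms decouple in the party‑$k$ observables, so extremality at the maximum (equivalently a sum‑of‑squares argument on the correlator form) gives $(K^{(k)}_{x})^{2}\ket{\psi}=\ket{\psi}$ together with
\begin{equation}
(K^{(k)}_{x}\otimes\id)\ket{\psi}=(\id\otimes T^{(k)}_{x}[A])\ket{\psi},\qquad (K^{(k)}_{x}\otimes\id)\ket{\bar\psi}=(\id\otimes T^{(k)}_{x}[\bar A])\ket{\bar\psi},
\label{eq:plan_T}
\end{equation}
where $T^{(k)}_{x}$ is supported on the parties $j\neq k$, is of degree $1$ in each of them, and is proportional to ${\sf D}^{(k)}_{x}$ up to the normalization constant; condition~1 enters here, guaranteeing that $T^{(k)}_{x}[\bar A]$ is invertible so that~\eqref{eq:plan_T} loses no information about $K^{(k)}_{x}$. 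Since operators supported on the other parties cannot enlarge the party‑$k$ reduced support, \eqref{eq:plan_T} shows that each $A^{(k)}_{a|x}$ preserves $\hat H_{k}:=\mathrm{supp}\,\rho^{\psi}_{k}$, on which $(K^{(k)}_{x})^{2}=\id$; hence $A^{(k)}_{a|x}|_{\hat H_{k}}$ is projective, $\ket{\psi}\in\bigotimes_{k}\hat H_{k}$, the behavior is unchanged, and one may henceforth assume $A$ projective. Standard self‑testing then supplies local isometries $U_{k}\colon\hat H_{k}\to H_{k}\otimes(\text{junk space})$ with $U\bigotimes_{k}A^{(k)}_{a_{k}|x_{k}}\ket{\psi}=\bigotimes_{k}\bar A^{(k)}_{a_{k}|x_{k}}\ket{\bar\psi}\ket{\mathrm{junk}}$ for all $\vec a,\vec x$ (with $U=\bigotimes_{k}U_{k}$); summing over outcomes gives $U\ket{\psi}=\ket{\bar\psi}\ket{\mathrm{junk}}$, and, by linearity, $Ug[A]\ket{\psi}=g[\bar A]\ket{\bar\psi}\ket{\mathrm{junk}}$ for every degree‑$1$ tensor polynomial $g$.

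Next I would set up the basis and the states. Since $\bar A^{(k)}_{a|x}=\tfrac12(\id+(-1)^{a}\bar K^{(k)}_{x})$, condition~2 is precisely the statement that the product vectors $\{\bigotimes_{k}\bar A^{(k)}_{a_{k}|x_{k}}\ket{\bar\psi}\}_{\vec a,\vec x}$ (with $x_{k}$ allowed to be $\star$) span $\bigotimes_{k}H_{k}$; in particular the reconstruction property~\eqref{reconst_prop} holds (so Definition~\ref{def:FST} applies) and $\mathrm{Im}(U)\supseteq(\bigotimes_{k}H_{k})\otimes\ket{\mathrm{junk}}$, hence $UU^{\dagger}=\id$ there. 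For each $k$ I would fix an input $x_{0}^{(k)}$ and let $\{\ket{i_{k}}\}$ be the eigenbasis of the involution $\bar K^{(k)}_{x_{0}^{(k)}}$, so that $\ket{i_{k}}\!\bra{i_{k}}=\bar A^{(k)}_{(-1)^{i_{k}}|x_{0}^{(k)}}$ (using the qubit structure of the systems being self‑tested), and set $\ket{\psi(\vec i)}:=U^{\dagger}(\ket{\vec i}\ket{\mathrm{junk}})\in\bigotimes_{k}\hat H_{k}$. Orthonormality of the $\ket{\psi(\vec i)}$ follows from $UU^{\dagger}=\id$ on the relevant subspace, and $\sum_{\vec i}\braket{\vec i}{\bar\psi}\ket{\psi(\vec i)}=U^{\dagger}(\ket{\bar\psi}\ket{\mathrm{junk}})=\ket{\psi}$ is~\eqref{state_decomp}; applying $U^{\dagger}$ to the self‑testing identity at $\vec x=(x_{0}^{(k)})_{k}$ also gives $\braket{\vec i}{\bar\psi}\,\ket{\psi(\vec i)}=\bigotimes_{k}A^{(k)}_{(-1)^{i_{k}}|x_{0}^{(k)}}\ket{\psi}$ when $\braket{\vec i}{\bar\psi}\neq0$, and in general $\ket{\psi(\vec i)}=g_{i}[A]\ket{\psi}$ for some degree‑$1$ tensor polynomial $g_{i}$ with $g_{i}[\bar A]\ket{\bar\psi}=\ket{\vec i}$ (because $U^{\dagger}$ maps the span of $\{g[\bar A]\ket{\bar\psi}\ket{\mathrm{junk}}\}$ onto $\mathrm{span}\{g[A]\ket{\psi}\}$).

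Finally, it remains to verify~\eqref{st_corresp}, which — writing $\bigotimes_{k}A^{(k)}_{a_{k}|x_{k}}\ket{\psi(\vec i)}=(\bigotimes_{k}A^{(k)}_{a_{k}|x_{k}}\,g_{i})[A]\ket{\psi}$ and using $g_{i}[\bar A]\ket{\bar\psi}=\ket{\vec i}$ — reduces to the polynomial‑extension statement $Uf[A]\ket{\psi}=f[\bar A]\ket{\bar\psi}\ket{\mathrm{junk}}$ for all tensor polynomials $f$ (which is also exactly~\eqref{st_corresp_alt}). I would prove this by induction on the degree, using~\eqref{eq:plan_T} to rewrite any $f[A]\ket{\psi}$ as a \emph{fixed} linear combination $\sum_{\alpha}c_{\alpha}\,g_{\alpha}[A]\ket{\psi}$ of degree‑$1$ contributions: whenever a factor $K^{(k)}_{x}$ stands to the left of an expression with no party‑$k$ factors, it is commuted onto $\ket{\psi}$ and replaced by $T^{(k)}_{x}[A]$, which lives on the other parties; condition~1 makes this rewriting close up rather than loop and forces $\mathcal R:=\mathrm{span}\{f[A]\ket{\psi}\}$ to coincide with $\mathrm{span}\{g[A]\ket{\psi}:g\ \text{degree}\ 1\}$, consistently with the dimension count $\dim\mathcal R\le\prod_{k}\dim\hat H_{k}=\prod_{k}(\text{Schmidt rank of }\ket{\psi}\text{ at }k)=\prod_{k}(\text{Schmidt rank of }\ket{\bar\psi}\text{ at }k)\le\dim\bigotimes_{k}H_{k}=\dim\mathrm{span}\{g[\bar A]\ket{\bar\psi}\}$. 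Since the rewriting uses only~\eqref{eq:plan_T}, the parallel rewriting on the reference side produces the same coefficients, so $f[\bar A]\ket{\bar\psi}=\sum_{\alpha}c_{\alpha}\,g_{\alpha}[\bar A]\ket{\bar\psi}$; applying $U$ and the degree‑$1$ correspondence then yields $Uf[A]\ket{\psi}=f[\bar A]\ket{\bar\psi}\ket{\mathrm{junk}}$, completing the proof (and Remark~\ref{remark:st_tensor_pol} is exactly this conclusion). The hard part — the main obstacle — is precisely this last step: passing from the state‑level self‑testing guarantee, which in the correlator form only yields relations for degree‑$1$ objects, to a component‑wise statement for all tensor polynomials. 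Conditions~1 and~2 are tailored to it — condition~1 makes the recovery operators $T^{(k)}_{x}$ non‑degenerate so the degree‑reduction terminates without losing information, and condition~2 ensures the degree‑$1$ product vectors already span $\bigotimes_{k}H_{k}$, so matching the two sides on them pins down the whole correspondence — and the remaining delicate points are obtaining~\eqref{eq:plan_T} in the required generality, checking that the reduction is genuinely parallel for $A$ and $\bar A$ and does terminate, and handling the labels $\vec i$ with $\braket{\vec i}{\bar\psi}=0$ and the $\star$‑term of~\eqref{eq:corform}.
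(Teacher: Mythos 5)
Your overall target is right, and several of your ingredients (defining $\ket{\psi(\vec i)}=U^\dagger\ket{\vec i}\ket{\mbox{junk}}$, using condition 2 for the span and orthonormality, reducing \eqref{st_corresp} to a polynomial-extension statement) do appear, in some form, in the paper's argument. But the load-bearing step of your plan is the family of per-input operator relations $(K^{(k)}_{x}\otimes\id)\ket{\psi}=(\id\otimes T^{(k)}_{x}[A])\ket{\psi}$ with $T^{(k)}_{x}\propto {\sf D}^{(k)}_{x}$, and this is a genuine gap: it does not follow from the stated hypotheses. Saturation only gives the single relation $\sum_x K^{(k)}_x\otimes {\sf D}^{(k)}_x[A]\ket{\psi}=\ket{\psi}$, not one relation per $x$; extracting per-$x$ information would require an SOS certificate or the algebraic property $\bar K^{(k)}_x\ket{\bar\psi}\propto {\sf D}^{(k)}_x[\bar A]\ket{\bar\psi}$ of the reference realization, neither of which is assumed --- only standard self-testing (which, per Definition~\ref{def:standard_self_testing}, is hypothesized for \emph{projective} realizations), invertibility of ${\sf D}^{(k)}_x[\bar A]$, and the span condition \eqref{gen_Hilbert_space} are available. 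This undermines both your reduction to projective $A$ (the claim $(K^{(k)}_x)^2\ket{\psi}=\ket{\psi}$ ``from extremality/SOS'' is circular: the paper proves the analogous identities only \emph{after} obtaining the self-testing isometry, but you need projectivity before you may invoke the self-testing hypothesis) and your final rewriting induction, whose termination (``closes up rather than loops'') is asserted rather than proved.

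The paper closes these gaps by a different mechanism. For non-projective $A$ it decomposes each $K^{(k)}_x$ into a convex combination of \emph{extremal, hence projective}, dichotomic observables $K^{(k)}_{x,j}$; convexity forces every projective selection $K(J)$ to saturate ${\sf B}$, whence the vector $K^{(k)}_{x,j}\otimes{\sf D}^{(k)}_x[K(J^0)]\ket{\psi}$ is $j$-independent, and condition 1, transported through the self-testing isometry of one fixed projective selection, lets one cancel the ${\sf D}$ factor and conclude $K^{(k)}_x\ket{\psi}=K^{(k)}_{x,j}\ket{\psi}$. For the polynomial extension, Lemma~\ref{lem:state_to_compo-wise} uses no per-input operator relations: it derives $(A^{(k)}_{a|x})^2\ket{\psi}=A^{(k)}_{a|x}\ket{\psi}$ and $A^{(k)}_{a|x}A^{(k)}_{\bar{a}|x}\ket{\psi}=0$ from the isometry, writes each $\ket{\vec i}$ via \eqref{gen_Hilbert_space} as $(\bar A^{(k)}_{0|x}f^0+\bar A^{(k)}_{1|x}f^1)[\bar A]\ket{\bar\psi}$, and obtains the component-wise action $A^{(k)}_{a|x}\ket{\psi(\vec i)}=\sum_{j}\bra{j}\bar A^{(k)}_{a|x}\ket{i_k}\ket{\psi(\ldots,j,\ldots)}$, from which \eqref{st_corresp} follows by induction over parties. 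To repair your proposal you would need either to add the reference-realization relations you rely on as explicit hypotheses, or to replace your rewriting step by this construction.
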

The proof is provided in App.~\ref{sec:qubit_selftest_lemmas}. 
We next extend the definition of connector complex to refer to its self-testing properties. 
\begin{defin}
The $q\to 1$ connector complex $\CC$ is \emph{fully self-testing} if there exist coefficients $\{\mu_{b|y}:b,y\}$ such that the Bell functional
\begin{equation}
\label{eq:fst_bell_exist}
\sum_{b,y}\mu_{b|y}{\sf C}^{b|y}(\CC)[A]
\end{equation}
is a normalized quantum Bell inequality that fully self-tests $(\ket{\bar{\psi}}, \bar{A}(\CC))$, for some state $\ket{\bar{\psi}}\in I(\CC)$ \footnote{Remember that the condition $\ket{\bar{\psi}}\in I(\CC)$ is granted, for example, if $\ket{\bar{\psi}}$ is the preimage of the associate eigenvector of one of the projectors $\bar{M}_{b|y}$, cf. Sec.~\ref{sec:tight_CI}).} and, moreover,
\begin{equation}
I(\CC)=\{f[{\sf C}[\bar{A}]]\ket{\bar{\psi}}:f,\mbox{ tensor polynomial}\},
\label{cyclicity}
\end{equation}
with ${\sf C}={\sf C}(\CC)$. This condition is equivalent to $f[M(\CC)]V(\CC)\ket{\bar{\psi}}$ generating the whole range of $V(\CC)$, which we also denote as $H(\CC)$. In the following, we denote $\mu,\bar{\psi}$ by $\mu(\CC),\bar{\psi}(\CC)$.
    
\end{defin}
Of course, if a connector complex $\CC$ is fully self-testing, so will any other unitarily equivalent complex.

Remarkably, by direct application of Proposition~\ref{prop:dicho_FST},
\begin{center}
    \emph{all connectors complexes introduced in Sec.~\ref{sec:firstcomplexes} are fully self-testing}
\end{center}
(in Appendix~\ref{sec:other_complexes} we explicitly show how the associated Bell functionals comply with the conditions of Prop.~\ref{prop:dicho_FST}, as well as satisfying~\eqref{cyclicity}).

Finally, the following lemma will be instrumental to prove many-body self-testing results via the composition of full self-testing complexes.
\begin{lemma}
\label{lemma_st}
Let $\CC_1$ ($\CC_2$) be a $p\to 1$ ($q\to 1$) fully self-testing complex, and let $\DD$ be the result of congruently contracting the out-going leg of $\CC_1$ with the $k^{th}$ in-going leg of $\CC_2$. Then, $\DD$ is fully self-testing, with $\mu(\DD)=\mu(\CC_2)$ and
\begin{equation}
\ket{\bar{\psi}(\DD)}=(\bigotimes_{j<k}\id_j\otimes V^\dagger(  \CC_1)\otimes \bigotimes_{j>k}\id_j)\ket{\bar{\psi}(\CC_2)}.    
\label{final_state}
\end{equation}

\end{lemma}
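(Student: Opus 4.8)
The plan is to verify the two defining conditions of full self-testing for the contracted complex $\DD$: first, that there exist coefficients $\mu(\DD)$ such that the associated Bell functional fully self-tests $(\ket{\bar\psi(\DD)},\bar A(\DD))$ with $\ket{\bar\psi(\DD)}\in I(\DD)$; and second, the cyclicity condition \eqref{cyclicity} for $\DD$. We already know from Sec.~\ref{sec:tight_CI} that $\DD$ is a connector complex with coisometry $V(\DD)$ given by \eqref{new_coiso} and reference measurements $\bar A(\DD)$ assembled from $\bar A(\CC_1)$ and $\bar A(\CC_2)$; congruence guarantees $\bar M(\CC_1)=\bar A^{(k)}(\CC_2)$, which is what makes the nesting consistent.

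First I would construct the candidate Bell functional for $\DD$. The natural choice is $\sum_{b,y}\mu_{b|y}(\CC_2)\,{\sf C}^{b|y}(\DD)[A]$, i.e., take $\mu(\DD)=\mu(\CC_2)$ as the lemma claims. The key algebraic fact is that ${\sf C}(\DD)$ is obtained by feeding the outgoing leg of ${\sf C}(\CC_1)$ into the $k$-th slot of ${\sf C}(\CC_2)$, so $\sum_{b,y}\mu_{b|y}(\CC_2){\sf C}^{b|y}(\DD)[A]$ equals ${\sf B}_{\CC_2}$ evaluated with its $k$-th measurement system replaced by ${\sf C}(\CC_1)[A^{(k,\ldots)}]$ — i.e., the functional is the $\CC_2$-self-testing Bell functional ``composed'' with the $\CC_1$-connector acting on the first $p$ legs. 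Because $\CC_1$ is a tight connector, $0\le {\sf C}_{b|y}(\CC_1)[A]\le 1$, so on a maximizing behavior the renormalized operators $A'_{b|y}:={\sf C}_{b|y}(\CC_1)[A]$ form a valid POVM system; plugging them into the $\CC_2$-functional, maximality forces this composite to self-test $(\ket{\bar\psi(\CC_2)},\bar A(\CC_2))$ in the standard sense. Then I would unwind the self-test of the outer complex: the standard self-testing of $\CC_2$ gives isometries on legs $1,\ldots,q$ of $\CC_2$, and in particular an isometry $U'_k$ on the ``effective'' $k$-th party that conjugates $A'_{b|y}$ to $\bar M_{b|y}(\CC_1)=\bar A^{(k)}(\CC_2)$. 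Now invoke that $\CC_1$ is \emph{fully} self-testing: since $A'_{b|y}=V(\CC_1){\sf C}(\CC_1)[A]V(\CC_1)^\dagger$-type objects are forced to behave as projective measurements $\bar M(\CC_1)$ after the isometry, the full-self-testing hypothesis on $\CC_1$ (applied with the Bell functional \eqref{eq:fst_bell_exist} for $\CC_1$, whose maximization is implied) yields the component-wise correspondence \eqref{st_corresp} for the $p$ legs feeding into $\CC_1$, with reference state $\ket{\bar\psi(\CC_1)}$ and the identification $\bar M(\CC_1)\leftrightarrow$ the effective party's operators. Composing the two isometry layers — exactly as the coisometry composition \eqref{new_coiso} dictates — gives \eqref{st_corresp} for all $p+q-1$ legs of $\DD$, with the reference state being the one obtained by applying $V^\dagger(\CC_1)$ to the $k$-th tensor factor of $\ket{\bar\psi(\CC_2)}$, which is precisely \eqref{final_state}. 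One checks $\ket{\bar\psi(\DD)}\in I(\DD)$ because $V(\DD)^\dagger$ of the corresponding $\bar M(\DD)=\bar M(\CC_2)$-eigenvector equals, by \eqref{new_coiso}, $(\bigotimes_{j<k}\id_j\otimes V^\dagger(\CC_1)\otimes\bigotimes_{j>k}\id_j)V(\CC_2)^\dagger\ket{\phi}$, and $V(\CC_2)^\dagger\ket{\phi}=\ket{\bar\psi(\CC_2)}$ lies in $I(\CC_2)$ — here I would use Remark~\ref{remark:st_tensor_pol} to express everything in terms of tensor polynomials acting on a seed vector.

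For the cyclicity condition \eqref{cyclicity} on $\DD$, I would argue that $\{f[{\sf C}(\DD)[\bar A(\DD)]]\ket{\bar\psi(\DD)}:f\text{ tensor poly}\}$ spans $I(\DD)$. Using the composition structure, ${\sf C}(\DD)[\bar A(\DD)]$ restricted to the subspace $I(\DD)$ acts (via $V(\DD)$) as $\bar M(\DD)=\bar M(\CC_2)$ on $H(\CC_2)$; by the cyclicity of $\CC_2$, tensor polynomials in $\bar M(\CC_2)$ applied to $V(\CC_2)\ket{\bar\psi(\CC_2)}$ generate all of $H(\CC_2)=H(\DD)$. Pulling back through $V(\DD)^\dagger$, and using that $V(\DD)^\dagger H(\CC_2)=I(\DD)$ with $\ket{\bar\psi(\DD)}=(\ldots\otimes V^\dagger(\CC_1)\otimes\ldots)\ket{\bar\psi(\CC_2)}$ mapping correctly, gives the claim; the $\CC_1$-cyclicity is what ensures the $V^\dagger(\CC_1)$-direction is also exhausted, so no part of $I(\DD)$ inside the $\CC_1$-block is missed. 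I expect the main obstacle to be the careful bookkeeping of which isometry acts on which tensor factor and ensuring the two self-testing layers compose \emph{component-wise} rather than merely up to an overall isometry — this is exactly where the strengthened Definition~\ref{def:FST} (and Remark~\ref{remark:st_tensor_pol}, letting us track individual basis components $\ket{i_1,\ldots}$ through the composition) does the heavy lifting, and where a naive argument using only standard self-testing would break down because the ``junk'' registers of the inner self-test could otherwise entangle with the outer legs in an uncontrolled way.
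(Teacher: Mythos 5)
Your high-level architecture matches the paper's: unwind the outer self-test of $\CC_2$ on the composite functional $\sum_{b,y}\mu_{b|y}(\CC_2){\sf C}_{b|y}(\DD)[A]$ (written as the $\CC_2$-functional evaluated on the renormalized system $B^{(k)}={\sf C}(\CC_1)[A]$), then invoke the self-test of $\CC_1$, and compose the isometries according to \eqref{new_coiso}. However, there is a genuine gap at the pivot between the two layers. You write that the maximization of the $\CC_1$ Bell functional \eqref{eq:fst_bell_exist} ``is implied'', but this is exactly the nontrivial content of the proof: that functional acts only on the $p$ parties feeding into $\CC_1$, while the maximizing object you have in hand is the $(p+q-1)$-partite state $\ket{\psi}$; maximality of the composite functional does not by itself hand you a state of those $p$ parties on which $\sum_{b,y}\mu_{b|y}(\CC_1){\sf C}_{b|y}(\CC_1)[A]$ attains the value $1$. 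The paper manufactures such a state explicitly: from the component-wise outer self-test it extracts the local vectors $\ket{\psi^{(k)}_j}=U_k^\dagger\ket{j}\ket{\mbox{junk}}$ and forms $\ket{\tilde{\phi}}=\sum_j\braket{j}{\check{\phi}}\ket{\psi^{(k)}_j}$ with $\ket{\check{\phi}}=V(\CC_1)\ket{\bar{\psi}(\CC_1)}$, then proves by a chain of identities (using $\bar{M}_{b|y}(\CC_1)=\bar{B}^{(k)}_{b|y}$ and the transfer relation $U_k g[B^{(k)}]\ket{\psi^{(k)}_j}=g[\bar{B}^{(k)}]\ket{j}\ket{\mbox{junk}}$) that $\bra{\tilde{\phi}}\sum_{b,y}\mu_{b|y}{\sf C}_{b|y}(\CC_1)[A]\ket{\tilde{\phi}}=1$. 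Only then can the full self-testing hypothesis on $\CC_1$ be invoked. Note also that this extraction requires the \emph{full} (POVM-tolerant, component-wise) self-testing of $\CC_2$, not the ``standard sense'' you mention at that point, since $B^{(k)}={\sf C}(\CC_1)[A]$ is in general not projective.

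A second, related omission is the gluing of the two self-tests. After applying the $\CC_1$ self-test to $\ket{\tilde{\phi}}$ you obtain inner basis vectors $\ket{\phi(\vec{l})}$, and you must show that these reassemble into the outer local vectors, i.e., $\ket{\psi^{(k)}_r}=\sum_{\vec{l}}\bra{\vec{l}}V(\CC_1)^\dagger\ket{r}\ket{\phi(\vec{l})}$, before the composite vectors $\ket{\psi(i_1,\dots,\vec{l},\dots,i_q)}$ can be defined and \eqref{state_decomp}, \eqref{st_corresp} verified for $\DD$. This step hinges on the cyclicity condition \eqref{cyclicity} for $\CC_1$ (so that every $\ket{r}\in H(\CC_1)$ is of the form $g_r[\bar{B}^{(k)}]\ket{\check{\phi}}$, which allows the two expressions for $g[{\sf C}[A]]\ket{\tilde{\phi}}$ to be equated). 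In your writeup the cyclicity of $\CC_1$ appears only in the discussion of condition \eqref{cyclicity} for $\DD$ --- where in fact only the cyclicity of $\CC_2$ is needed, since $\bar{M}(\DD)=\bar{M}(\CC_2)$ and $H(\DD)=H(\CC_2)$ --- rather than in the composition step where it is actually load-bearing. The remaining elements of your outline (the choice $\mu(\DD)=\mu(\CC_2)$, the form \eqref{final_state} of the reference state, and the verification of \eqref{reconst_prop} from the two full self-testing hypotheses) are consistent with the paper.
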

The proof of this technical lemma can be found the Appendix~\ref{app:lem_prof}.

Lemma \ref{lemma_st} immediately suggests a simple method to obtain self-testing Bell inequalities, given a few fully self-testing complexes.

\begin{prop}
\label{prop:cong_contr_fst}
Let $\CC^{(n)}$ be the result of congruently contracting $n$ fully self-testing connector complexes. Then, the inequality
\begin{equation}
\langle\mu(\CC^{(n)}){\sf C}(\CC^{(n)})[A(\CC^{(n)})]\rangle\leq 1
\end{equation}
fully self-tests the local operators  $\bar{A}(\CC^{(n)})$ and the tensor network state $\ket{\bar{\psi}(\CC^{(n)})}$.
\end{prop}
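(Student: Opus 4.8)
The plan is to proceed by induction on $n$, the number of fully self-testing connector complexes that are congruently contracted to form $\CC^{(n)}$. The base case $n=1$ is immediate: a single fully self-testing complex $\CC^{(1)}$ satisfies, by definition, that the Bell functional $\sum_{b,y}\mu_{b|y}(\CC^{(1)}){\sf C}^{b|y}(\CC^{(1)})[A]$ is normalized and fully self-tests $(\ket{\bar\psi(\CC^{(1)})},\bar A(\CC^{(1)}))$, and $\ket{\bar\psi(\CC^{(1)})}$ is already a (trivial, one-node) tensor network state by the discussion in Sec.~\ref{sec:tight_CI}. For the inductive step, I would write $\CC^{(n)}$ as the congruent contraction of a smaller network $\CC'$ (the result of contracting $n-1$ of the complexes, which is fully self-testing by the inductive hypothesis) with one more fully self-testing complex $\CC''$, joining the output leg of one of them to an input leg of the other. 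Lemma~\ref{lemma_st} then directly yields that this single additional contraction preserves the fully self-testing property, identifies the coefficients $\mu(\CC^{(n)})=\mu(\CC'')$ (or $\mu(\CC')$, depending on which side is the ``outer'' connector), and gives the explicit formula~\eqref{final_state} for $\ket{\bar\psi(\CC^{(n)})}$ as a local-(co)isometry image of the smaller reference state.

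The second thing to establish is that the resulting self-tested state $\ket{\bar\psi(\CC^{(n)})}$ is indeed the tensor network state whose contraction structure mimics the connector network, rather than merely ``some'' state. This is exactly the content of the composition-of-coisometries argument around eq.~\eqref{new_coiso} and Figure~\ref{fig:conn2tree}: unrolling~\eqref{final_state} through all $n$ contractions expresses $\ket{\bar\psi(\CC^{(n)})}$ as $\big(\bigotimes_{j} V^\dagger(\CC^j)\text{-pattern}\big)\ket{\phi}$, a sequence of local isometries $V^\dagger$ applied in the tree order of the network to a seed eigenvector $\ket{\phi}$ of $\bar M(\CC^{(n)})$. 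Since each green triangle in Fig.~\ref{fig:conn2tree} is precisely such a $V^\dagger$, the assembled object is by definition the tree tensor network state associated with the network, with bond spaces $H_k(\CC^j)$ — so the ``tensor network state'' claim in the Proposition is just this reading of~\eqref{final_state}, combined with Appendix~\ref{app: Tensor network state}. Likewise the self-tested operators $\bar A(\CC^{(n)})$ are, by the bookkeeping in~\eqref{new_coiso}ff., exactly the reference projectors attached to the uncontracted (external) legs of the network.

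I would organize the write-up as: (i) state the induction hypothesis precisely (``$\CC^{(m)}$ is fully self-testing with the stated $\mu$ and with $\ket{\bar\psi(\CC^{(m)})}$ given by the nested coisometry formula''); (ii) invoke Lemma~\ref{lemma_st} for the single added contraction to get the inductive step, noting that congruent contraction of two complexes is again a complex (Sec.~\ref{sec:tight_CI}) so all hypotheses of the Lemma are met; (iii) remark that $\ket{\bar\psi(\CC^{(n)})}\in I(\CC^{(n)})$ and that the cyclicity/reconstruction condition~\eqref{cyclicity} is inherited — this should follow from the fact that $V(\CC^{(n)})$ is the composition~\eqref{new_coiso} of the constituent coisometries and each constituent complex satisfies~\eqref{cyclicity}; (iv) conclude by reading off that the self-tested state is the tree tensor network state and the self-tested operators are the external-leg reference projectors. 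The main obstacle I anticipate is step (iii): verifying that~\eqref{cyclicity} composes under congruent contraction — i.e., that the reconstruction property of the big complex follows from those of the pieces — is the one place the argument is not a purely formal rewriting, since one must check that polynomials in the contracted renormalized measurements $M(\CC^{(n)})$ still generate all of $H(\CC^{(n)})$. It is plausible this is already handled inside the proof of Lemma~\ref{lemma_st} (which, being ``fully self-testing,'' must carry~\eqref{cyclicity} along), in which case the Proposition is a clean induction and the only real work is the careful unrolling of~\eqref{final_state} into the explicit tensor-network form.
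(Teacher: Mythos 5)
Your proposal is correct and follows essentially the same route as the paper, whose proof is simply the (implicit) induction you spell out: congruent contraction of fully self-testing complexes is again fully self-testing by Lemma~\ref{lemma_st}, applied one contraction at a time. Your worry in step (iii) about the cyclicity condition~\eqref{cyclicity} composing is indeed already discharged inside the proof of Lemma~\ref{lemma_st} (Appendix~\ref{app:lem_prof}), and the tensor-network reading of the self-tested state is exactly the unrolling of~\eqref{final_state} via Appendix~\ref{app: Tensor network state}, so no additional work is needed.
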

\begin{proof}
Since $\CC^{(n)}$ is the result of congruently contracting fully self-testing complexes, by Lemma \ref{lemma_st}, $\CC^{(n)}$ is also fully self-testing.
\end{proof}

To conclude, the results of this section (in particular Propositions~\ref{prop:dicho_FST} and~\ref{prop:cong_contr_fst}) provide a simple recipe for the construction of arbitrarily large fully self-testing connector complexes, simply by making congruent contractions of smaller building blocks, such as the complexes described in Sec.~\ref{sec:firstcomplexes}.

\section{Examples}
\label{sec:examples}

\begin{figure}[h]
    \centering
    \begin{subfigure}[b]{\textwidth}
    \includegraphics[width=0.6\linewidth]{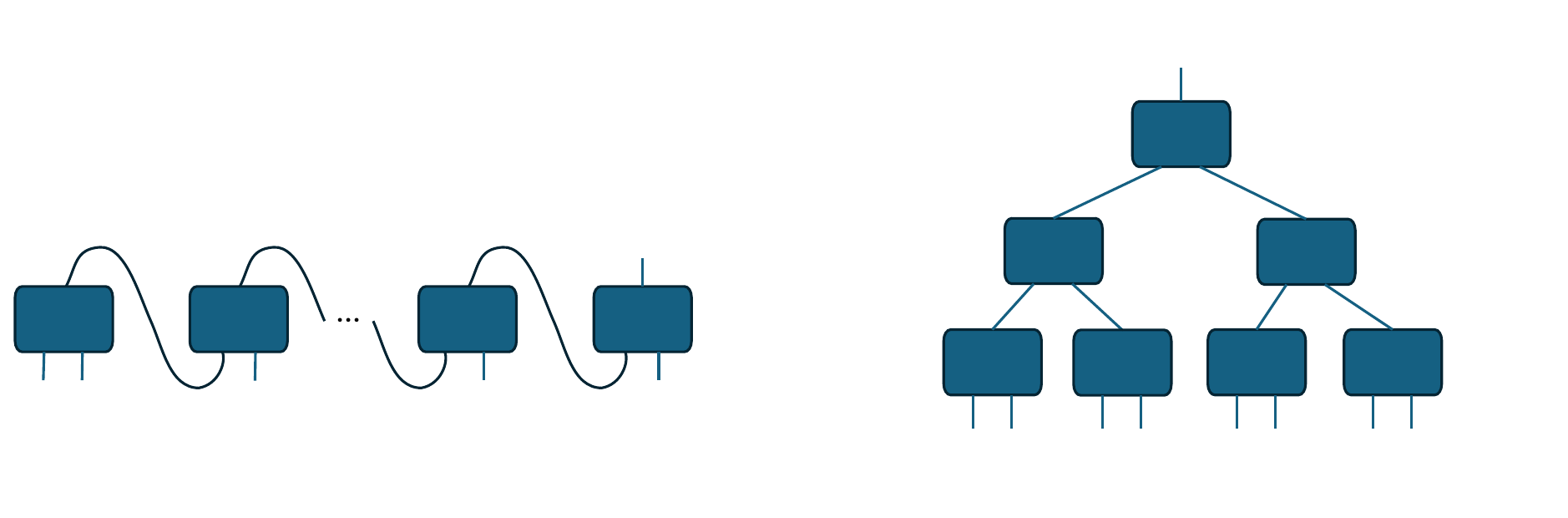}
    \caption{Simple MPS-like (left) 
    and binary-tree (right) geometries of contraction for $2\rightarrow 1$ connectors.}
    \label{fig:simp_geo_exa}
    \end{subfigure}
    \\
    \begin{subfigure}[b]{\textwidth}
    \includegraphics[width=0.6\linewidth]{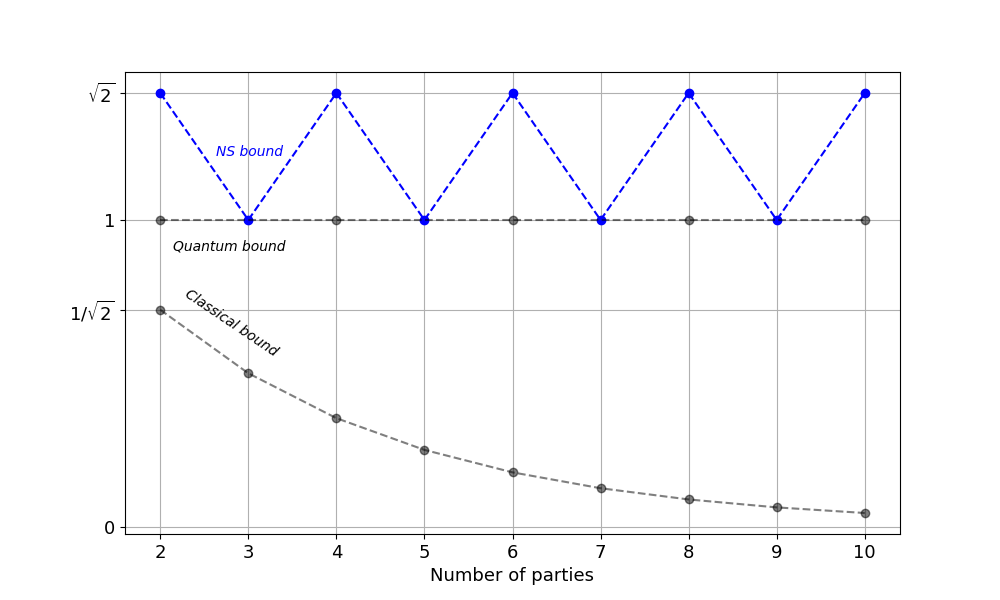} 
    \caption{Maximum values for the Bell functional resulting from a MPS-like geometry (Fig.~\ref{fig:simp_geo_exa} (left)) of Tsirelson connector complexes, which is normalized by construction to 1 for quantum behaviours. 
    The exponential decay in size of the classical bound is a property of any network based on XOR-type complexes (cf. Sec.~\ref{sec:XOR_expo}), whereas the maximum value attained by general no-signalling behaviours remains bounded by $\sqrt{2}$.}
    \label{fig:tsirel_plot}
    \end{subfigure}
    \caption{Contraction of $\CC_{\rm Tsi}$ complexes for $N$-partite fully self-testing quantum Bell inequalities.}
    \label{fig:mps_tsirelson}
\end{figure}

In his seminal paper~\cite{mermin1990extreme}, Mermin proposed a $N$-partite dichotomic Bell inequality that has an exponential quantum violation, tailored to the GHZ state. More specifically, Mermin's inequality features a $2^{(N-1)/2}$-fold quantum violation for odd $N$ and $2^{(N-2)/2}$ for even $N$.
Using a variant of the same argument, Ardehali~\cite{ardehali1992bell} proposed a similar Bell inequality with a quantum violation of factor $2^{(N-1)/2}$ for even $N$ and $2^{(N-2)/2}$ for odd $N$.
In Ardehali's paper one can glimpse the seed of a recursive argument, similar to the one exploited shortly after, in~\cite{Belinskii1993interference}, where Belinskii and Klyshko (BK) proposed a recursive variation of CHSH that features a $2^{(N-1)/2}$ violation for all $N$. It is not difficult to see that the BK inequality is the result of nesting into each other $N-1$ CHSH inequalities, corresponding ot the contraction of $N-1$ Tsirelson connectors $\mathcal{C}_{\rm Tsi}$, according to the MPS-like geometry shown in Fig.~\ref{fig:simp_geo_exa}. In Fig.~\ref{fig:tsirel_plot} we plot the corresponding maximal values of the resulting inequality increasing $N$, which is by construction normalized to $1$ for quantum behaviours. The maximum value for local behaviours corresponds to $2^{-(N-1)/2}$, as from~\cite{Belinskii1993interference} and in agreement with the results of Sec.~\ref{sec:XOR_expo}. More in general, due to CHSH being a XOR inequality (cf. Sec.~\ref{sec:XOR_expo}) with $\gamma(\mathcal{C}_{\rm Tsi})=\sqrt{2}$, the same local bound holds for any geometry of a connector network of $\mathcal{C}_{\rm Tsi}$ with a single output leg, thanks to Proposition~\ref{prop:XOR_ineq}. Interestingly, whereas the quantum-classical ratio increases exponentially with $N$, the maximal no-signalling value of the inequality can be seen (Fig.~\ref{fig:tsirel_plot}) to oscillate between $\sqrt{2}$ (PR-box violation) and $1$ (no violation).
Notice that our results and in particular Proposition~\ref{prop:cong_contr_fst} imply that the BK inequality fully self-tests the underlying optimal state and operators.
Clearly our connect-network framework can be used to generalise the inequality by contracting Tsirelson connectors $\mathcal{C}_{\rm Tsi}$ in any geometry, for example the binary tree shown in Fig.~\ref{fig:simp_geo_exa}.

\begin{figure}[h]
    \centering
    \begin{subfigure}[b]{\textwidth}
    \includegraphics[width=0.25\linewidth]{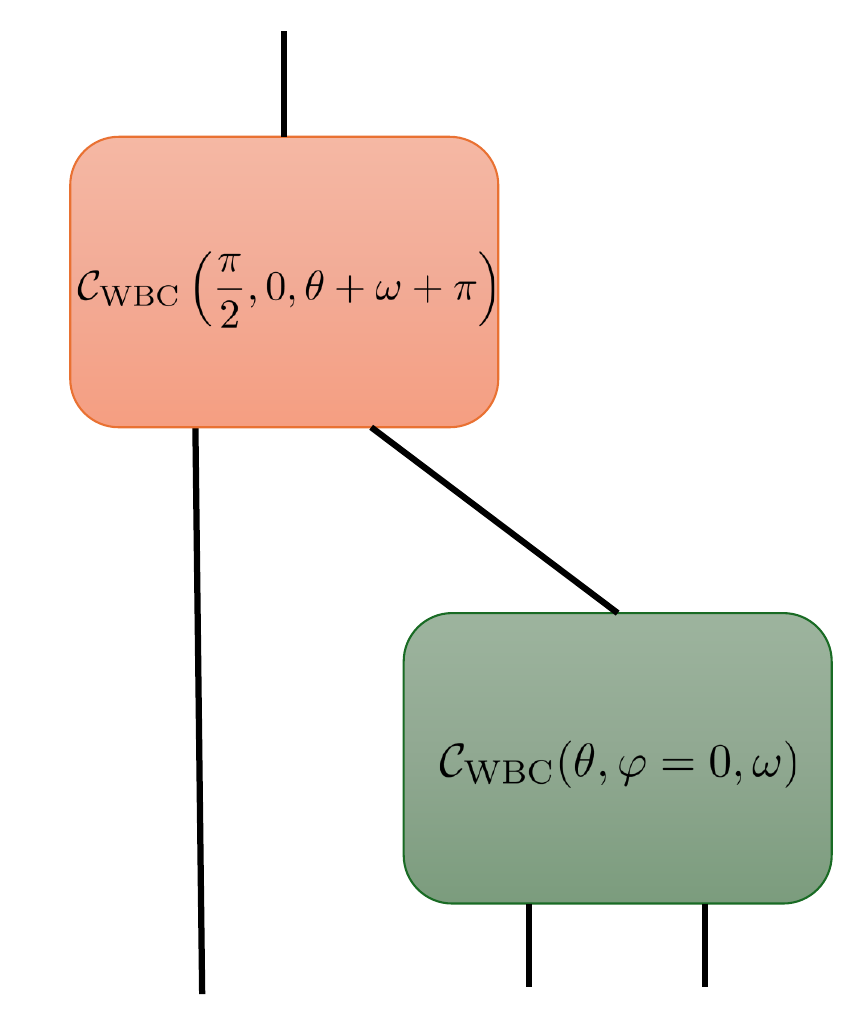}\qquad\;\;
     \includegraphics[width=0.6\linewidth]{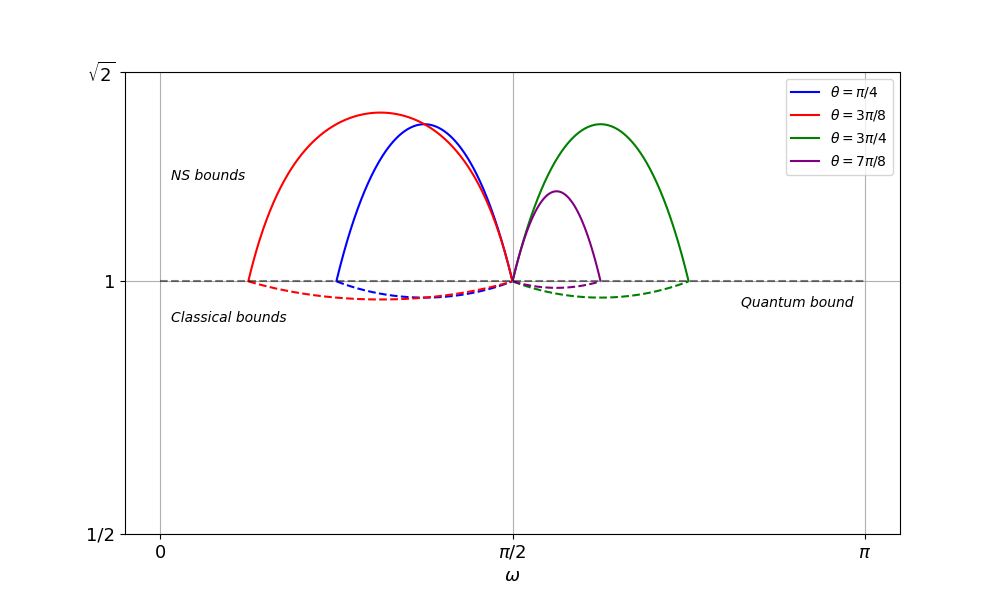}
    \caption{3-partite connector complex built via the congruent contraction of two $\CC_{\rm WBC}$.}
    \label{fig:ex_angles_3p}
    \end{subfigure}\\
    \begin{subfigure}[b]{\textwidth}
    \includegraphics[width=0.3\linewidth]{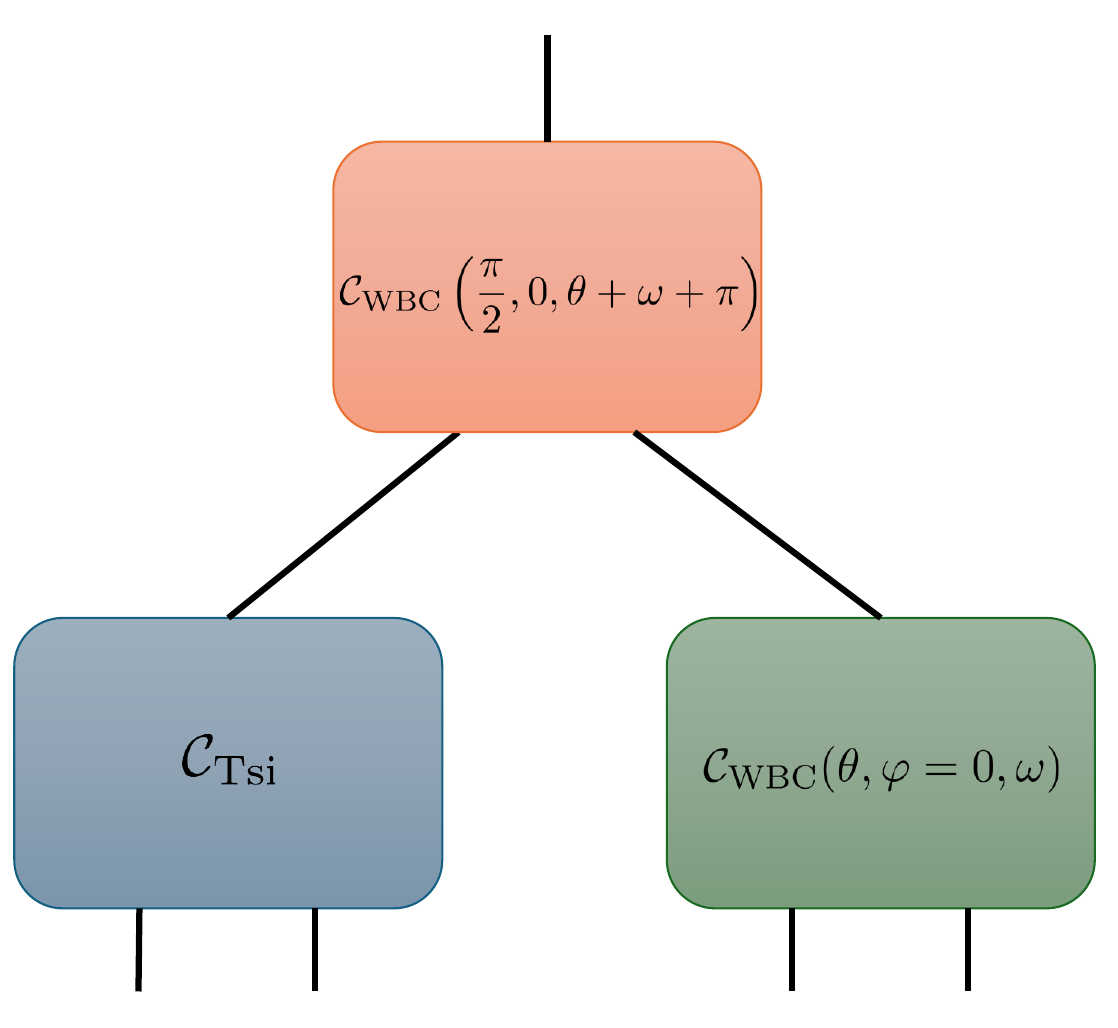}
    \includegraphics[width=0.6\linewidth]{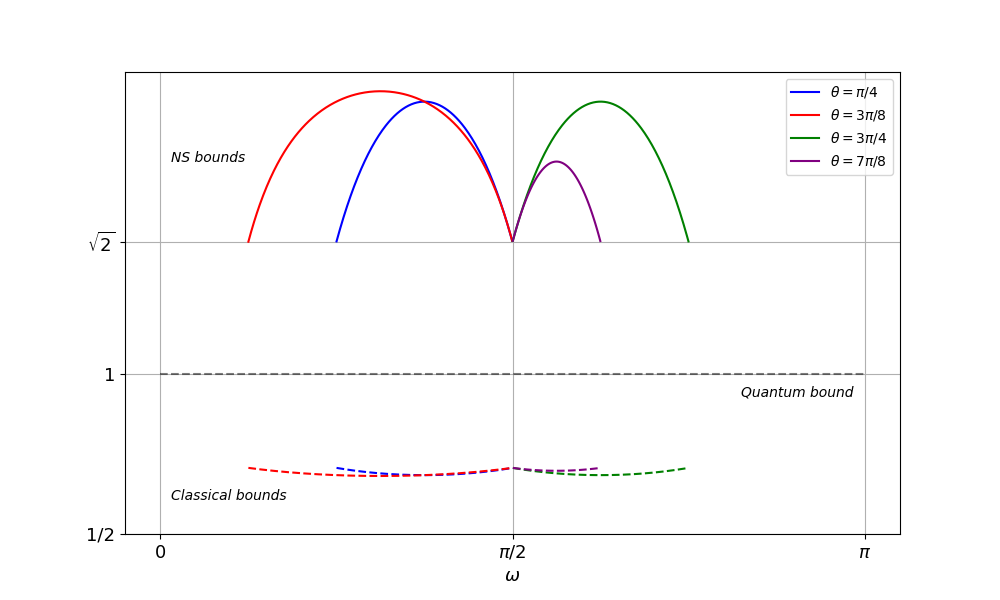}
    \caption{4-partite connector complex obtained appending an additional $\CC_{\rm Tsi}$ to the above~\ref{fig:ex_angles_3p}. The gap among classical, quantum and general no-signalling behaviours increases in this case.}
     \label{fig:ex_angles_4p}
    \end{subfigure}
    \caption{Congruent contraction of WBC complexes and corresponding maximal values on the associated inequalities. The no-signalling and classical maximal values are obtained numerically solving the corresponding linear program. For classical behaviours, this maximization yields the same bound as Proposition~\ref{prop:XOR_ineq} in both configurations.}
    \label{fig:angle_examples}
\end{figure}

As an additional application of the framework presented in this paper, we show in Figure~\ref{fig:angle_examples} numerical results for the Bell inequalities associated to the congruent contraction of $\CC_{\rm WBC}$ connectors, in a 3-partite configuration~\ref{fig:ex_angles_3p} and a similar $4$-partite configuration~\ref{fig:ex_angles_4p}  built including a $\CC_{\rm Tsi}$ connector. We plot the corresponding maximal values of the resulting Bell functional for classical and no-signalling behaviours, for different choices of parameters satisfying congruence of contraction and the consistency condition $\cos(\theta + \varphi)\cos(\varphi)\cos(\theta + \omega)\cos(\omega) < 0$ (cf.~\cite{wooltorton2023device}).


\section{Conclusion}
In this paper, we introduced the notion of tightness in connector theory. In a nutshell, a $q\to 1$ normalized connector is tight if all its associated $q$-partite quantum Bell inequalities can be saturated with the same measurement operators. This requirement makes the congruent contraction of any number of tight connectors result in another tight connector, from which one can derive new, multipartite tight quantum Bell inequalities. Moreover, such inequalities, which are maximally violated by tensor network states \cite{tensor_network_states}, can be shown self-testing if each of the tight connectors involved satisfies an extra, verifiable property, which we dubbed full self-testing.

To our surprise, instances of tight, fully self-testing connectors can be constructed from the oldest quantum Bell inequality: Tsirelson's bound \cite{tsirelson_bound}. As we showed, recently discovered self-testing quantum Bell inequalities \cite{acin2012randomness, mckague2014self, wooltorton2023device} can be similarly used to build families of tight, fully self-testing connectors. Moreover, one such family generates, by contraction, XOR Bell functionals for which the ratio between the maximum quantum and classical values increases exponentially with the system size. All in all, the identified connectors allow one to construct analytic families of quantum Bell inequalities with which one can self-test arbitrary tree tensor network states of both bond and local dimension $2$, as well as arbitrary pairs of local qubit projectors.

For future research, it would be interesting to extend our results to unnormalized connectors, which in \cite{connectors} were proven advantageous to detect certain types of supra-quantum nonlocality. A similar generalization of our concepts to $q\to p$ connectors would provide us with new tools to construct multi-partite quantum Bell inequalities, although it is worth noting that $q\to 2$ connectors that do not reduce to the contraction of several $p\to 1$ connectors (or convex combinations thereof) are not even known to exist \cite{connectors}. Finally, now that we have large families of tight connectors, the next logical step is exploring their potential to generate \emph{local} quantum Bell inequalities, in the manner sketched in \cite{Kull_2024}. 

\section*{Acknowledgements}
\begin{minipage}[l]{0.8\textwidth}
    This project was funded within the QuantERA II Programme that has received funding from the European Union's Horizon 2020 research and innovation programme under Grant Agreement No 101017733, and from the Austrian Science Fund (FWF), projects I-6004 and ESP2889224 as well as Grant No. I 5384.
\end{minipage}
\qquad
\begin{minipage}[r]{0.2\textwidth}
    \includegraphics[width=2cm]{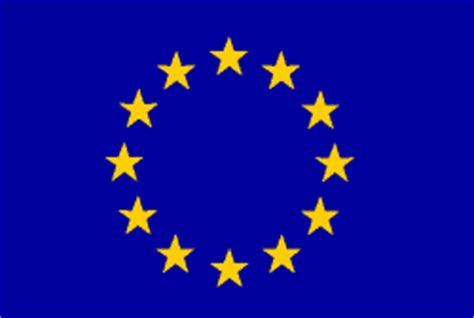} 
\end{minipage}


\bibliography{references}
\newpage

\begin{appendix}
\section{The invariant subspace of a connector complex}
\label{app:invariant}
Let $\Lambda$ satisfy $\id\geq\Lambda\geq 0$, let $V$ be a co-isometry such that $\Pi:=V\Lambda V^\dagger$ is a projector and call $I$ the range of $V^\dagger$. We next prove that $\Lambda I\subset I$.

Let $\{\ket{\phi^0_k}\}_k$, $\{\ket{\phi^1_k}\}_k$ respectively represent bases for the eigenvectors of $\Pi$ with eigenvalues $0$ and $1$. Then we have that $I={\sf{Span}}\{V^\dagger\ket{\phi^j_k}:j=0,1, k\}$. Now, for arbitrary $k$ we have that
\begin{equation}
\bra{\phi^0_k}V\Lambda V^\dagger\ket{\phi^0_k}=\bra{\phi^0_k}\Pi\ket{\phi^0_k}=0.
\end{equation}
Since $\Lambda\geq 0$, this implies that $\Lambda V^\dagger\ket{\phi^0_k}=\sqrt{\Lambda}\sqrt{\Lambda} V^\dagger\ket{\phi^0_k}=0\in I$, for all $k$. Similarly, from $\id-\Lambda\geq 0$ and the relation
\begin{equation}
\bra{\phi^1_k}V(\id-\Lambda) V^\dagger\ket{\phi^1_k}=0,
\end{equation}
it follows that $\Lambda V^\dagger\ket{\phi^1_k}=V^\dagger\ket{\phi^1_k}\in I$, for all $k$. Thus, $\Lambda V^\dagger \ket{\phi^j_k}\in I$, for all $j,k$ and so $\Lambda I\subset I$.

\section{Connector networks and their corresponding tensor network states}
\label{app: Tensor network state}
In this appendix we clarify the duality between any network of tight connectors and its corresponding optimal state $\ket{\psi}$, which turns up to be a tensor network state according to the same network topology, as outlined in the main text Sec.~\ref{sec:tight_CI}.

To start, consider for simplicity the case of two connectors, a $p\rightarrow 1$ connector $\CC_1$ and a $q\rightarrow 1$ connector $\CC_2$ as represented in Fig.~\ref{fig:basic_contraction}.

As detailed in the main text the co-isometry of the resulting global connector is given by
\begin{equation}
V=V(\CC_2)\left(\bigotimes_{j<k}\id_j\otimes V(\CC_1)\otimes \bigotimes_{j>k}\id_j\right),
\end{equation}
and the optimal state $\ket{\psi}=V^\dagger\ket{\phi}$, where $\phi$ is an eigenvector of $M(\CC_2)$.
Therefore
\begin{align}
    \ket{\psi}=\left(\bigotimes_{j<k}\id_j\otimes V^\dagger(\CC_1)\otimes \bigotimes_{j>k}\id_j\right)V^\dagger(\CC_2)\ket{\phi}
\end{align}
which, in the computational basis, reads:
\begin{align}
    \ket{\psi}=\sum_{i_1,\dots,i_{k-1},j_1,\dots,j_p,i_{k+1},\dots,i_{q}}
    \sum_{i_k,l} V^\dagger(\CC_1)^{i_k}_{j_1,\dots,j_p}V^\dagger(\CC_2)^{l}_{i_1,\dots,i_q}\phi_l\ket{i_1,\dots,i_{k-1},j_1,\dots,j_p,i_{k+1},\dots,i_{q}}\;.
\end{align}
By definition, this is the tensor network state arising from the corresponding contraction of the two tensors $ V^\dagger(\CC_1)^{i_k}_{j_1,\dots,j_p} $ and $ \sum_{l} V^\dagger(\CC_2)^{l}_{i_1,\dots,i_q}\phi_l$.

Clearly, this construction can be generalized to any network, resulting in the network state represented by Fig.~\ref{fig:conn2tree}

\paragraph*{Matrix product states.}
As an example, it is insightful to consider the network depicted in Fig.~\ref{fig:simp_geo_exa}(left), corresponding to various tensors that are sequentially fed into each other. It is not difficult to realize that in such case the optimal quantum state is of the form 
\begin{align}
 \ket{\psi}=\sum_{i_1,...,i_{m+1}}\sum_{o_1,\dots,o_m} 
 V^\dagger(\CC_1)^{o_1}_{i_{1},i_{2}} V^\dagger(\CC_2)^{o_2}_{o_1,i_{3}}
 \dots 
 V^\dagger(\CC_{m})^{o_{m}}_{o_{m-1},i_{m+1}}\phi_{o_m}\ket{i_1,...,i_{m+1}},
\end{align}
which, in standard tensor network notation \cite{MPS}, corresponds to a matrix product state (MPS) generated by the matrices
\begin{align}
&T^{[0]}_i:=\bra{i},\nonumber\\
&(T^{[k]}_i)_{\alpha,\beta}:=(\bra{\alpha}\otimes \bra{i})V^\dagger(\CC_k)\ket{\beta}, \mbox{ for } k=1,...,m-1,\nonumber\\
&T^{[m+1]}_i:=(\id\otimes \bra{i})V^\dagger(\CC_{m})\ket{\phi}.
\end{align}

\section{Promoting standard to full self-testing results}
\label{sec:qubit_selftest_lemmas}
In this Appendix, we will prove Proposition \ref{prop:dicho_FST}, which provides sufficient conditions to extend a standard self-testing claim to a full self-testing result.



To do so, we will need the following lemma.
\begin{lemma}
\label{lem:state_to_compo-wise}
Let $\bar{A}$, be a $q$-partite measurement system of projective dichotomic measurements, and let $\ket{\bar{\psi}}\in \bigotimes_{j=1}^qH_j$ be a normalized quantum state, such that condition (\ref{gen_Hilbert_space}) is satisfied for all $x,k$. Let $U_1,...,U_q$ be local isometries and suppose that, for some (not necessarily projective) $q$-partite measurement system $A$ and some $q$-partite state $\ket{\psi}$, the identity
\begin{equation}
\label{eqapp:state_selftest}
    U_1\otimes...\otimes U_q \bigotimes_k A^{(k)}_{a_k|x_k}\ket{\psi}=\bigotimes_k \bar{A}^{(k)}_{a_k|x_k}\ket{\bar{\psi}}\ket{\mbox{junk}},
\end{equation}
holds for all $a_1,...,a_q,x_1,...,x_q$. 

Then, there exists an orthonormal set of states $\{\ket{\psi(i_1,...,i_q)}:\vec{i}\}$ such that eqs.~\eqref{state_decomp} and~\eqref{st_corresp} are satisfied.
\end{lemma}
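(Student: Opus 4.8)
The plan is to prove Lemma~\ref{lem:state_to_compo-wise} by exploiting condition~\eqref{gen_Hilbert_space} to build the orthonormal states $\ket{\psi(\vec i)}$ explicitly out of the data $\{A^{(k)}_{a_k|x_k}\ket{\psi}\}$ on the ``physical'' side, with the reference relation~\eqref{eqapp:state_selftest} serving as a dictionary. Concretely, since $\bar A^{(k)}$ are projective dichotomic measurements, $\{\bar A^{(k)}_{0|x},\bar A^{(k)}_{1|x}\}$ together with $\id^{(k)}$ span (for each fixed $x$ and $k$) the same operator space as the correlator $\bar K^{(k)}_x$ and $\id^{(k)}$; by~\eqref{gen_Hilbert_space}, for each $k$ the vectors $\bar A^{(k)}_{a|x}\otimes f[\{\bar A^{(j)}:j\neq k\}]\ket{\bar\psi}$ (ranging over $a$ and degree-$1$ tensor polynomials $f$) span $\bigotimes_l H_l$. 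The first step is to extract from this spanning family a basis of $\bigotimes_l H_l$ labeled by product indices $\ket{\vec i}$, or more precisely to argue that one can pick orthonormal $\ket{i_k}\in H_k$ such that $\ket{\vec i}=\ket{i_1,\dots,i_q}$ and each $\ket{i_k}$ is obtained from $\ket{\bar\psi}$ by applying a local operator polynomial in the $\bar A^{(k)}$ (after tracing out, heuristically). For qubit $H_k$ this is immediate: any rank-one projector $\proj{i_k}$ is a real-linear combination of $\id,\bar A^{(k)}_{0|0},\bar A^{(k)}_{0|1}$ provided the two measurement directions are distinct, which is guaranteed because otherwise~\eqref{gen_Hilbert_space} would fail.

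The second step is to \emph{define} $\ket{\psi(\vec i)}$ by transporting these operator identities through~\eqref{eqapp:state_selftest}. Writing $\proj{i_k}=\sum_{a,x}\lambda^{(k)}_{i_k,a,x}\bar A^{(k)}_{a|x}$ (with $x$ ranging over the augmented set including $\star$, so that $\id^{(k)}=\bar A^{(k)}_{\star}$ is allowed, and the $\lambda$'s real), I would set
\begin{equation}
\ket{\psi(i_1,\dots,i_q)}:=\Big(\prod_{k=1}^q\sum_{a_k,x_k}\lambda^{(k)}_{i_k,a_k,x_k}A^{(k)}_{a_k|x_k}\Big)\ket{\psi},
\label{eq:plan_def_psi_i}
\end{equation}
where the product over $k$ is an honest tensor product of the local operators, hence order-independent. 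Applying $U_1\otimes\dots\otimes U_q$ and using~\eqref{eqapp:state_selftest} term by term (multilinearity in the $A^{(k)}_{a_k|x_k}$) gives
\begin{equation}
U_1\otimes\dots\otimes U_q\ket{\psi(\vec i)}=\bigotimes_k\Big(\sum_{a_k,x_k}\lambda^{(k)}_{i_k,a_k,x_k}\bar A^{(k)}_{a_k|x_k}\Big)\ket{\bar\psi}\ket{\mbox{junk}}=\proj{i_1}\!\otimes\dots\otimes\proj{i_q}\ket{\bar\psi}\ket{\mbox{junk}}=\braket{\vec i}{\bar\psi}\,\ket{\vec i}\ket{\mbox{junk}}.
\label{eq:plan_transport}
\end{equation}
From~\eqref{eq:plan_transport} two things follow at once: taking $\bra{\psi(\vec i)}\psi(\vec j)\rangle$ and inserting $U^\dagger U=\id$ (each $U_k$ an isometry) shows $\braket{\psi(\vec i)}{\psi(\vec j)}=\delta_{\vec i\vec j}|\braket{\vec i}{\bar\psi}|^2/\||\,\cdot\,\|$-type expression — so after normalizing those $\ket{\psi(\vec i)}$ for which $\braket{\vec i}{\bar\psi}\neq0$ one gets an orthonormal set (the ones with $\braket{\vec i}{\bar\psi}=0$ do not contribute and can be completed arbitrarily); and summing~\eqref{eq:plan_transport} over $\vec i$ recovers $\ket{\psi}$ in the form~\eqref{state_decomp} after peeling off $U^\dagger$, since $\sum_{\vec i}\proj{i_1}\otimes\dots=\id$. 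Finally, to get~\eqref{st_corresp} I would apply $A^{(k)}_{a|x}$ (for the original, possibly non-projective measurements) on the left of the definition~\eqref{eq:plan_def_psi_i}, commute it past the other tensor factors, and use~\eqref{eqapp:state_selftest} once more to land on $\bar A^{(k)}_{a|x}\ket{\vec i}\ket{\mbox{junk}}$ — this is essentially the same computation as~\eqref{eq:plan_transport} with one extra operator inserted.

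The main obstacle I anticipate is not any single computation but the \emph{well-definedness and orthonormality bookkeeping}: condition~\eqref{gen_Hilbert_space} guarantees a spanning set, but I must extract from it an honest \emph{product} basis $\ket{i_1}\otimes\dots\otimes\ket{i_q}$ whose factors are each reachable by a \emph{local} degree-$1$ polynomial in $\bar A^{(k)}$; for general $H_k$ this requires a small argument (the degree-$1$ local polynomials in $\bar A^{(k)}$ must already span $B(H_k)$, which one deduces from~\eqref{gen_Hilbert_space} by fixing the other factors and a partial-trace/Schmidt argument), whereas for the qubit case of interest it is elementary. A secondary subtlety is handling $\vec i$ with $\braket{\vec i}{\bar\psi}=0$: there the normalization of~\eqref{eq:plan_def_psi_i} degenerates, so one simply defines $\ket{\psi(\vec i)}$ on that complement to be any orthonormal completion consistent with~\eqref{st_corresp} — which is possible because~\eqref{eq:plan_transport} already pins down $U\ket{\psi(\vec i)}$ to lie in the (isometric image of the) appropriate subspace. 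Once these points are dispatched, everything else is linear algebra driven by~\eqref{eqapp:state_selftest}.
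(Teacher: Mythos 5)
There are two genuine gaps in your plan, both traceable to the decision to build $\ket{\psi(\vec i)}$ from a \emph{product of local} operator decompositions $\proj{i_k}=\sum_{a,x}\lambda^{(k)}_{i_k,a,x}\bar A^{(k)}_{a|x}$. First, your transport equation gives $U\ket{\psi(\vec i)}=\braket{\vec i}{\bar\psi}\ket{\vec i}\ket{\mbox{junk}}$, so $\ket{\psi(\vec i)}=0$ whenever $\braket{\vec i}{\bar\psi}=0$. You propose to complete the family arbitrarily on that set, but \eqref{st_corresp} must hold for \emph{all} $\vec i$, and its right-hand side $\bigotimes_k\bar A^{(k)}_{a_k|x_k}\ket{\vec i}\ket{\mbox{junk}}$ is generically nonzero even when $\braket{\vec i}{\bar\psi}=0$; an arbitrary completion cannot satisfy it, and \eqref{eq:plan_transport} does not ``pin down'' anything for those $\vec i$ since both sides vanish. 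The paper avoids this by reading \eqref{gen_Hilbert_space} differently: it supplies, for each $\vec i$, a \emph{degree-$1$ tensor polynomial} (a sum of products across sites, cf.~\eqref{identity_vectis}) with $\big(\bar A^{(k)}_{0|x}f^0+\bar A^{(k)}_{1|x}f^1\big)\ket{\bar\psi}=\ket{\vec i}$ exactly, with unit coefficient — something a rank-one product operator acting on $\ket{\bar\psi}$ can never achieve.

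Second, your route to \eqref{st_corresp} — ``apply $A^{(k)}_{a|x}$ on the left and use \eqref{eqapp:state_selftest} once more'' — is not available: the hypothesis only controls degree-$1$ monomials $\bigotimes_kA^{(k)}_{a_k|x_k}\ket{\psi}$, whereas your computation produces same-site products $A^{(k)}_{a|x}A^{(k)}_{a'|x'}\ket{\psi}$. The paper must first \emph{prove} that the actual (possibly non-projective) operators act effectively projectively on $\ket{\psi}$, namely $(A^{(k)}_{a|x})^2\ket{\psi}=A^{(k)}_{a|x}\ket{\psi}$ and $A^{(k)}_{a|x}A^{(k)}_{\bar a|x}\ket{\psi}=0$ (eqs.~\eqref{eq:A0A0},~\eqref{eq:A0A1}), and — crucially — it arranges the construction \eqref{def_psi_xk} so that only operators of a \emph{single fixed setting} $x$ appear at site $k$, so that every same-site product arising is of the same-setting type controlled by those relations. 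Your decomposition of $\proj{i_k}$ necessarily mixes settings ($\id$, $\bar A^{(k)}_{0|0}$, $\bar A^{(k)}_{0|1}$), so applying $A^{(k)}_{a|x}$ generates cross-setting products $A^{(k)}_{a|x}A^{(k)}_{a'|x'}\ket{\psi}$ with $x\neq x'$ that neither the hypothesis nor effective projectivity controls. (A minor additional issue: even for qubits, $\id$, $\bar A^{(k)}_{0|0}$, $\bar A^{(k)}_{0|1}$ span only a three-real-dimensional slice of the Hermitian operators, so only projectors in the plane of the two measurement directions are reachable; this is repairable by a choice of basis but is glossed over.) The overall strategy of transporting operator identities through \eqref{eqapp:state_selftest} is the right one, but the specific construction must be changed to the paper's single-setting, tensor-polynomial form for the argument to close.
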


\begin{proof}
In the following, for ease of reading we will sometimes drop the tensor products when this does not lead to misunderstandings. 

For $x,k$ we have that
\begin{align}
&\bra{\psi}(A^{(k)}_{a|x})^2\ket{\psi}=\bra{\psi}A^{(k)}_{a|x}U^\dagger UA^{(k)}_{a|x}\ket{\psi}\nonumber\\
&\overset{(\ref{eqapp:state_selftest})}{=}\bra{\bar{\psi}}(\bar{A}_{a|x}^{(k)})^2\ket{\bar{\psi}}=\bra{\bar{\psi}}\bar{A}_{a|x}^{(k)}\ket{\bar{\psi}}\nonumber\\
&\overset{(\ref{eqapp:state_selftest})}{=}\bra{\psi}A^{(k)}_{a|x}\ket{\psi},
\label{null_cond}
\end{align}
where we have invoked that $\bar{A}$ is projective, i.e., $(\bar{A}_{a|x}^{(k)})^2=\bar{A}_{a|x}^{(k)}$.

Define $Y:={A}_{a|x}^{(k)}-({A}_{a|x}^{(k)})^2$. This operator is positive semidefinite, since $0\leq {A}_{a|x}^{(k)}\leq \id_k$. Eq. (\ref{null_cond}) implies that $\bra{\psi}Y\ket{\psi}=0$, and so $Y\ket{\psi}=0$, or equivalently,
\begin{equation}
(A^{(k)}_{a|x})^2\ket{\psi}=A^{(k)}_{a|x}\ket{\psi}.
\label{eq:A0A0}
\end{equation}
Let $\bar{a}$ denote the complement of $a$. In view of the equation above, we have that
\begin{equation}
A^{(k)}_{a|x}\ket{\psi}=A^{(k)}_{a|x}(A^{(k)}_{a|x}+A^{(k)}_{\bar{a}|x})\ket{\psi}=A^{(k)}_{a|x}\ket{\psi}+A^{(k)}_{a|x}A^{(k)}_{\bar{a}|x}\ket{\psi},
\end{equation}
from which we infer that
\begin{equation}
A^{(k)}_{a|x}A^{(k)}_{\bar{a}|x}\ket{\psi}=0.
\label{eq:A0A1}
\end{equation}

Let $\bar{A}^{(\not=k)}$ ($A^{(\not=k)}$) denote the corresponding measurement system for parties $1,...,k-1,k+1,...,q$, that is, $\{\bar{A}^{(j)}:j\not=k\}$ ($\{A^{(j)}:j\not=k\}$). By condition (\ref{gen_Hilbert_space}), for any $\vec{i}, x,k$ there exist tensor polynomials $f^0_{\vec{i},x,k}, f^1_{\vec{i},x,k}$ of local degree $1$ such that
\begin{equation}
\ket{\vec{i}}=\left(\bar{A}^{(k)}_{0|x}f^0_{\vec{i},x,k}[\bar{A}^{(\not=k)}]+\bar{A}^{(k)}_{1|x}f^1_{\vec{i},x,k}[\bar{A}^{(\not=k)}]\right)\ket{\bar{\psi}}.
\label{identity_vectis}
\end{equation}
Note that the term between brackets is a degree $1$ tensor polynomial.
Define thus the vector
\begin{equation}
\ket{\psi_{x,k}(\vec{i})}:=\left(A^{(k)}_{0|x}f^0_{\vec{i},x,k}(A^{(\not=k)})+A^{(k)}_{1|x}f^1_{\vec{i},x,k}(A^{(\not=k)})\right)\ket{\psi}.
\label{def_psi_xk}
\end{equation}
It can be seen that $\ket{\psi_{x,k}(\vec{i})}$ is independent of $x,k$: just act with the operator $U:=\bigotimes_{j=1}^qU_j$ on both sides of the definition. By eqs. (\ref{identity_vectis}) and (\ref{eqapp:state_selftest}), we find that
\begin{equation}
U\ket{\psi_{x,k}(\vec{i})}=\ket{\vec{i}}\ket{\mbox{junk}}.
\end{equation}
Since $U$ is invertible, it follows that $\ket{\psi_{x,k}(\vec{i})}=\ket{\psi_{x',k'}(\vec{i})}$. Thus, from now on, we just use denote the vector by $\ket{\psi(\vec{i})}$.

From the relation 
\begin{equation}
U\ket{\psi(\vec{i})}=\ket{\vec{i}}\ket{\mbox{junk}},
\label{bare_basis_rel}
\end{equation}
it follows that the vectors $\{\ket{\psi(\vec{i})}:\vec{i}\}$ are normalized and orthogonal to each other. Moreover, acting with $U$ on the vector
\begin{equation}
\sum_{\vec{i}}\braket{\vec{i}}{\bar{\psi}}\ket{\psi(\vec{i})}-\ket{\psi}
\end{equation}
we obtain, by Eqs.~(\ref{eqapp:state_selftest}), (\ref{bare_basis_rel}) the null vector. Since $U$ is invertible, eq.~(\ref{state_decomp}) follows.

It remains to prove eq. (\ref{st_corresp}). Let $\vec{i},x,k$ be arbitrary. Then,
\begin{align}
&A^{(k)}_{a|x}\ket{\psi(\vec{i})}\nonumber\\
&\overset{(\ref{def_psi_xk})}{=}A^{(k)}_{a|x}\left(A^{(k)}_{0|x}f^0_{\vec{i},x,k}[A^{(\not=k)}]+A^{(k)}_{1|x}f^1_{\vec{i},x,k}[A^{(\not=k)}]\right)\ket{\psi}\nonumber\\
&\overset{(\ref{eq:A0A0}, \ref{eq:A0A1})}{=}\left(\delta_{a,0}A^{(k)}_{0|x}f^0_{\vec{i},x,k}[A^{(\not=k)}]+\delta_{a,1}A^{(k)}_{1|x}f^1_{\vec{i},x,k}[A^{(\not=k)}]\right)\ket{\psi}.
\end{align}
The expression between brackets in the last line is a tensor polynomial of degree $1$, so we can multiply the identity by $U$ and invoke eq. (\ref{eqapp:state_selftest}) to obtain
\begin{align}
&UA^{(k)}_{a|x}\ket{\psi(\vec{i})}\nonumber\\
&=\left(\delta_{a,0}\bar{A}^{(k)}_{0|x}f^0_{\vec{i},x,k}(\bar{A}^{(\not=k)})+\delta_{a,1}\bar{A}^{(k)}_{1|x}f^1_{\vec{i},x,k}(\bar{A}^{(\not=k)})\right)\ket{\bar{\psi}}\ket{\mbox{junk}}\nonumber\\
&=\bar{A}^{(k)}_{a|x}\left(\bar{A}^{(k)}_{0|x}f^0_{\vec{i},x,k}(\bar{A}^{(\not=k)})+\bar{A}^{(k)}_{1|x}f^1_{\vec{i},x,k}(\bar{A}^{(\not=k)})\right)\ket{\bar{\psi}}\ket{\mbox{junk}}\nonumber\\
&=\bar{A}^{(k)}_{a|x}\ket{\vec{i}}\ket{\mbox{junk}}.
\end{align}
This last relation can also be written as
\begin{align}
&UA^{(k)}_{a|x}\ket{\psi(\vec{i})}\nonumber\\
&=\sum_{j_k}\bra{j_k}\bar{A}^{(k)}_{a|x}\ket{i_k}\ket{\vec{i}^-,j_k,\vec{i}^+}\ket{\mbox{junk}},
\end{align}
with $\vec{i}^-:=(i_1,...,i_{k-1})$, $\vec{i}^+:=(i_{k+1},...,i_q)$. Multiplying by $U^\dagger$, we have that
\begin{align}
&A^{(k)}_{a|x}\ket{\psi(\vec{i})}\nonumber\\
&=\sum_{j_k}\bra{j_k}\bar{A}^{(k)}_{a|x}\ket{i_k}\ket{\psi(\vec{i}^-,j,\vec{i}^+)}.
\end{align}

Now, let $x_1,...,x_q$, $a_1,...,a_q$ be arbitrary. The last relation implies, by induction, that
\begin{equation}
\bigotimes_{k=1}^qA_{a_k|x_k}\ket{\psi(i_1,...,i_q)}=\sum_{\vec{j}}\prod_{k=1}^q\bra{j_k}\bar{A}^{(k)}_{a_k|x_k}\ket{i_k}\ket{\psi(\vec{j})}.
\end{equation}
Applying $U$ and taking into account equation (\ref{bare_basis_rel}), we have that
\begin{equation}
U\bigotimes_{k=1}^qA_{a_k|x_k}\ket{\psi(i_1,...,i_q)}=\sum_{\vec{j}}\prod_{k=1}^q\bra{j_k}\bar{A}^{(k)}_{a_k|x_k}\ket{i_k}\ket{\vec{j}}\ket{\mbox{junk}}=\bigotimes_{k=1}^q\bar{A}_{a_k|x_k}\ket{\vec{i}}\ket{\mbox{junk}}.
\end{equation}
Hence, eq. (\ref{st_corresp}) also holds and the lemma is proven.

\end{proof}

\begin{proof}[Proof of Proposition \ref{prop:dicho_FST}]
Let $\ket{\psi}$, $\{K^{(k)}_{x}\}$ (we switch to the correlator basis) saturate the inequality ${\sf B}(P)\leq 1$. Since all extreme dichotomic POVMs are projective \cite[Proposition 2.41]{Heinosaari_Ziman_2011}, each operator $K^{(k)}_{x}$ can be expressed as a convex combination of ``projective'' dichotomic operators $\{K^{(k)}_{x,j}\}_j$. That is, for some probability distribution $p(j|x,k)$ with full support, it holds that
\begin{equation}
K^{(k)}_{x}=\sum_jp(j|x,k)K^{(k)}_{x,j}, 
\end{equation}
where $(K^{(k)}_{x,j})^2=1$, for all $j$. Let $J$ be a function assigning an index $j$ to every pair $(x,k)$, and use $K(J)$ to denote the set of projective dichotomic operators $\{K^{(k)}_{x,J(k,x)}\}$. Then, by convexity we have that
\begin{equation}
\bra{\psi}{\sf B}[K(J)]\ket{\psi}=1,
\end{equation}
for all functions $J$. Since ${\sf B}[K(J)]\leq \id$, it follows that
\begin{equation}
{\sf B}[K(J)]\ket{\psi}=\ket{\psi}.
\end{equation}
Take one such function $J^0$. Note that all the above implies, for $k=1,...,q$, that
\begin{equation}
\sum_{x}K^{(k)}_{x,j_x}\otimes {\sf D}^{(k)}_{x}[K(J^0)]\ket{\psi}=\ket{\psi},
\end{equation}
holds for all $(j_x)_x$. As each $j_x$ can be independently chosen, the vector
\begin{equation}
K^{(k)}_{x,j}\otimes {\sf D}^{(k)}_{x}[K(J^0)]\ket{\psi}    
\label{vect_indep}
\end{equation}
does not depend on $j$. 

Since ${\sf B}$ is self-testing for projective measurements, we can invoke Lemma~\ref{lem:state_to_compo-wise}, with the state $\ket{\psi}$ and the dichotomic operators $K(J^0)$, concluding that eqs. (\ref{state_decomp}), (\ref{st_corresp}) hold for $K(J^0)$, for some local isometry $U=\bigotimes_jU_j$. By Remark \ref{remark:st_tensor_pol}, for any tensor polynomial $f$, it is satisfied that
\begin{equation}
Uf[K(J^0)]\ket{\psi}=f[\bar{K}]\ket{\bar{\psi}}\ket{\mbox{junk}}.
\end{equation}

By assumption, the operator ${\sf D}^{(k)}_{x}[\bar{K}]$ is invertible. Thus, there exists a one-variable polynomial $g$ such that 
\begin{equation}
g[{\sf D}^{(k)}_{x}[\bar{K}]]{\sf D}^{(k)}_{x}[\bar{K}]=1.    
\end{equation}
Consequently,
\begin{align}
&g[{\sf D}^{(k)}_{x}[K(J^0)]]{\sf D}^{(k)}_{x}[K(J^0)])\ket{\psi}\nonumber\\
&=U^\dagger Ug[{\sf D}^{(k)}_{x}[K(J^0)]]{\sf D}^{(k)}_{x}[K(J^0)]\ket{\psi}\nonumber\\
&=U^\dagger g[{\sf D}^{(k)}_{x}[\bar{K}]]{\sf D}^{(k)}_{x}[\bar{K}]\ket{\bar{\psi}}\ket{\mbox{junk}}\nonumber\\
&=U^\dagger \ket{\bar{\psi}}\ket{\mbox{junk}}\nonumber\\
&=U^\dagger U\ket{\psi}\nonumber\\
&=\ket{\psi}.
\end{align}
Now, multiply eq. (\ref{vect_indep}) by $g[{\sf D}^{(k)}_{x}[K(J^0)]]$ (note that this operator does not depend on $j$ and commutes with $K^{(k)}_{x,j}$). The result, by the equation above, is that the vector
\begin{equation}
K^{(k)}_{x,j}\ket{\psi}    
\end{equation}
does not depend on $j$ either.
Thus, by convexity, 
\begin{equation}
K^{(k)}_{x}\ket{\psi}=K^{(k)}_{x,j}\ket{\psi},
\end{equation}
for all $j$, or, equivalently,
\begin{equation}
A^{(k)}_{a|x}\ket{\psi}=A^{(k)}_{a|x,j}\ket{\psi},
\end{equation}
for all $a,x,k,j$.

It follows that
\begin{align}
&U\bigotimes_k A^{(k)}_{a_k|x_k}\ket{\psi}\nonumber\\
&=U\bigotimes_k A^{(k)}_{a_k|x_k,J^0(x_k,k)}\ket{\psi}\nonumber\\
&=\bigotimes_k \bar{A}^{(k)}_{a_k|x_k}\ket{\bar{\psi}}\ket{\mbox{junk}}.
\end{align}
Now, we invoke the Lemma~\ref{lem:state_to_compo-wise} again, concluding that relations (\ref{state_decomp}), (\ref{st_corresp}) hold for $A$. Property (\ref{reconst_prop}) is implied by eq. (\ref{gen_Hilbert_space}), so we conclude that ${\sf B}$ fully self tests $\bar{A},\ket{\bar{\psi}}$.

\end{proof}

\section{Proof of Lemma~\ref{lemma_st}}
\label{app:lem_prof}
In this appendix, we provide the formal proof of Lemma~\ref{lemma_st} of the main text.

\begin{proof}
For ease of notation, we here respectively denote $A(\CC_1), \bar{A}(\CC_1)$ ($A(\CC_2)$, $\bar{A}(\CC_2)$) by $A, \bar{A}$ ($B,\bar{B}$). Similarly, $\mu$ and ${\sf C}$ will denote $\mu(\CC_1)$ and ${\sf C}(\CC_1)$. We also define the states 
\begin{align}
&\ket{\bar{\phi}}:=\ket{\bar{\psi}(\CC_1)},\nonumber\\
&\ket{\check{\phi}}:=V(\CC_1)\ket{\bar{\psi}(\CC_1)},
\label{def_phises}
\end{align}
Finally, the symbol $B^0$ ($\bar{B}^0$) will represent the measurement system variables (operators) $\{B^{(j)}:j\not= k\}$ ($\{\bar{B}^{(j)}:j\not= k\}$).

Now, to the proof. First of all, given the state $\ket{\bar{\psi}(\DD)}$ defined by eq. (\ref{final_state}) and the operators 
\begin{equation}
\bar{D}=\bar{B}^{(1)},...,\bar{B}^{(k-1)},\bar{A}^{(1)},...,\bar{A}^{(p)},\bar{B}^{(k+1)},...,\bar{B}^{(q)},    
\end{equation}
it is easy to see that property (\ref{reconst_prop}) holds. Indeed, by the fully self-testing property of $\CC_2$, we have that, for every $i_1,...,i_{k-1},i_{k+1},...i_q$, there exists a tensor polynomial $g[B]$ such that
\begin{equation}
g[\bar{B}]\ket{\bar{\psi}(\CC_2)}=\ket{i_1,...,i_{k-1}}\ket{\check{\phi}}\ket{i_{k+1},...,i_{q}}.
\end{equation}
This implies that
\begin{align}
&g[\bar{B}^0,{\sf C}[\bar{A}]]\ket{\bar{\psi}(\DD)}\nonumber\\
&=V(\CC_1)^\dagger g[\bar{B}^0,\bar{B}^{(k)}]\ket{\bar{\psi}(\CC_2)}\nonumber\\
&=\ket{i_1,...,i_{k-1}}\ket{\bar{\psi}(\CC_1)}\ket{i_{k+1},...,i_q}\;,
\end{align}
where we used the fact that the co-isometry $V(\CC_1)$ is an isometry when restricted to $I(\CC_1)$, i.e. $V^\dagger(\CC_1)V(\CC_1)|_{I(\CC_1)}=\id|_{I(\CC_1)}$.
By the fully self-testing property of $\CC_1$, we thus have that, for all $l_1,...,l_p$, there exist $h[A]$ such that
\begin{align}
&h[\bar{A}]g[\bar{B}^0,{\sf C}[\bar{A}]]\ket{\bar{\psi}(\DD)}\nonumber\\
&=\ket{i_1,...,i_{k-1}}\ket{l_1,...,l_p}\ket{i_{k+1},...,i_q}.
\end{align}
Property (\ref{reconst_prop}) thus holds for the connector complex $\DD$.

For $\DD$, property (\ref{cyclicity}) is equivalent to the statement that $\{f[\bar{M}(\DD)]\ket{\alpha}\}$ generates $H(\DD)$, with $\ket{\bar{\psi}(\DD)}=V(\DD)^\dagger\ket{\alpha}$. Since $\bar{M}(\DD)=\bar{M}(\CC_2)$ and $\ket{\bar{\psi}(\CC_2)}=V(\CC_2)^\dagger\ket{\alpha}$, the statement follows from the assumption that $\CC_2$ is fully self-testing.

We still need to prove that conditions (\ref{state_decomp}) and (\ref{st_corresp}) hold true. Consider, on the  $p+q-1$-partite system of measurement operators $D=((B^{(j)})_{j<k}, A, (B^{(j)})_{j>k})$, the normalized Bell functional ${\sf D}=\mu(\CC_2)({\sf C}(\DD))$. Since $\DD$ is the result of congruently contracting $\CC_2$ and $\CC_1$, the maximum quantum value of ${\sf D}$ is $1$. Now, let there exist a state $\ket{\psi}$ and a $p+q-1$-partite system of measurement operators $D=((B^{(j)})_{j<k}, A, (B^{(j)})_{j>k})$ such that
\begin{equation}
\bra{\psi}{\sf D}(D)\ket{\psi}=1.
\end{equation}
We have that ${\sf C}(\DD)[D]={\sf C}(\CC_2)[B]$, with $B=((B^{(j)})_{j<k}, {\sf C}[A], (B^{(j)})_{j>k})$. Since ${\sf C}$ is a connector, $B^{(k)}:={\sf C}[A]$ defines a system of one-partite measurement operators. Thus, the quantum realization $(\psi,B)$ achieves the maximum value $1$ on the functional $\mu(\CC_2)({\sf C}(\CC_2)[B])$ (where we now regard $\ket{\psi}$ as $q$-partite by relabelling the $k$th Hilbert space).

Since $\CC_2$ is completely self-testing, there exist an orthonormal set of states $\{\ket{\psi(i_1,...,i_q)}:\vec{i}\}$ and isometries $U_1,...,U_q$ such that 
\begin{equation}
\ket{\psi}=\sum_{\vec{i}}\braket{\vec{i}}{\bar{\psi}(\CC_2)}\ket{\psi(\vec{i})}
\label{state_B}
\end{equation}
and
\begin{equation}
U_1\otimes...\otimes U_qf[B]\ket{\psi(i_1,...,i_q)}=f[\bar{B}]\ket{i_1,...,i_q}\ket{\mbox{junk}}
\label{st_corresp_b}
\end{equation}
holds for all tensor polynomials $f$. For $j=1,...,\mbox{dim}(H_k(\CC_2))$, define the set of orthonormal vectors:
\begin{equation}
\ket{\psi^{(k)}_j}:=U^\dagger_k\ket{j}_k\ket{\mbox{junk}}.
\label{def_psi_k}
\end{equation}
With this definition, it holds that
\begin{align}
&U_kg[B^{(k)}]\ket{\psi^{(k)}_j}=g[\bar{B}^{(k)}]\ket{j}\ket{\mbox{junk}}.
\label{local_k_ops}
\end{align}
Indeed, by definition the left-hand side equals
\begin{align}
&U_kg[B^{(k)}]U^\dagger_k\ket{j}\ket{\mbox{junk}}\nonumber\\
&=\left(\bra{1}^{\otimes (k-1)}\otimes\id_k\otimes \bra{1}^{\otimes (q-k)}\right)U_kg[B^{(k)})]U_k^\dagger\ket{1}^{\otimes (k-1)}\ket{j}\ket{1}^{\otimes (q-k)}\ket{\mbox{junk}}\nonumber\\
&=\left(\bra{1}^{\otimes (k-1)}\otimes\id_k\otimes \bra{1}^{\otimes (q-k)}\right)U_kg[B^{(k)}]U_k^\dagger U\ket{\psi(1,...,1,j,1,...,1)}\ket{\mbox{junk}}\nonumber\\
&=\left(\bra{1}^{\otimes (k-1)}\otimes\id_k\otimes \bra{1}^{\otimes (q-k)}\right)Ug[B^{(k)}]\ket{\psi(1,...,1,j,1,...,1)}\nonumber\\
&\overset{(\ref{st_corresp_b})}{=}\left(\bra{1}^{\otimes (k-1)}\otimes\id_k\otimes \bra{1}^{\otimes (q-k)}\right)g[\bar{B}^{(k)}]\ket{1}^{\otimes (k-1)}\ket{j}\ket{1}^{\otimes (q-k)}\ket{\mbox{junk}}\nonumber\\
&=g[\bar{B}^{(k)}]\ket{j}\ket{\mbox{junk}}.
\end{align}

Recall the definition of $\ket{\check{\phi}}$ (eq.~(\ref{def_phises})), and let $\ket{\tilde{\phi}}$ be the normalized vector
\begin{equation}
\ket{\tilde{\phi}}:=\sum_j\braket{j}{\check{\phi}}\ket{\psi^{(k)}_j}.
\label{def_tilde_phi}
\end{equation}
Then,
\begin{align}
&1=\bra{\check{\phi}}\sum_{b,y}\mu(b|y)\bar{M}^{(k)}_{b|y}(\CC_1)\ket{\check{\phi}}=\bra{\check{\phi}}\sum_{b,y}\mu(b|y)\bar{B}^{(k)}_{b|y}\ket{\check{\phi}}\nonumber\\
&=\sum_{j,l}\braket{\check{\phi}}{l}\bra{l}\left(\sum_{b,y}\mu(b|y)\bar{B}^{(k)}_{b|y}\right) \ket{j}\braket{j}{\check{\phi}}\nonumber\\
&\overset{(\ref{local_k_ops})}{=}\sum_{j,l}\braket{\check{\phi}}{l}\bra{\psi^{(k)}_l}\left(\sum_{b,y}\mu(b|y)B^{(k)}_{b|y}\right) \ket{\psi^{(k)}_j}\braket{j}{\check{\phi}}\nonumber\\
&=\bra{\tilde{\phi}}\sum_{b,y}\mu(b|y)B^{(k)}_{b|y}\ket{\tilde{\phi}}\nonumber\\
&=\bra{\tilde{\phi}}\sum_{b,y}\mu(b|y){\sf C}_{b|y}[A]\ket{\tilde{\phi}}.
\end{align}
Since the inequality 
\begin{equation}
\langle\sum_{b,y}\mu(b|y){\sf C}_{b|y}[A]\rangle\leq 1
\end{equation}
completely self-tests $\ket{\bar{\phi}}, \bar{A}$, there exist orthonormal states $\{\ket{\phi(l_1,...,l_p)}\}_{\vec{l}}$ and local isometries $W_1,...,W_p$ such that
\begin{equation}
\ket{\tilde{\phi}}=\sum_{l_1,...,l_p}\braket{l_1,...,l_p}{\bar{\phi}}\ket{\phi(l_1,...,l_p)},
\label{identity_phi_tilde}
\end{equation}
and
\begin{equation}
(W_1\otimes...\otimes W_p)f[A]\ket{\phi(l_1,...,l_p)}=f[\bar{A}]\ket{l_1,...,l_p}\ket{\mbox{junk}'},
\label{st_conn}
\end{equation}
for all tensor polynomials $f$.

Define $W:=W_1\otimes...\otimes W_p$ and note that, for any tensor polynomial $g$,
\begin{align}
&g[{\sf C}[A]]\ket{\tilde{\phi}}\nonumber\\
&=W^\dagger \sum_{\vec{l}}\ket{\vec{l}}\bra{\vec{l}}g[{\sf C}[\bar{A}]]\ket{\bar{\phi}}\ket{\mbox{junk}'}\nonumber\\
&=\sum_{\vec{l}}\bra{\vec{l}}g[{\sf C}[\bar{A}]]\ket{\bar{\phi}}\ket{\phi(l_1,...,l_p)}.
\end{align}
As $\ket{\bar{\phi}}\in I(\CC_1)$, it holds that (remember that $\bar{B}^{(k)}=\bar{M}(\CC_1)$)
\begin{align}
\bra{\vec{l}}g[{\sf C}[\bar{A}]]\ket{\bar{\phi}}=\bra{\vec{l}}V(\CC_1)^\dagger g[\bar{B}^{(k)}]\ket{\check{\phi}}.
\end{align}
Thus,
\begin{align}
&g[{\sf C}[A]]\ket{\tilde{\phi}}\nonumber\\
&=\sum_{\vec{l}}\bra{\vec{l}}V(\CC_1)^\dagger g[\bar{B}^{(k)}]\ket{\check{\phi}}\ket{\phi(\vec{l})}.
\label{piece_1}
\end{align}

On the other hand, 
\begin{align}
&g[{\sf C}[A]]\ket{\tilde{\phi}}\nonumber\\
&=g[B^{(k)}]\ket{\tilde{\phi}}\nonumber\\
&\overset{(\ref{local_k_ops})}{=}\sum_{i}\bra{i}g[\bar{B}^{(k)}]\ket{\check{\phi}}\ket{\psi^{(k)}_i}.
\label{piece_2}
\end{align}
Equaling the two ends of eqs. (\ref{piece_1}), (\ref{piece_2}), we arrive at
\begin{align}
&\sum_{\vec{l}}\bra{\vec{l}}V(\CC_1)^\dagger g[\bar{B}^{(k)}]\ket{\check{\phi}}\ket{\phi(\vec{l})}\nonumber\\
&=\sum_{i}\bra{i}g[\bar{B}^{(k)}]\ket{\check{\phi}}\ket{\psi^{(k)}_i}.
\end{align}

Due to eq. (\ref{cyclicity}) and the relation $V(\CC_1)I(\CC_1)=H(\CC_1)=H_k(\CC_2)$, we have that $\{g(\bar{B}^{(k)})\ket{\check{\phi}}:g\}=H(\CC_1)$. Thus, for any vector $\ket{\beta}\in H(\CC_1)$, there exists a tensor polynomial $g_\beta$ such that $g_\beta[\bar{B}^{(k)}]\ket{\check{\phi}}=\ket{\beta}$. Choosing such tensor polynomials for the vectors $\{\ket{r}:r=1,...,\mbox{dim}(H(\CC_1))\}$, we have that, for any $r\in \{1,...,\mbox{dim}(H(\CC_1))\}$, 
\begin{align}
&\sum_{\vec{l}}\bra{\vec{l}}V(\CC_1)^\dagger \ket{r}\ket{\phi(\vec{l})}\nonumber\\
&=\sum_{i}\braket{i}{r}\ket{\psi^{(k)}_i}\nonumber\\
&=\ket{\psi^{(k)}_r}.
\label{def_new_vect}
\end{align}

Now, let us define the vectors:
\begin{align}
\ket{\psi(i_1,...,i_{k-1},\vec{l},i_{k+1},...,i_q)}:=(\bigotimes_{j<k}U_j\otimes \id_k\otimes \bigotimes_{j>k}U_j)^\dagger\ket{i_1,...,i_{k-1}}\ket{\phi(\vec{l})}\ket{i_{k+1},...,i_q}.
\label{def_new_vectors}
\end{align}
Then, by eqs. (\ref{state_B}) and (\ref{def_new_vect}) we have that
\begin{align}
&\ket{\psi}=\sum_{\vec{i}}\braket{\vec{i}}{\bar{\psi}(\CC_2)}\sum_{\vec{l}}\bra{\vec{l}}V(\CC_1)^\dagger \ket{i_k}\ket{\psi(i_1,...,i_{k-1},\vec{l},i_{k+1},...,i_q)}\nonumber\\
&=\sum_{i_1,...,i_{k-1},\vec{l},i_{k+1},...,i_q}\braket{i_1,...,i_{k-1},\vec{l},i_{k+1},...,i_q}{\bar{\psi}(\DD)}\ket{\psi(i_1,...,i_{k-1},\vec{l},i_{k+1},...,i_q)}.
\end{align}
Property (\ref{state_decomp}) thus holds for the inequality ${\sf D}=\mu(\CC_2)({\sf C}(\DD))$.

By eqs. (\ref{reconst_prop}), (\ref{identity_phi_tilde}), (\ref{st_conn}) we have that, for all $\vec{l}$, there exists a tensor polynomial $g_{\vec{l}}$ such that
\begin{equation}
g_{\vec{l}}[A]\ket{\tilde{\phi}}=\ket{\phi(\vec{l})}.
\label{decomp_phi_l}
\end{equation}
Denote by $i^0$ the indices $(i_1,...,i_{k-1},i_{k+1}, ...,i_{q})$. Then, we have that (for simplicity, we omit the tensor products in the following lines):
\begin{align}
&WU^0f[A]h[B^0]\ket{\psi(i^0,\vec{l})}\nonumber\\
&\overset{(\ref{def_new_vectors}),(\ref{decomp_phi_l})}{=}WU^0f[A]g_{\vec{l}}[A]h[B^0](U^0)^\dagger \ket{\tilde{\phi}}\ket{i^0}\nonumber\\
&\overset{(\ref{def_tilde_phi})}{=}Wf[A]g_{\vec{l}}[A]U^0h[B^0](U^0)^\dagger \sum_j\braket{j}{\check{\phi}}\ket{\psi^{(k)}_j}\ket{i^0}\nonumber\\
&\overset{(\ref{def_psi_k})}{=}Wf[A]g_{\vec{l}}[A]U^0h[B^0]U^\dagger \sum_j\braket{j}{\check{\phi}}\ket{j}\ket{\mbox{junk}}\ket{i^0}\nonumber\\
&=Wf[A]g_{\vec{l}}[A]U_k^\dagger U_kU^0h[B^0]U^\dagger \sum_j\braket{j}{\check{\phi}}\ket{j}\ket{\mbox{junk}}\ket{i^0}\nonumber\\
&\overset{(\ref{st_corresp_b})}{=}Wf[A]g_{\vec{l}}[A]U_k^\dagger  \sum_j\braket{j}{\check{\phi}}\ket{j}\ket{\mbox{junk}}h[\bar{B}^0]\ket{i^0}\nonumber\\
&\overset{(\ref{def_psi_k})}{=}Wf[A]g_{\vec{l}}[A]  \sum_j\braket{j}{\check{\phi}}\ket{\psi^{(k)}_j}h[\bar{B}^0]\ket{i^0}\nonumber\\
&\overset{(\ref{def_tilde_phi})}{=}Wf[A]g_{\vec{l}}[A]  \ket{\tilde{\phi}}h[\bar{B}^0]\ket{i^0}\nonumber\\
&\overset{(\ref{decomp_phi_l})}{=}Wf[A]\ket{\phi(\vec{l})}h[\bar{B}^0]\ket{i^0}\nonumber\\
&\overset{(\ref{st_conn})}{=}f[\bar{A}]\ket{\vec{l}}h[\bar{B}^0]\ket{i^0}\ket{\mbox{junk}'}.
\end{align}
\noindent Since the above holds for arbitrary tensor polynomials $f,g$, it also holds for linear combinations thereof, and so for general tensor polynomials of $\bar{D}=(\bar{B}^{(1)},...,\bar{B}^{(k-1)},\bar{A}^{(1)},...,\bar{A}^{(p)},\bar{B}^{(k+1)},...,\bar{B}^{(q)})$. Thus Property (\ref{st_corresp}) also holds for ${\sf D}$, which makes this Bell functional fully self-test $(\ket{\bar{\psi}(\DD)},\bar{A}(\DD))$.

The complex $\DD$ has been proven fully self-testing.
\end{proof}

\section{Connector complexes supplementary information}
\label{sec:other_complexes}

In this Appendix, we provide additional details over the connector complexes introduced in Sec.~\ref{sec:firstcomplexes} of the main text. These consist of couples nontrivial Bell functionals $\sB_{0,1}$ that have the following properties:
\begin{enumerate}
    \item When applied on dichotomic $\pm 1$ quantum operators $K$, the functionals are normalized as $|\sB_{0,1}(K)|\leq 1$;
    \item There exist a $2$-dimensional subspace $\mathbb{S}={\sf Span}\{\ket{\phi_+},\ket{\phi_-}\}$ and dichotomic operators $\bar{K}$ such that both $\sB_{0}[\bar{K}]$ and $\sB_{1}[\bar{K}]$ reach their extremal values $\pm 1$ within $\mathbb{S}$.
\end{enumerate}

Bell functionals that satisfy the two properties above can be used to define tight connector complexes via the scheme described in the main text Sec.~\ref{sec:firstcomplexes}, i.e. namely by Proposition~\ref{prop:pauli_construction}.

Additionally, all the proposed Bell functionals are fully self-testing (Def.~\ref{def:FST}). In order to show this, we will exploit Proposition~\ref{prop:dicho_FST} (proven in App.~\ref{sec:qubit_selftest_lemmas}), which only requires two simple verifiable conditions to be satisfied. We will verify such conditions separately for each class of Bell functionals in the following subsections. This will show that the connectors proposed in the main text Sec.~\ref{sec:firstcomplexes} satisfy the first part~\eqref{eq:fst_bell_exist} of the full self-testing definition. The second part~\eqref{cyclicity} is straightforward to verify in general, as all presented complexes are based on the Pauli construction outlined above and in Prop.~\ref{prop:pauli_construction}, for which $I(\CC)\equiv\mathbb{S}$, with $\ket{\bar{\psi}(\CC)}=\ket{\phi_+}$ being, w.l.o.g., the $+1$ eigenvectors of $\sC_{+1|1}=\frac{1+\sB_1[\bar{K}]}{2}$, while by construction $\bra{\phi_-}\sC_{b|0}\ket{\phi_+}\neq 0$.

\subsection{Tilted CHSH}
\label{secapp:tilted_CHSH}
The Tilted CHSH connector complex $\mathcal{C}_{\rm Til}$ is constructed starting from the Tilted CHSH operator functionals~\cite{acin2012randomness,bamps2015sum}
\begin{align}
    \beta(\theta) {\sf B}_{0,\theta}[K] &= \alpha(\theta) K^{(1)}_1\otimes \id^{(2)}  -K^{(1)}_0\otimes K^{(2)}_0 + K^{(1)}_1 \otimes K^{(2)}_0 + K^{(1)}_0 \otimes K^{(2)}_1 + K^{(1)}_1 \otimes K^{(2)}_{1}\;,\\
    \beta(\theta) {\sf B}_{1,\theta}[K] &= \alpha(\theta) K^{(1)}_0\otimes \id^{(2)} + K^{(1)}_0\otimes K^{(2)}_0 + K^{(1)}_1 \otimes K^{(2)}_0 + K^{(1)}_0 \otimes K^{(2)}_1 - K^{(1)}_1 \otimes K^{(2)}_1\;,
\end{align}
where
\begin{align}  \alpha(\theta)=2\left(1+2\tan(2\theta)^{2}\right)^{-\frac{1}{2}}\;.
\end{align}
The quantum bound $\beta(\theta)$  of the corresponding tilted CHSH inequality is given by~\cite{bamps2015sum} 
\begin{align}
    \beta(\theta)=\sqrt{8+2\alpha^{2}(\theta)}
\end{align}
so  that $-1\leq \sB_{0,1}\leq 1\;$ for any choice of dichotomic operators $K$. 

The saturation of the quantum Bell inequality $\sB_{1,\theta}$ is obtained by measuring the bipartite non-maximally entangled states
\begin{align}
    \ket{\phi_+} =\cos\theta\ket{00}+\sin\theta\ket{11}\;,\quad
    \ket{\phi_-} =\sin\theta\ket{01}-\cos\theta\ket{10}\;,
\end{align}
with the operators
\begin{align}
    \bar{K}^{(1)}_0 &=\sigma_3\;,\; &\bar{K}^{(1)}_1&=\sigma_1\;,\\
    \bar{K}^{(2)}_0 &=\cos\mu(\theta)\, \sigma_3 + \sin\mu(\theta)\, \sigma_1\;,\; &\bar{K}^{(2)}_1&=\cos\mu(\theta)\, \sigma_3 - \sin\mu(\theta)\, \sigma_1\;,
\end{align}
with $\mu(\theta)=\arctan{(\sin{(2\theta)})}$.
With this choice, one has
\begin{align}
    \sB_1[\bar{K}]\ket{\phi_\pm}=\pm\ket{\phi_\pm}\;,\quad
    \sB_0[\bar{K}]\ket{\phi_\pm}=\ket{\phi_\mp}\;,
\end{align}
from which tight connector complexes $\CC_{\rm Tilt}$ can be built through~\eqref{eq:first_tight_connectors}.

\paragraph{The Tilted CHSH is fully self-testing.}
\label{subsec:Tilted_is_FST}
In~\cite{bamps2015sum}, it is proven that the saturation of the Tilted CHSH $\sB_1$ self-tests -- in the standard sense of Def.~\ref{def:standard_self_testing} -- the optimal operators $\bar{K}$ and states $\ket{\phi_{\pm}}$ (clearly, $\sB_0$ enjoys a similar property and can be used to selftest $\frac{\ket{\phi_+}\pm\ket{\phi_-}}{\sqrt{2}}$).

Starting from the standard self-testing property, we can prove that $\sB_1$ \emph{fully} self-tests (Def.~\ref{def:FST}) via Proposition~\ref{prop:dicho_FST}. Namely, one can verify that:

\emph{i)} when decomposing $\sB_1$ as in~\eqref{eq:corform}, the operators ${\sf D}^{(k)}_x[\bar{K}]$ are invertible (for $x\neq\star$, or equivalently $K_x^{(k)}\neq\id^{(k)}$). In the case of the Tilted CHSH above, it is straightforward to notice that, as each ${\sf D}^{(k)}_x[\bar{K}]$ involves linear combinations of non-parallel Pauli measurements, this condition is always satisfied for both $\sB_0$ and $\sB_1$ when $\theta$ is nontrivial $\theta\neq m\pi/2 $, $m\in\mathbb{N}$.


\emph{ii)} For $k=1$, and each $x=0,1$, and, w.l.o.g., $\ket{\psi}\equiv\ket{\phi_+}$,
    \begin{equation}
    {\sf{Span}}\{\bar{A}_{a|x}^{(1)}\otimes \bar{A}_{a'|x'}^{(2)}\ket{\bar{\psi}}:a,a',x'\}=H_1 \otimes H_2.
    \end{equation}
This can be verified explicitly, noticing for example that for all choices of $\{x,x'\}$ and $\theta\neq m\pi/2$, one has $P(a,a'|x,x')\neq 0$ $\forall\{a,a'\}$, meaning that $\bar{A}_{a|x}^{(1)}\otimes \bar{A}_{a'|x'}^{(2)}\ket{\bar{\psi}}\neq 0$, and clearly spanning  $H_1 \otimes H_2$ when varying $\{a,a'\}$. The same argument applies when choosing $k=2$.

Thus, by Proposition~\ref{prop:dicho_FST}, the Tilted CHSH inequalities are fully self-testing.

\subsection{Wooltorton-Brown-Colbeck}
The authors (WBC) of~\cite{wooltorton2023device} define the following family of Bell functionals:
\begin{align}
  {\beta(\theta,\varphi,\omega)}\sB_{1}(\theta,\varphi,\omega)[K] := &\cos(\theta + \varphi)\cos(\theta + \omega)\big(\cos(\omega) K^{(1)}_0 \otimes K^{(2)}_0  - \cos(\varphi) K^{(1)}_0 \otimes K^{(2)}_{1}\big) + \nonumber\\ 
   &\cos(\varphi)\cos(\omega)\big(\cos(\theta + \varphi) K^{(1)}_{1}\otimes K^{(2)}_{1} - \cos(\theta + \omega) K^{(1)}_{1}\otimes K^{(2)}_{0} \big) ,
\end{align}
with $\beta(\theta,\varphi,\omega):=\sin(\theta)\sin(\omega - \varphi)\sin(\theta + \varphi + \omega)$.
If the angles $\theta, \phi, \omega \in \mathbb{R}$ satisfy the condition
\begin{equation}
    \cos(\theta + \phi)\cos(\phi)\cos(\theta + \omega)\cos(\omega) < 0,
    \label{class_cond}
\end{equation}
then the maximum (minimum) quantum value of the Bell functional $\sB_1$ is $1$ ($-1$). Moreover, the maximal values $\pm 1$ of  $\sB_{1}(\theta,\phi,\omega)$ self-test the state and operators
\begin{align}
\ket{\phi_{\pm}} &=\frac{1}{\sqrt{2}}(\ket{00}\pm i\ket{11})\;, & & \nonumber\\
\bar{K}^{(1)}_0 &=\sigma_1\;, & &\bar{K}^{(1)}_1=\cos(\theta)\sigma_1+\sin(\theta)\sigma_2\;,\nonumber\\
\bar{K}^{(2)}_0 &=\cos(\varphi)\sigma_1+\sin(\varphi)\sigma_2\;,& &  \bar{K}^{(2)}_1=\cos(\omega)\sigma_1+\sin(\omega)\sigma_2\;.
\label{ref_W}
\end{align}
Consider now the complementary functional 
\begin{equation}
{\sB}_{0}(\theta,\varphi,\omega)[K^{(1)}_0,K^{(1)}_1,K^{(2)}_0,K^{(2)}_1]:=\sB_{1}(\theta,\varphi,\omega)[K^{(1)}_1,K^{(1)}_0,K^{(2)}_1,K^{(2)}_0].
\end{equation}
This is obviously another normalized quantum Bell inequality. One can verify that not only $\sB_{0}(\theta,\varphi,\omega)[\bar{K}]\ket{\phi_\pm}=\pm \ket{\phi_\pm}$, but also
\begin{align}
    \sB_{0}(\theta,\varphi,\omega)[\bar{K}]\ket{\phi_+}& =-\cos(\theta+\varphi+\omega)\ket{\phi_+}+i\sin(\theta+\varphi+\omega)\ket{\phi_-}\;\\
    \sB_{0}(\theta,\varphi,\omega)[\bar{K}]\ket{\phi_-}& =\cos
    (\theta+\varphi+\omega)\ket{\phi_-}-i\sin(\theta+\varphi+\omega)\ket{\phi_+}\;.
\end{align}
These relations show that one can build, via Proposition~\ref{prop:pauli_construction} a tight connector complex $\mathcal{C}_{\rm WBC}$ starting from the WBC inequalities. More specifically by defining the co-isometry
\begin{align}
    V=\ket{0}\bra{\phi_-}+\ket{1}\bra{\phi_+}
\end{align}
we see that
\begin{align}
    V^\dagger\sB_1[\bar{K}]V=\sigma_3\;, 
    \quad\text{and}\quad
    V^\dagger\sB_0[\bar{K}]V=\cos(\theta+\varphi+\omega+\pi)\sigma_3+\sin(\theta+\varphi+\omega+\pi)\sigma_2\;. 
\end{align}

\subsubsection{The Wooltorton-Brown-Colbeck connector is fully self-testing}
\label{subsec:WBC_selftest}
In order to prove that the WBC inequalities are fully self-testing, we make use once again of Proposition~\ref{prop:dicho_FST}, and similarly to the case of Tilted CHSH~\ref{subsec:Tilted_is_FST}, we simply notice that:

\emph{i)} when decomposing $\sB_1$ as in~\eqref{eq:corform}, the operators ${\sf D}^{(k)}_x[\bar{K}]$ are invertible. In fact, each ${\sf D}^{(k)}_x[\bar{K}]$ involves linear combinations of non-parallel Pauli measurements, unless the choice of angles is trivial, e.g. $\theta=m\pi$ with $m\in\mathbb{N}$ (making $\bar{K}^{(1)}_0=\pm \bar{K}^{(1)}_1$), or $\omega=\varphi+m\pi$ with $m\in\mathbb{N}$ (making $\bar{K}^{(2)}_0=\pm \bar{K}^{(2)}_1$).


\emph{ii)} For $k=1$, and each $x=0,1$, and, w.l.o.g., $\ket{\psi}\equiv\ket{\phi_+}$,
    \begin{equation}
    {\sf{Span}}\{\bar{A}_{a|x}^{(1)}\otimes \bar{A}_{a'|x'}^{(2)}\ket{\bar{\psi}}:a,a',x'\}=H_1 \otimes H_2.
    \end{equation}
This can be verified  again explicitly using a similar argument to the one used for the Tilted CHSH above, i.e. noticing for that for all choices of $\{x,x'\}$ and for nontrivial angles ($\theta\neq m\pi$ and $\omega=\varphi+m\pi$), for each fixed $x$, at least one between $K^{(2)}_0$ and $K^{(2)}_1$ satisfies $K^{(2)}_{x'}\neq \pm K^{(1)}_x$. With such choice one has $P(a,a'|x,x')\neq 0$ $\forall\{a,a'\}$, meaning that $\bar{A}_{a|x}^{(1)}\otimes \bar{A}_{a'|x'}^{(2)}\ket{\bar{\psi}}\neq 0$, and clearly spanning  $H_1 \otimes H_2$ when varying $\{a,a'\}$. The same argument applies when choosing $k=2$.

Thus, via Proposition~\ref{prop:dicho_FST}, the WBC inequalities are fully self-testing.

\subsection{Graph states Bell inequalities  and $q\rightarrow 1$ tight connectors for $q>2$}
\label{app:graph}
\emph{Nota bene:} in this appendix we will use for simplicity of exposition the notation
\begin{align}
    \{\sigma_1,\sigma_2,\sigma_3\}=:\{X,Y,Z\}
\end{align}
denoting the 3 Pauli matrices. Moreover, we will indicate the party/vertex on which these operators act with a simple subscript $X_i$ instead of superscript $X^{(i)}$. Moreover, we will use the symbol $N$ to indicate the number of parties associated to the connector, i.e.
\begin{align}
    q=N \;.
\end{align}
\vspace{0.5cm}

Thus, we now construct Bell functionals that can be used to build $N\rightarrow 1$ tight connectors, inspired by the BASTA Bell inequalities presented in~\cite{baccari2020scalable}.
To start, we remind, for a given graph $\G:=\{\V,\E\}$ -- specified by a set of vertices of cardinality $|\V|=q$ and a set of edges $\E$ -- the definition of the corresponding graph state \cite{hein2006entanglement}.
First, for each vertex $v_i$ of the graph we define the \emph{stabilizer operator}
\begin{align}
    G_i:=X_i\otimes\bigotimes_{j\in n(i)}Z_j \;,
\end{align}
where $n(i)$ indicates the set of neighbours of $i$, i.e. $n(i):\{j:\{i,j\}\in\mathcal{E}\}$. Stabilizers commute among themselves and can be used to provide a basis of the $2^N$ dimensional Hilbert space of the $N$-qubits of the graph. The graph ($\ket{\phi_+}$) and antigraph ($\ket{\phi_-}$) states of $\G$ are, precisely, the only normalized vectors (modulo phases) satisfying:
\begin{align}
    G_i\ket{\phi_\pm}=\pm\ket{\phi_\pm}\; \forall i\;.
\end{align}
An explicit expression for these states can be provided as
\begin{align}
\label{eqapp:phi+_graph}
    \ket{\phi_+} &=\bigotimes_{ij\in\E} CZ_{ij} \ket{+}^{N}\;,\\
\label{eqapp:phi-_graph}
    \ket{\phi_-} &=\bigotimes_{i\in \V} Z_i \bigotimes_{ij\in\E} CZ_{ij} \ket{+}^{N}\;,
\end{align}
where $CZ_{ij}$ ($\equiv {\rm diag}(1,1,1,-1)$ in the computational basis)
is the control-$Z$ operator on qubits $\{i,j\}$. More explicitly in the computational basis one has
\begin{align}
\label{eqapp:phi+_graph_components}
    \ket{\phi_+} &=2^{-N/2}\sum_{\vec{x}} (-1)^{\sum_{ij\in\E}x_i x_j} \ket{\vec{x}}\;,\\
\label{eqapp:phi-_graph_components}
    \ket{\phi_-} &=2^{-N/2}\sum_{\vec{x}} (-1)^{\sum_{ij\in\E}x_i x_j + \sum_{i\in \V}x_i} \ket{\vec{x}}\;.
\end{align}

Finally, we provide an observation that will be useful later.
Define the stabilizer operator $\tilde{G}_i^{(l)}$ as the modified version of $G_i$ under the swapping $X_l\leftrightarrow Z_l$, that is
\begin{align}
\label{eqapp:tilde_stabilizer}
    \tilde{G}_i^{(l)}:=
    \begin{cases}
     Z_i\bigotimes_{j\in n(i)}Z_j & l=i\;,\\
     X_i\otimes X_l \otimes \bigotimes_{l\neq j\in n(i)}Z_j & l\in n(i)\;,\\
     G_i & l\notin \{i\}\cup n(i)\;.
    \end{cases}
\end{align}
The observation is the following: whenever $l$ is a neighbour of both $i$ and $i'$, the anticommutator
\begin{align}
\label{eqapp:obs_anticom}
    \{\tilde{G}_i^{(l)},G_{i'}\}=0
\end{align}
is null.

Consider now the BASTA Bell inequality introduced in~\cite{baccari2020scalable}. Given a vertex $v_i$ with $n_i$ neighbours it reads
\begin{align}
    \sB(\mathcal{G},v_i)=n_i (K_0^{(i)}+K_1^{(i)})\bigotimes_{j\in n(i)} K_1^{(j)} + \sum_{j\in n(i)}  (K_0^{(i)}-K_1^{(i)})\otimes K_0^{(j)}\bigotimes_{i\neq k\in n(j)} K_1^{(k)} 
    + \sum_{j \notin n(i)\cup \{i\}} K_0^{(j)}\bigotimes_{k\in n(j)}K^{(k)}_1 \;,
    \label{eqapp:BASTA_vi}
\end{align}
This Bell functional satisfies~\cite{baccari2020scalable}
\begin{align}
|\sB(\mathcal{G},v_i)|\leq (2\sqrt{2}-1) n_i + N-1\;.
\end{align}
Moreover, the saturation of the upper bound $\sB(\mathcal{G},v_i)=(2\sqrt{2}-1) n_i + N-1$ self-tests (in the standard sense, see Def.~\ref{def:standard_self_testing}) the associated graph-state
\begin{align}
    \ket{\psi}=\ket{\phi_+(\mathcal{G})}
\end{align}
and the operators
\begin{align}
    \bar{K}_0^{(i)} &=\frac{X_i+Z_i}{\sqrt{2}},\;& \bar{K}_1^{(i)} &=\frac{X_i-Z_i}{\sqrt{2}},\;\\
    \bar{K}_0^{(j)} &=X_j,\;& \bar{K}_1^{(j)} &= Z_j\qquad \qquad j\neq i\;.
\end{align}
In fact, it is easy to verify that with such a choice all the terms in~\eqref{eqapp:BASTA_vi} become proportional to corresponding stabilizer operators of $\mathcal{G}$, from which the value $(2\sqrt{2}-1) n_i + N-1$ stems. Consequently, it is easy to verify that the opposite value $-(2\sqrt{2}-1) n_i - N+1$ is obtained with the same measurements acting on $\ket{\phi_-(\mathcal{G})}$. This ensures that the renormalized version presented in the main text~\eqref{eq:BASTA} effectively acts as $\sigma_3\equiv Z$ operator on the subspace ${\sf Span}(\ket{\phi_+(\mathcal{G})},\ket{\phi_-(\mathcal{G})})$.

In order to use these properties as building blocks of a tight connector as from Prop.~\ref{prop:pauli_construction}, we next construct a dual Bell functional $\sB'$ such that $\sB'[\bar{K}]\ket{\phi_{\pm}(\mathcal{G})}\propto\ket{\phi_\mp(\mathcal{G})}$.

We present the construction of such $\sB'$ for the paradimatic examples of a Star-shaped graph and Triangle graphs in the next paragraphs, before proceeding to the case of a generic graph.

(\emph{Note:} the renormalized versions of $\sB$ and $\sB'$ are denoted, respectively, $\sB_1$ and $\sB_0$ in the main text)
\vspace{0.5cm}

\paragraph*{Star graph states.}\ \\
Consider a graph with a central vertex $v_1$ connected to all other $N-1$ vertices. In such case~\eqref{eqapp:BASTA_vi} becomes
\begin{align}
    \sB(\mathcal{G}_\ast,v_1)[K]=(N-1) (K_0^{(1)}+K_1^{(1)})\bigotimes_{j\geq 2} K_1^{(j)} + \sum_{j\geq 2}  (K_0^{(1)}-K_1^{(1)})\otimes K_0^{(j)}\;.
\end{align}
Similarly to the Tsirelson connector case one can verify that
\begin{align}
    \sB'(\mathcal{G}_\ast,v_1)[K]=(N-1) (K_0^{(1)}-K_1^{(1)})\bigotimes_{j\geq 2} K_1^{(j)} - \sum_{j\geq 2}  (K_0^{(1)}+K_1^{(1)})\otimes K_0^{(j)}
\end{align}
satisfies the desired properties.

Indeed, when substituting the optimal operators $\bar{K}$ in $\sB$ one obtains (we omit the labels $(\mathcal{G}_\ast,v_1)$ in the following)
\begin{align}
    \sB[{\bar{K}}]=\sqrt{2}(N-1)G_1+\sqrt{2}\sum_{i\geq 2} G_i
\end{align}
from which clearly
\begin{align}
    \sB[\bar{K}]\ket{\phi_{\pm}}=\pm2\sqrt{2}(N-1)\ket{\phi_\pm}\;.
\end{align}
More over, one can also verify that for the star graph $\mathcal{G}_\ast$ it holds
\begin{align}
    \bigotimes_{i\in V} Z_i \ket{\phi_\pm}=\ket{\phi_\mp}
\end{align}
as well as
\begin{align}
    X_1\otimes X_i \ket{\phi_\pm}=-\ket{\phi_\mp} \quad \forall i\geq 2\;,
\end{align}
directly from the expression
\begin{align}
    \ket{\phi_\pm(\mathcal{G_\ast})}=\frac{\ket{0\pm\pm\pm\dots}\pm\ket{1\mp\mp\mp\dots}}{\sqrt{2}}\;.
\end{align}
This in particular implies that
\begin{align}
\sB'[\bar{K}]= \sqrt{2}(N-1)\bigotimes_{i\in V} Z_i \ket{\phi_\pm}- \sqrt{2}\sum_{i\geq 2}X_1\otimes X_i\\
\Rightarrow    \sB'[\bar{K}]\ket{\phi_{\pm}}=2\sqrt{2}(N-1)\ket{\phi_{\mp}}\;.
\end{align}

\paragraph*{Triangle graph states.}\ \\
A similar machinery can be employed for a triangular graph $\mathcal{G}_\triangle$, by using the fact that
\begin{align}
    \ket{\phi_\pm(\mathcal{G}_\triangle)}=\frac{1}{2\sqrt{2}}\left(\ket{000}\pm\ket{001}\pm\ket{010}\pm\ket{100}-\ket{011}-\ket{110}-\ket{101}\mp\ket{111}\right)
\end{align}
and the stabilizer operators $G_i$ satisfy for such graph
\begin{align}
    X_1\otimes Z_2 \otimes Z_3 \ket{\phi_\pm} &=\pm \ket{\phi_\pm}\;,\\
    Z_1\otimes X_2 \otimes Z_3 \ket{\phi_\pm} &=\pm \ket{\phi_\pm}\;,\\
    Z_1\otimes Z_2 \otimes X_3 \ket{\phi_\pm} &=\pm \ket{\phi_\pm}\;,
\end{align}
while the modified stabilizers $\tilde{G}_1^{(i)}$ (cf.~\eqref{eqapp:tilde_stabilizer})
\begin{align}
    Z_1\otimes Z_2 \otimes Z_3 \ket{\phi_\pm} &=\ket{\phi_\mp}\;,\\
    X_1\otimes X_2 \otimes Z_3 \ket{\phi_\pm} &= -\ket{\phi_\mp}\;,\\
    X_1\otimes Z_2 \otimes X_3 \ket{\phi_\pm} &= -\ket{\phi_\mp}\;.
\end{align}
It follows that for any choice of vertex $v_i$ the Bell functionals
\begin{align}
    \sB(\mathcal{G}_\triangle,v_i)[K]=2 (K_0^{(i)}+K_1^{(i)})\bigotimes_{j\neq i} K_1^{(j)} + \sum_{j\neq i}  (K_0^{(i)}-K_1^{(i)})\otimes K_0^{(j)}\;,
\end{align}
\begin{align}
    \sB'(\mathcal{G}_\triangle,v_i)[K]=2 (K_0^{(i)}-K_1^{(i)})\bigotimes_{j\neq i} K_1^{(j)} - \sum_{j\neq i}  (K_0^{(i)}+K_1^{(i)})\otimes K_0^{(j)},
\end{align}
Satisfy $\sB[\bar{K}]\ket{\phi_\pm}=\pm\ket{\phi_\pm}$ and $\sB'[\bar{K}]\ket{\phi_\pm}=\ket{\phi_\mp}$.

%
%

\paragraph*{General graph states.}\ \\
\label{par:general_graph_B'}
We can now generalize the previous construction to a general graph. We start from the general BASTA inequality~\eqref{eqapp:BASTA_vi} for any graph, recognizing that it satisfies $\sB[\bar{K}]\ket{\phi_\pm}\propto \pm\ket{\phi_\pm}$ as it is constructed in order to consist in a sum stabilizer operators $G_i$ when choosing the optimal measurements $\bar{K}$. As mentioned above, the $N$ stabilizers commute among themselves and can be used to provide a basis of the $2^N$ dimensional space of the $N$-qubits of the graph. In such basis, $\ket{\phi_+}$ might be written as $\ket{+1,+1,\dots,+1}$ as well as $\ket{\phi_-}=\ket{-1,-1,\dots,-1}$. 
Consider now an operator $O$ that anticommutes with the stabilizers, i.e.
\begin{align}
    \{O,G_i\}=0\;\quad\forall i\;.
\end{align}
It clearly follows that
\begin{align}
    G_i O\ket{\phi_\pm}=\mp O\ket{\phi_\pm}\quad \forall i\;.
\end{align}
That is, $O\ket{\phi_+}$ is proportional to $\ket{\phi_-}$ and viceversa. This gives us an intuition on what building blocks might be used to build a Bell functional $\sB'$ satisfying $\sB'[\bar{K}]\ket{\phi_\pm}=\ket{\phi_\mp}$.
One can in fact notice that each of the terms in both examples above $\sB'(\mathcal{G}_\ast,v_1)$ and $\sB'(\mathcal{G}_\triangle,v_i)$ is built such that the terms $K_0^{(i)}\pm K_1^{(i)}$ operating on the vertex $v_i$ defining $\{\sB,\sB'\}$ are swapped between the two functionals, and result in $\sqrt{2}X_i$ and $\sqrt{2}Z_i$ when choosing $\bar{K}$. This makes $\sB'[\bar{K}]$ to be formed by modified stabilizer operators $\tilde{G}_j^{(i)}$ in which the vertex $v_i$ has $X_i\leftrightarrow Z_i$ swapped. These $\tilde{G}_j^{(i)}$ anticommutes with all others $G_k$ because of the chosen $v_i$ always shares an edge with all vertices of the graph in the case of $\mathcal{G}_\ast$ and $\mathcal{G}_\triangle$ (cf. the observation~\eqref{eqapp:obs_anticom}). In more general graphs, there is not, in general, a vertex that neighbors with all the others.
In order to correct this issue, with some ingenuity we build an ansatz $\sB'[K]$ that adds tensor strings to the $\tilde{G}_j^{(i)}$ that form $\sB'[\bar{K}]$, such that each term in it has exactly one anticommuting ``party" with the original stabilizers.

Namely, we arrive at the ansatz:
\begin{align}
    \sB'(\G,v_i)[K] &= n_i (K_0^{(i)}-K_1^{(i)})\bigotimes_{j\neq i} K_1^{(j)} - \sum_{j\in n(i)}  (K_0^{(i)}+K_1^{(i)})\otimes K_0^{(j)}\bigotimes_{k\in {\sf EN}_{i,j}} K_1^{(k)}
     \label{eq:BASTA'}
\end{align}
where the Even Neighbours set ${\sf EN}_{i,j}$ is defined as those vertices that are either neighbours of both $v_i$ and $v_j$, or none of the two. One can indeed verify that all the addends in $\sB'[\bar{K}]$ anticommute with all stabilizers, and therefore swap between $\ket{\phi_+}$ and $\ket{\phi_-}$. 

The proportionality to the swap $\ketbra{\phi_+}{\phi_-}+\ketbra{\phi_-}{\phi_+}$ of each term is however insufficient to characterize the properties of the global Bell ansatz~\eqref{eq:BASTA'}. We thus verify them.
In the following, we choose for simplicity to label $v_i\equiv v_1$, and denote ${\sf EN}_{j}\equiv {\sf EN}_{1,j}$.
First, it should be noticed that as an operator $|\sB'|\leq n_1 2\sqrt{2}$. This can be verified directly by 
\begin{align}
     2\sqrt{2}(\left\langle\sB'\right\rangle+n_1 2\sqrt{2})&=n_1 \left\langle F_1^2\right\rangle + \sum_{i\in n(1)} \left\langle F_i^2\right\rangle\geq 0\;,\\
    F_1&=\sqrt{2} + (K_0^{(1)}-K_1^{(1)})\bigotimes_{i>1} K_1^{(i)} \;,\\
    F_i&=\sqrt{2} - (K_0^{(1)}+K_1^{(1)})\otimes K_0^{(i)}\bigotimes_{j\in {\sf EN}_i} K_1^{(j)}\;.
\end{align}
This proves $\langle\sB_1\rangle\geq -n_1 2\sqrt{2}$; one proves $\langle\sB_1\rangle\leq n_1 2\sqrt{2}$ similarly.

Finally we show that in the $\ket{\phi_\pm}$ subspace these limits are saturated when choosing the correlators $\bar{K}$. In particular, one can prove that
\begin{align}
    \sB'[\bar{K}]\ket{\phi_{\pm}}=n_1 2\sqrt{2}\ket{\phi_\mp}\;.
\end{align}
In fact
\begin{align}
   \sB'[\bar{K}]= n_1\sqrt{2}\; Z_1\bigotimes_{i\geq 2} Z_i -  \sqrt{2} \sum_{i\in n(1)}  X_1\otimes X_{i}\bigotimes_{j\in {\sf EN}_i} Z_{j}
\end{align}
Given that by definition~\eqref{eqapp:phi+_graph}-\eqref{eqapp:phi-_graph} $ Z_1\bigotimes_{i\geq 2} Z_i\ket{\phi_{\pm}}=\ket{\phi_\mp}$,
it only remains to verify that
\begin{align}
    -X_1\otimes X_{i}\bigotimes_{j\in {\sf EN}_i} Z_{j}\ket{\phi_{\pm}}=\ket{\phi_\mp}\;.
\end{align}
We verify this in the computational basis $x_i=0,1$ (we indicate by $\bar{x}_i=x_i\oplus 1$ the logical complement of $x_i$), in particular notice that 
\begin{align}
    & X_{i}\sum_{\vec{x}} (-1)^{\sum_{jk\in\E}x_j x_k} \ket{\vec{x}}\\
    =& X_{i}\sum_{\vec{x}}(-1)^{\sum_{j\in n(i)}x_ix_j+\sum_{jk\in\E,j\neq i,k\neq i}x_jx_k} \ket{\vec{x}}\\
    =& \sum_{\vec{x}}(-1)^{\sum_{j\in n(i)}\bar{x}_ix_j+\sum_{jk\in\E,j\neq i,k\neq i}x_jx_k} \ket{\vec{x}}\\
    =& \sum_{\vec{x}}(-1)^{\sum_{j\in n(i)}(\bar{x}_i-x_i)x_j+\sum_{j\in n(i)}x_ix_j+\sum_{jk\in\E,j\neq i,k\neq i}x_jx_k} \ket{\vec{x}}\\
    =& \sum_{\vec{x}}(-1)^{\sum_{j\in n(i)} x_j+\sum_{jk\in\E}x_jx_k} \ket{\vec{x}}\;.
\end{align}
The same steps can be used to find
\begin{align}
    &X_1 \sum_{\vec{x}}(-1)^{\sum_{j\in n(i)} x_j+\sum_{jk\in\E}x_jx_k} \ket{\vec{x}}\\
    =& (-1)^\delta \sum_{\vec{x}}(-1)^{\sum_{j\in n(i)} x_j+\sum_{j\in n(1)} x_j+\sum_{jk\in\E}x_jx_k} \ket{\vec{x}}
\end{align}
where $\delta=1$ if $v_1$ is a neighbour of $v_i$, and $\delta=0$ otherwise. As we are considering the former case, we can substitute $\delta=1$ to obtain
\begin{align}
    &-X_1\otimes X_{i}\ket{\psi^{+}}\\
    =& -\sum_{\vec{x}}(-1)^{\sum_{j\in n(i)} x_j+\sum_{j\in n(1)} x_j+\sum_{jk\in\E}x_jx_k} \ket{\vec{x}}\\
    =& -\sum_{\vec{x}}(-1)^{\sum_{j\notin {{\sf EN}_i} } x_j+\sum_{jk\in\E}x_jx_k} \ket{\vec{x}}
\end{align}
and finally
\begin{align}
    &-X_1\otimes X_{i}\bigotimes_{j\in {\sf EN}_i} Z_{j}\ket{\phi_+}\\
    =& -\sum_{\vec{x}}(-1)^{\sum_{j\notin {{\sf EN}_i} } x_j+\sum_{j\in {{\sf EN}_i} } x_j+\sum_{jk\in\E}x_jx_k} \ket{\vec{x}}\\
    =& -\sum_{\vec{x}}(-1)^{\sum_{j} x_j+\sum_{jk\in\E}x_jx_k} \ket{\vec{x}}\\
    \equiv & \ket{\phi_-}\; .
\end{align}
One can then notice that $\left(X^1\otimes X^{i}\bigotimes_{j\in {\sf EN}_i} Z^{j}\right)^2=\id$ in order to prove as a consequence the reverse relation 
\begin{align}
    -X_1\otimes X_{i}\bigotimes_{j\in {\sf EN}_i} Z_{j}\ket{\phi_-}=\ket{\phi_+}\;.
\end{align}

This concludes the verification of the desired properties os $\sB$ and $\sB'$ for general graphs $\mathcal{G}$.
\vspace{0.5cm}

\subsubsection{The BASTA Bell inequality is fully self-testing}
Once again we make use of Proposition~\ref{prop:dicho_FST} to prove that the inequality $\sB(\mathcal{G},v_i)$~\eqref{eqapp:BASTA_vi}(\eqref{eq:BASTA} in the main text), fully self-tests the graph state $\ket{\phi_+}$ (or $\ket{\phi_-}$ equivalently).

First, notice that condition 1. in Prop~\ref{prop:dicho_FST} is straightforward as the ${\sf{D}}^{(k)}_x[\bar{K}]$ operators consist, in this case, of strings of $X$, $Z$, and $\id$ operators, and are therefore invertible.

Condition 2. in Proposition~\ref{prop:dicho_FST} requires that for each vertex/party $v_k$, for fixed the measurement via $\bar{A}_{a|x_k}$ the span of vectors in the form $\bar{A}_{a|x}^{(k)}\otimes f[\{\bar{A}^{(j)}:j\not=k\}]\ket{{\phi_+}}$ covers the full Hilbert space.

We thus distinguish two cases:

\paragraph*{Case 1: $v_k=v_i$.}
Namely, if the chosen vertex on which to verify the condition 2. coincides with the vertex $v_i$ defining the BASTA inequality $\sB(\mathcal{G},v_i)$.
As the optimal $v_i$ measurement consists in $\bar{K}^{(i)}_0=\frac{X_i+Z_i}{\sqrt{2}}$ and $\bar{K}^{(i)}_1=\frac{X_i-Z_i}{\sqrt{2}}$, their eigenvalues are of the form (take $\bar{K}^{(i)}_0$ for simplicity) 
\begin{align}
\ket{h_{+1}} &=\cos(\pi/8)\ket{0}+\sin(\pi/8)\ket{1}\;,\\
\ket{h_{-1}} &=-\sin(\pi/8)\ket{0}+\cos(\pi/8)\ket{1} \;. 
\end{align}
For the rest of the vertices, consider now $\bar{K}^{(j\neq i)}_1$, that is, $Z_j$. This means that the tensor polynomials $f(\bar{A}^{(j\neq k)})$ in ~\eqref{gen_Hilbert_space} contain at least the projectors on all strings $\ket{\vec{x'}}_{v_j|j\neq k}$  $(\vec{x'}\in\{0,1\}^{N-1}$) of the computational basis. 
In particular, notice that
\begin{align}
    \bra{h_1}_{v_k}\bra{\vec{x'}}_{v_j|j\neq k} \ket{\phi_+} &\neq 0\quad \forall\vec{x'}\in\{0,1\}^{N-1}\\
    \bra{h_2}_{v_k}\bra{\vec{x'}}_{v_j|j\neq k} \ket{\phi_+} &\neq 0 \quad \forall\vec{x'}\in\{0,1\}^{N-1}\;.
\end{align}
This is due to the fact that $\ket{\phi_+}$~\eqref{eqapp:phi+_graph_components} has uniform amplitude (in absolute value) on all global strings vectors $\ket{\vec{x}}$ ($\vec{x}\in\{0,1\}^N$), together with the fact that $\cos(\pi/8)$ is strictly larger than $\sin(\pi/8)$.

As the eigenvalues of $\bar{K}_0^{(k)}$ and $\bar{K}_1^{(j\neq k)}$ generate the full Hilbert space, and similarly with $\bar{K}_1^{(k)}$, condition 2. of Prop.~\ref{prop:dicho_FST} is verified for $v_i=v_k$.

\paragraph*{Case 2: $v_k\neq v_i$.}
For $v_k\neq v_i$ two situations can happen. In case $x_k=1$, one has $\bar{K}^{(k)}_1=Z_k$, and by considering $\bar{K}^{(j\neq {i,k})}_1\equiv Z_{(j\neq {i,k})}$ together with $\frac{\bar{K}^{(i)}_0-\bar{K}^{(i)}_1}{\sqrt{2}}=Z_i$ the span 
\begin{align}
    {\sf Span} \{\bar{A}^{(k)}_{a_k|x_k=1}\otimes \bar{A}^{(i)}_{a_i|x_i} \bigotimes_{j\neq k,i}\bar{A}^{(j)}_{a_j|x_j=1}: \vec{a},x_i\}
\end{align}
can generate all projectors on the computational strings $\ket{\vec{x}}$, which generate the full Hilbert space and on which $\ket{\phi_+}$ has nonzero overlap.
In case $x_k=0$, one has that $\bar{K}^{(k)}_{x_k=0}=X_k$ and $\bar{A}^{(k)}_{a|x_k=0}$ is either the projector on $\ket{+}$ or on $\ket{-}$. Take the projection on $\ket{+}$. We have
\begin{align}
    \bra{+}_{v_k} \ket{\phi_+}& =2^{-N/2}\bra{+}_{v_k } \sum_{\vec{x}} (-1)^{\sum_{ij\in\E}x_i x_j} \ket{\vec{x}}\\
    &=2^{-(N+1)/2}\sum_{\vec{x'}}(1+(-1)^{\sum_{i\in n(k)}x'_i})(-1)^{\sum_{ij\in\E'}x'_i x'_j}\ket{\vec{x'}}\;,
    \label{eqapp:bra+_phi+_exp}
\end{align}
where $\E'$ is the graph obtained from $\E$ when removing $v_k$ and all edges attached to it. Such graph defines a new $N-1$-partite graph state $\ket{\phi'_+}$, and the above equation can be rewritten as
\begin{align}
    \bra{+}_{v_k} \ket{\phi_+}=\frac{1}{2}(1+\bigotimes_{i\in n(k)}Z_i)\ket{\phi'_+}\;.
    \label{eqapp:bra+_phi+_imp}
\end{align}
Similarly one obtains
\begin{align}
    \bra{-}_{v_k} \ket{\phi_+}=\frac{1}{2}(1-\bigotimes_{i\in n(k)}Z_i)\ket{\phi'_+}\;.
    \label{eqapp:bra-_phi+_imp}
\end{align}
Crucially, $\bra{-}_{v_k} \ket{\phi_+}$ is nonzero unless $n(k)$ is trivial (meaning that the original $\E$ is not fully connected).

It remains us to prove that degree-1 polynomials $f[\{\bar{A}^{(j)}:j\not=k\}]$ applied to $(1+\bigotimes_{i\in n(k)}Z_i)\ket{{\phi'_+}}$ can generate the full $N-1$-qubit Hilbert space ${\mathbb{C}^2}^{\otimes{N-1}}$, (and similarly for $(1-\bigotimes_{i\in n(k)}Z_i)\ket{{\phi'_+}}$). Such polynomials include the projections on all possible tensor-products of eigenvectors of $Z_i$ and $X_i$.
By direct inspection of~\eqref{eqapp:bra+_phi+_exp}-\eqref{eqapp:bra+_phi+_imp}, such vector contains strings vectors $\ket{\vec{x'}}$ satisfying $\sum_{i\in n(k)}x'_i=0 \mod{2}$.
By choosing $Z$-measurements on all the $N-1$ qubits, all such strings can be generated. 
Consider now a string $\vec{x'^*}$ that has instead an odd sum 
\begin{align}
    \sum_{i\in n(k)}x'^*_i=1 \mod{2}
\end{align}
and label $v_0$ one of the neighbouring vertices to $v_k$.
One clearly has
\begin{align}
    \bra{+}_{v_0}\bra{x'^*_1,x'^*_2,\dots}_{v_i\neq v_0}\;\ket{\vec{x'^{*}}}=\frac{1}{\sqrt{2}}\;,
\end{align}
given that $x'^*_1$ is either 0 or 1. Importantly, while
\begin{align}
    \bra{\vec{x'^{*}}}(1+\bigotimes_{i\in n(k)}Z_i)\ket{{\phi'_+}}=0\;,
\end{align}
when flipping the value of $x_0$
\begin{align}
    |\bra{\vec{x'^{*}}}X_0(1+\bigotimes_{i\in n(k)}Z_i)\ket{{\phi'_+}}|=\frac{2}{2^{(N-1)/2}}\neq 0\;.
\end{align}
It follows that
\begin{align}
    \bra{+}_{v_0}\bra{x'^*_1,x'^*_2,\dots}_{v_i\neq v_0}\;(1+\bigotimes_{i\in n(k)}Z_i)\ket{{\phi'_+}}\neq 0
\end{align}
is nonzero, and the corresponding projector applied to $(1+\bigotimes_{i\in n(k)}Z_i)\ket{{\phi'_+}}$ can be used to generate vectors proportional to
\begin{align}
    \ket{+}_{v_0}\ket{x'^*_1,x'^*_2,\dots}_{v_i\neq v_0}\;.
\end{align}
In linear combination with $X_0\ket{\vec{x'^*}}$, which can be generated by hypothesis using products of $Z$ measurements, we can therefore also generate the string vector $\ket{x'^*}$.

This concludes the argument:  using the projections on the $X_0$ and $Z_{i}$ eigenvectors, we can generate all strings vectors $\ket{\vec{x'}}$ from $(1+\bigotimes_{i\in n(k)}Z_i)\ket{{\phi'_+}}$; a symmetric reasoning applies to $(1-\bigotimes_{i\in n(k)}Z_i)\ket{{\phi'_+}}$,
and thus the span   
\begin{align}
    {\sf Span} \{\bar{A}^{(k)}_{a_k|x_k=0}\otimes \bar{A}^{(0)}_{a_0|x_0} \bigotimes_{j\neq k,0}\bar{A}^{(j)}_{a_j|x_j=1}: \vec{a},x_0\}
\end{align}
can generate the full $N$-qubit Hilbert space.

\end{appendix}

\end{document}